\tikzstyle{basicArrow}=[draw=black,thick,arrows={-Triangle}]
\tikzstyle{basicArrowRev}=[draw=black,thick,arrows={Triangle-}]
\tikzstyle{basicArrowBoth}=[draw=black,thick,arrows={Triangle-Triangle}]
\definecolor{cMain}{RGB}{83,188,166}
\definecolor{cA}{RGB}{83,188,166}
\definecolor{cB}{RGB}{194,252,115}
\tikzstyle{none}=[]
\tikzstyle{vector}=[basicArrow]
\tikzstyle{vectorShadeA}=[vector,draw=black,thin,arrows={-Triangle[width=3pt, length=3pt]}]
\tikzstyle{vectorShadeB}=[vector,draw=black,thin,arrows={-Triangle[width=3pt, length=3pt]}]
\tikzstyle{vectorLabel}=[auto,midway,inner sep=0.04cm]
\tikzstyle{point}=[circle,fill=black,inner sep=0.3ex]
\tikzstyle{smallPoint}=[circle,fill=black,inner sep=0.15ex]
\tikzstyle{runPoint}=[circle,fill={rgb,255: red,40; green,40; blue,40},inner sep=0.15ex,fill opacity=1]
\tikzstyle{runLine}=[draw=cMain,draw opacity=1,thick]
\tikzstyle{runOutlineA}=[draw=cA,draw opacity=0.5,line width=4mm]
\tikzstyle{runOutlineB}=[draw=cB,draw opacity=0.5,line width=4mm]
\tikzstyle{runLineBlack}=[draw=black]
\tikzstyle{beltBg}=[fill=cMain,fill opacity=0.3]
\tikzstyle{accentLine}=[draw={rgb,255: red,131; green,217; blue,71}]
\tikzstyle{sequentialConeShade}=[fill=cMain, fill opacity=0.5]
\tikzstyle{sequentialConeShadeA}=[fill=cA, fill opacity=1]
\tikzstyle{sequentialConeShadeB}=[fill=cB, fill opacity=1]
\tikzstyle{sequentialConeAngleA}=[fill=cA!20,draw=black,thick, fill opacity=1]
\tikzstyle{sequentialConeAngleB}=[fill=cB!20,draw=black,thick, fill opacity=1]
\newcommand{\drawBelt}[4]{ 
    % 3 belt parameters vx, vy, W, where (vx, vy) is a direction of the belt
    % last parameter = grid size (used to clip the belts)
    \coordinate (vector) at (#1,#2);
    \coordinate (beltStart) at (0,0);
    \coordinate (beltEnd) at ($ (beltStart)!2*#4!0:(vector) $);
    \coordinate (beltLeftStart) at ($ (beltStart)!#3!90:(beltEnd) $);
    \coordinate (beltRightStart) at ($ (beltStart)!#3!270:(beltEnd) $);
    \coordinate (beltLeftEnd) at ($ (beltEnd)!#3!270:(beltStart) $);
    \coordinate (beltRightEnd) at ($ (beltEnd)!#3!90:(beltStart) $);

    % central line of the belt
    \draw[dotted] (beltStart) -- (beltEnd);

    % outer lines
    % \draw (beltLeftStart) -- (beltLeftEnd); % don't draw them
    % \draw (beltRightStart) -- (beltRightEnd);

    % background
    \begin{pgfonlayer}{background}
        \clip (0,0) rectangle (#4,#4);
        \fill[beltBg] (beltLeftStart) -- (beltLeftEnd) -- (beltRightEnd) -- (beltRightStart) -- cycle;
    \end{pgfonlayer}
}
\newcommand{\drawBeltB}[5]{ 
    % 3 belt parameters vx, vy, W, where (vx, vy) is a direction of the belt
    % last two parameters = grid size (used to clip the belts)
    \coordinate (vector) at (#1,#2);
    \coordinate (beltStart) at (0,0);
    \coordinate (beltEnd) at ($ (beltStart)!2*#4!0:(vector) $);
    \coordinate (beltLeftStart) at ($ (beltStart)!#3!90:(beltEnd) $);
    \coordinate (beltRightStart) at ($ (beltStart)!#3!270:(beltEnd) $);
    \coordinate (beltLeftEnd) at ($ (beltEnd)!#3!270:(beltStart) $);
    \coordinate (beltRightEnd) at ($ (beltEnd)!#3!90:(beltStart) $);

    % central line of the belt
    \draw[dotted] (beltStart) -- (beltEnd);

    % outer lines
    % \draw (beltLeftStart) -- (beltLeftEnd); % don't draw them
    % \draw (beltRightStart) -- (beltRightEnd);

    % background
    \begin{pgfonlayer}{background}
        \clip (0,0) rectangle (#4,#5);
        \fill[beltBg] (beltLeftStart) -- (beltLeftEnd) -- (beltRightEnd) -- (beltRightStart) -- cycle;
    \end{pgfonlayer}
}
\newcommand{\drawGrid}[1]{
    \node (S) at (0, 0) {}; % origin
    \node (X) at (#1, 0) {}; % end of X axis
    \node (Y) at (0, #1) {}; % end of Y axis
    
    \begin{pgfonlayer}{foreground}
        % end of white stripes
        % \fill[fill=white,path fading=west] (#1-1,0) rectangle (#1+0.01,#1+0.01);
        \draw[basicArrowBoth] (X.center) to (S.center) to (Y.center);
    \end{pgfonlayer}
}
\newcommand{\drawGridB}[2]{
    \node (S) at (0, 0) {}; % origin
    \node (X) at (#1, 0) {}; % end of X axis
    \node (Y) at (0, #2) {}; % end of Y axis
    
    \begin{pgfonlayer}{foreground}
        % end of white stripes
        % \fill[fill=white,path fading=west] (#1-1,0) rectangle (#1+0.01,#1+0.01);
        \draw[basicArrowBoth] (X.center) to (S.center) to (Y.center);
    \end{pgfonlayer}
}
\newcommand{\drawFading}[1]{
    \begin{pgfonlayer}{foreground}
        % fading effect
        \fill[fill=white,path fading=south] (0,#1-1) rectangle (#1+0.01,#1+0.01);
        \fill[fill=white,path fading=west] (#1-1,0) rectangle (#1+0.01,#1+0.01);
        % white stripes effect
        \fill[fill=white] (0,#1-0.9) rectangle (#1-0.8,#1-0.8);
        \fill[fill=white] (0,#1-0.6) rectangle (#1-0.5,#1-0.5);
        \fill[fill=white] (0,#1-0.3) rectangle (#1-0.2,#1-0.2);
        
        \fill[fill=white] (#1-0.9,0) rectangle (#1-0.8,#1-0.8);
        \fill[fill=white] (#1-0.6,0) rectangle (#1-0.5,#1-0.5);
        \fill[fill=white] (#1-0.3,0) rectangle (#1-0.2,#1-0.2);
    \end{pgfonlayer}
}
\newcommand{\drawFadingB}[2]{
    \begin{pgfonlayer}{foreground}
        % fading effect
        \fill[fill=white,path fading=south] (0,#2-1) rectangle (#1+0.01,#2+0.01);
        % \fill[fill=white,path fading=west] (#1-1,0) rectangle (#1+0.01,#2+0.01);
        % white stripes effect
        \fill[fill=white] (0,#2-0.9) rectangle (#1-0.8,#2-0.8);
        \fill[fill=white] (0,#2-0.6) rectangle (#1-0.5,#2-0.5);
        \fill[fill=white] (0,#2-0.3) rectangle (#1-0.2,#2-0.2);
        
        % \fill[fill=white] (#1-0.9,0) rectangle (#1-0.8,#2-0.8);
        % \fill[fill=white] (#1-0.6,0) rectangle (#1-0.5,#2-0.5);
        % \fill[fill=white] (#1-0.3,0) rectangle (#1-0.2,#2-0.2);
    \end{pgfonlayer}
}
\newcommand{\ignore}[1]{}
\newcommand{\pspace}{{\sc PSpace}\xspace}
\newcommand{\nl}{{\sc NL}\xspace}
\renewcommand{\a}{\alpha}
\newcommand{\U}{{\cal U}}
\newcommand{\SCC}{{\sc scc}\xspace}
\newcommand{\emptyseq}{\varepsilon}
\newcommand{\belt}[2]{{\cal B}_{#1, #2}}
\newcommand{\size}[1]{|#1|}
\newcommand{\nvass}[1]{#1\text{-VASS}\xspace}
\newcommand{\onevass}{\nvass{1}}
\newcommand{\twovass}{\nvass{2}}
\newcommand{\C}{\mathcal{C}}
\newcommand{\N}{\mathbb{N}}
\newcommand{\Z}{\mathbb{Z}}
\newcommand{\Zpos}{\mathbb{Z}_{> 0}}
\newcommand{\R}{\mathbb{Q}}
\newcommand{\Rnonneg}{\R_{\geq 0}}
\newcommand{\Rpos}{\R_{> 0}}
\newcommand{\set}[1]{\{#1\}}
\newcommand{\setof}[2]{\set{#1 \mid #2}}
\newcommand{\prettyforall}[2]{\forall_{#1} \, #2}
\newcommand{\vass}{VASS\xspace}
\newcommand{\norm}[1]{\lVert#1\rVert}
\newcommand{\src}{\textup{src}}
\newcommand{\trg}{\textup{trg}}
\newcommand{\rev}{\textup{rev}}
\newcommand{\dist}{\textup{dist}}
\newcommand{\clockwise}{\circlearrowright}
\newcommand{\conf}{\textup{Conf}}
\newcommand{\poscone}{\textup{\sc SeqCone}}
\newcommand{\cone}{\textup{\sc Cone}}
\newcommand{\zero}{(0,0)\xspace}%\vec{0}}
\renewcommand{\angle}[4]{\measuredangle #1 #3, #4 #2}
\newcommand{\ccangle}[2]{\angle{[}{]}{#1}{#2}}
\newcommand{\ocangle}[2]{\angle{(}{]}{#1}{#2}}
\newcommand{\coangle}[2]{\angle{[}{)}{#1}{#2}}
\newcommand{\ooangle}[2]{\angle{(}{)}{#1}{#2}}
\newcommand{\eff}{\textup{eff}}
\newtheorem{lemma}{Lemma}
\newtheorem{theorem}{Theorem}
\newtheorem{claim}{Claim}
\date{\vspace{-5ex}}
\title{New Pumping Technique for 2-dimensional VASS}
\author[1]{\href{wczerwin@mimuw.edu.pl}{Wojciech Czerwi\'nski}}
\author[2]{\href{sl@mimuw.edu.pl}{S{\l}awomir Lasota}\footnote{Partially supported by the Polish NCN grant 2017/27/B/ST6/02093}}
\author[3]{\href{loeding@informatik.rwth-aachen.de}{Christof L\"{o}ding}}
\author[4]{\href{r.piorkowski@mimuw.edu.pl}{Rados{\l}aw Pi{\'o}rkowski}\footnote{Partially supported by the Polish NCN grant 2016/21/B/ST6/01505}}
\affil[1,2,4]{University of Warsaw}
\affil[3]{RWTH Aachen}
\begin{document}

\maketitle

\begin{abstract}
We propose a new pumping technique for 2-dimensional vector addition systems with states (2-VASS)
building on natural geometric properties of runs.
We illustrate its applicability by reproving an exponential bound on the length of the shortest accepting run,
and by proving a new pumping lemma for languages of 2-VASS.
The technique is expected to be useful for settling questions concerning languages of 
2-VASS, e.g., for establishing decidability status of the regular separability problem. 
%As our approach builds on natural geometric properties of runs, we envisage generalization to dimensions larger than 2.
\end{abstract}

% !TEX root = main.tex
\section{Introduction}

Vector addition systems~\cite{KM69} are a widely accepted model of concurrency equivalent to Petri nets.
Another equivalent model, called vector addition systems with states (\vass)~\cite{HP79},
is an extension of finite automata with integer counters, on which
the transitions can perform operations of increment or decrement (but no zero tests), with the proviso that
counter values are 0 initially and must stay non-negative along a run. The number of counters
$d$ defines the \emph{dimension} of a \vass. For brevity, we call a \vass of dimension $d$ a $d$-\vass.
Formally, every transition of a \nvass{$d$} $V$ has adjoined a vector $v\in\Z^d$ describing the effect of executing 
this transition on counter values; thus a transition is a triple $(q, v, q') \in Q\times\Z^d \times Q$, 
where $Q$ is the set of control states of $V$.
A finite \emph{path}, i.e., a sequence of transitions of the form
$ \label{eq:path}
\pi \ = \  (q_0, v_1, q_1), (q_1, v_2, q_2), \ldots, (q_{n-1}, v_n, q_n)
$,
induces a run if the counter values stay non-negative, i.e., $v_1 + \ldots + v_i \in \N^d$ for every $i$.

In this paper we concentrate on \emph{pumping}, i.e., techniques exploiting repetitions of states in runs.
Pumping is an ubiquitous phenomenon which typically provides valuable tools in proving short run properties, 
%characterizing set of runs 
or showing language inexpressibility results. 
It seems to be particularly relevant in case of \vass, as even
the core of the seminal decision procedure for the reachability problem in \vass
by Mayr and Kosaraju~\cite{DBLP:conf/stoc/Mayr81, DBLP:conf/stoc/Kosaraju82} is fundamentally 
based on pumping:
briefly speaking, the decision procedure decomposes a \vass into a finite number of \vass, each of them admitting a property that
every path can be pumped up so that it induces a run.
Pumping techniques are used even more explicitly when dealing with subclasses of \vass of bounded dimension. 
The \pspace upper bound for the reachability problem in \twovass~\cite{DBLP:conf/lics/BlondinFGHM15}
relies on various un-pumping transformations of an original run, leading to
a simple run of at most exponential length, in the form of a short path with adjoined short disjoint cycles.
A smart surgery on those simple runs was 
also used to obtain a stronger upper bound (\nl)
in case when the transition effects are 
represented in unary~\cite{DBLP:conf/lics/EnglertLT16}.
Un-pumping is also used in~\cite{DBLP:conf/fossacs/ChistikovCHPW16} to provide a quadratic bound
on the length of the shortest run for \onevass, also known as one counter automata without zero tests, 
and for unrestricted one counter automata. 
See also~\cite{Hofman16,Latteux83} for pumping techniques in one counter automata.

\subparagraph{Contribution.}

\begin{wrapfigure}[26]{r}[0cm]{5cm}  
  \centering
  \vspace{-\baselineskip}
  \begin{tikzpicture}[scale=0.25]
    % \fill[fill=pink] (0,-1) rectangle (16,16);
    % A. grid, belts and vectors
    \newcommand{\gridSize}{16}
    \drawFading{\gridSize}
    \drawGrid{\gridSize}
 %   \draw[step=0.5cm,white,thin,draw opacity=0.7] (0,0) grid (4-0.01,4-0.01);
    \foreach \vx/\vy in {0/4,2/4,3/3,4/1} {
        \begin{scope}
            \clip (0,0) rectangle (\gridSize,\gridSize);
            \drawBelt{\vx}{\vy}{2cm}{\gridSize}
        \end{scope}
 %       \draw[style=vector] (0,0) -- (\vx,\vy);
    }

    % B. dashed rectangle limiting the vectors
%    \draw[black,thick,dashed] (-0.1,4) -- (4,4) -- (4,-0.1);
%    \node at (4,-0.5) {$A$};

    % C. run within belts
    % steps to modify this line:
    % -- export background to SVG
    % -- import background to Figma or other svg editor
    % -- draw a new path
    % -- export result to SVG
    % -- convert SVG to TikZ with svg2tikz tool
    % -- copy the \path data from resulting file and replace the following path:
    \begin{scope}[y=1.97, x=1.97, yshift=0, inner sep=0pt, outer sep=0pt]
        \path[runLineBlack] (1.0000,1.0000) .. controls (6.3333,17.1667) and
        (17.3000,53.5000) .. (18.5000,69.5000) .. controls (20.0000,89.5000) and
        (2.0000,114.5000) .. (2.0000,157.0000) .. controls (2.0000,184.0000) and
        (7.5000,217.5000) .. (15.5000,213.5000) .. controls (32.5000,205.0000) and
        (1.5000,84.0000) .. (27.0000,95.0000) .. controls (52.5000,106.0000) and
        (58.5000,126.0000) .. (59.0000,142.0000) .. controls (59.5000,158.0000) and
        (84.5000,208.5000) .. (104.5000,210.0000) .. controls (124.5000,211.5000) and
        (124.0000,195.5000) .. (112.0000,178.0000) .. controls (100.0000,160.5000) and
        (88.0000,175.5000) .. (80.0000,155.5000) .. controls (72.0000,135.5000) and
        (80.0000,131.5000) .. (74.0000,115.5000) .. controls (68.0000,99.5000) and
        (47.5000,111.0000) .. (38.0000,88.5000) .. controls (28.5000,66.0000) and
        (29.0000,53.0000) .. (52.0000,42.0000) .. controls (68.9556,33.8908) and
        (92.8600,55.5000) .. (90.5000,85.0000) .. controls (88.5000,110.0000) and
        (89.0000,113.5000) .. (108.5000,135.5000) .. controls (130.1210,159.8930) and
        (134.5000,143.1840) .. (168.0000,152.0000) .. controls (193.5000,158.7100) and
        (211.0000,171.0000) .. (201.5000,190.5000) .. controls (190.7610,212.5440) and
        (171.0000,204.0000) .. (160.0000,187.5000) .. controls (149.0000,171.0000) and
        (145.0000,138.0000) .. (126.5000,110.5000) .. controls (108.4020,83.5979) and
        (68.3155,68.5824) .. (59.0000,47.5000) .. controls (49.5000,26.0000) and
        (87.0000,6.5000) .. (117.5000,14.0000) .. controls (151.6140,22.3888) and
        (170.5000,86.9667) .. (201.5000,72.5000) .. controls (216.5000,65.5000) and
        (214.5000,33.1379) .. (183.5000,31.0000) .. controls (152.5000,28.8620) and
        (150.6280,42.0000) .. (117.5000,42.0000) .. controls (83.5000,42.0000) and
        (34.1667,14.1667) .. (1.0000,1.0000) -- cycle;
    \end{scope}

    % D. width of belt
%    \coordinate (width1) at (11,11);
%    \coordinate (width2) at ($(width1)!2cm!90:(0,0)$);
%    \draw[basicArrowBoth] (width1) -- node[auto,midway,swap] {$W$} (width2);
\end{tikzpicture}
  %\vspace{\baselineskip}
  \begin{tikzpicture}[scale=0.25]
    % A. grid
    \newcommand{\gridSize}{16}
    \clip (-0.5,-1.2) rectangle (\gridSize+0.5,\gridSize);
    \drawGrid{\gridSize}

    % B. L shape
%    \draw[black,thick,dashed] (-0.2,4) -- (\gridSize,4);
%    \draw[black,thick,dashed] (4,-0.2) -- (4,\gridSize);
    % \fill[beltBg] (0,0) rectangle (\gridSize,4);
    % \fill[beltBg] (0,0) rectangle (4,\gridSize);
    % \draw[thick] (4,-0.1) -- (4,0.1);
%    \node at (4,-0.6) {$A$};
    % \foreach \i in {0.0, 0.5, ..., 3.5} {
    %     \draw[white,thin,draw opacity=0.7] (0,\i) -- (\gridSize,\i);
    %     \draw[white,thin,draw opacity=0.7] (\i,0) -- (\i,\gridSize);
    % }

    % C. Imported and modified content
    \begin{scope}[yshift=16cm]
    \begin{scope}[xscale=0.05,yscale=-0.05]
        \begin{scope}
            \draw[runOutlineA]    (280,200) .. controls (280.5,171) and (259.5,176) .. (230,140) .. controls (200.5,104) and (217.5,88) .. (170,70) .. controls (122.5,52) and (92.5,102) .. node[auto,pos=0.4,swap,inner sep=0]{\color{cA!70!black}} (60,110) .. controls (27.5,118) and (16.5,188) .. (50,220) .. controls (83.5,252) and (34.5,264) .. (0,320) ;
            \draw[runOutlineA]    (0,320) .. controls (34.5,321) and (16.5,267) .. (70,290) .. controls (123.5,313) and (121.5,260) .. (190,300) .. controls (258.5,340) and (279.5,229) .. node[auto,pos=0.5,swap,inner sep=0]{\color{cA!50!black}} (280,200);
        \end{scope}

%        \begin{scope}[even odd rule]
%            \clip (90,-10) -- (60,0) -- (60,110) -- cycle; 
%            \draw[sequentialConeAngleA] (60,110) circle (50);
%        \end{scope}
%        \begin{scope}[even odd rule]
%            \clip (280,-40) -- (170,-40) -- (170,70) -- cycle;
%            \draw[sequentialConeAngleA] (170,70) circle (50);
%        \end{scope}
%        \begin{scope}[even odd rule]
%            \clip (320,140) -- (320,0) -- (230,-10) -- (230,140) -- cycle;
%            \draw[sequentialConeAngleA] (230,140) circle (50);
%        \end{scope}
%        \begin{scope}[even odd rule]
%            \clip (380,250) -- (370,210) -- (190,300) -- cycle;
%            \draw[sequentialConeAngleB] (190,300) circle (44);
%        \end{scope}

%        \begin{scope}[even odd rule,xshift=50cm,yshift=60cm]
%            \clip (320,140) -- (320,0) -- (230,-10) -- (230,140) -- cycle;
%            \draw[sequentialConeAngleA] (230,140) circle (50);
%        \end{scope}
%        \begin{scope}[even odd rule,xshift=90cm,yshift=-100cm]
%            \clip (380,250) -- (370,210) -- (190,300) -- cycle;
%            \draw[sequentialConeAngleB] (190,300) circle (44);
%        \end{scope}

        % vectors
        \draw[vector]    (190,300) --   (230,280) ;
        \draw[vector]    (50,260)  --   (50,210) ;
        \draw[vector]    (60,110)  --   (70,70) ;
        \draw[vector]    (230,140) --   (220,180) ;
        \draw[vector]    (170,70)  --   (200,40) ;
        \draw[vector]    (70,290)  --   (110,280) ;
        \draw[vector]    (280,200)  --   (320,190) ;

%        \node[point] at (280,200) {};
        
        % previous vectors
 %       \draw[vectorShadeA] (60,110) -- (60,60) ;
 %       \draw[vectorShadeA] (170,70) -- (180,30) ;
 %       \draw[vectorShadeA] (170,70) -- (170,20) ;
 %       \draw[vectorShadeA] (230,140) -- (260,110) ;
 %       \draw[vectorShadeA] (230,140) -- (240,100) ;
 %       \draw[vectorShadeA] (230,140) -- (230,90) ;

 %       \draw[vectorShadeB] (190,300) -- (230,290) ;

        % loop outlines
        
        % bottom run
        \draw [dotted,thick,color=black,draw opacity=1 ]   (230,280) .. controls (191.5,324) and (211.5,346-5) .. (230,340-5) .. controls (248.5,334-5) and (251.5,311) .. (190,300) ;
        \draw [dotted,thick,color=black,draw opacity=1 ]   (110,280) .. controls (116.5,274) and (100.5,269) .. (100,280) .. controls (99.5,291) and (88.5,312) .. (70,290) ;
        \draw [dotted,thick,color=black,draw opacity=1 ]   (320,190) .. controls (326.5,184) and (310.5,179) .. (310,190) .. controls (309.5,201) and (298.5,222) .. (280,200) ;

        % top run
        \draw [dotted,thick,color=black,draw opacity=1,yshift=42.5cm ]   (50,220) .. controls (37.5,200) and (60.5,197) .. (70,180) .. controls (79.5,163) and (78.5,140) .. (50,170) ;
        \draw [dotted,thick,color=black,draw opacity=1 ]   (60,110) .. controls (37.5,95) and (10.5,87) .. (20,70) .. controls (29.5,53) and (36.5,103) .. (70,70) ;
        \draw [dotted,thick,color=black,draw opacity=1 ]   (170,70) .. controls (128.5,58) and (140.5,15) .. (160,30) .. controls (179.5,45) and (196.5,6) .. (200,40) ;
        \draw [dotted,thick,color=black,draw opacity=1 ]   (220,180) .. controls (219.5,154) and (205.5,109) .. (240,130) .. controls (274.5,151) and (216.5,158) .. (230,140) ;
        
        \draw[runLineBlack]    (0,320) node[point]{} .. controls (34.5,321) and (16.5,267) .. (70,290) node[point]{} .. controls (123.5,313) and (121.5,260) .. (190,300) node[point]{} .. controls (258.5,340) and (279.5,229) .. (280,200);
        \draw[runLineBlack]    (280,200) node[point]{} .. controls (280.5,171) and (259.5,176) .. (230,140) node[point]{} .. controls (200.5,104) and (217.5,88) .. (170,70) node[point]{} .. controls (122.5,52) and (92.5,102) .. (60,110) node[point]{} .. controls (27.5,118) and (16.5,188) .. (50,220) node[point,yshift=-0.5cm]{} .. controls (83.5,252) and (34.5,264) .. (0,320) ;

    \end{scope}
    \end{scope}
\end{tikzpicture}
  \caption{Thin (above) and thick run (below).
  Points correspond to counter values, 
  and control states along a run are ignored.}
%  \label{fig:small-thin-run}
\end{wrapfigure}
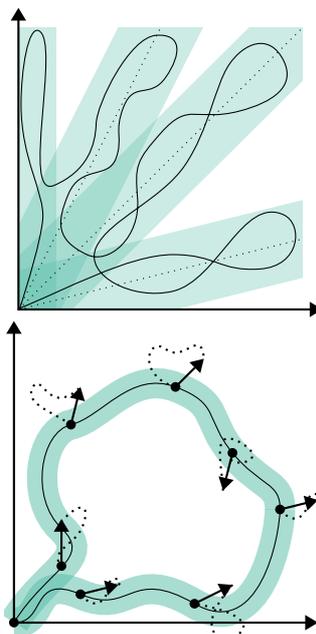

The above-mentioned techniques are mostly oriented towards reachable sets, and henceforth
may ignore certain runs as long as the reachable set is preserved.
In consequence, they are not very helpful in solving decision problems
formulated in terms of the whole language accepted by a \vass, like the regular separability problem
(cf.~the discussion below).
Our primary objective is to design a pumping infrastructure applicable to \emph{every} run of a \twovass.
%\wojtek{Nieco zmienie zdanie: glownym problemem nie jest to, ze oni pracuja z wybranymi biegami.
%Zwykle pracuja z dowolnym, ale problem lezy w tym, ze bardzo go modyfikuja, co zmienia totalnie jezyk.
%My modyfikujemy malo i taka modyfikacje da sie ogarnac jezykiem regularnym. Oni pracuja z wybranymi biegami
%tylko jesli uzywaja simple path schemow po drodze, bo wtedy zachowuje sie zbior osiagalny, ale ignoruje sie wiele biegow.
%Wypada zmienic ten fragment zeby byc w zgodzie z prawda.}
%
Therefore, as our main technical contribution we perform a thorough %profound 
classification of runs, in the form of a dichotomy (see the illustrations on the right): 
for every run $\pi$ of a \twovass, whose initial and final values of both counters are 0,
%(such runs we call \emph{$\zero$-runs}), 
%
\begin{itemize}
\item either $\pi$ is \emph{thin}, by which we mean that the counter values along the run stay
within belts, whose direction and width are all bounded polynomially in the number of states and the largest absolute value of vectors of the \twovass;%$M$ and exponentially in $n$;
\item or $\pi$ is \emph{thick}, by which we mean that a number of cycles is enabled along the run, the effect
vectors of these cycles span (slightly oversimplifying) the whole plane, and furthermore the lengths of cycles and the extremal factors of
$\pi$ are all bounded polynomially in $M$ and exponentially in $n$.
(For the sake of simplicity some details are omitted here; 
the fully precise statement of the dichotomy is \cref{lem:run-2d} in \cref{sec:thin-thick}).
\end{itemize}
The dichotomy immediately entails a pumping lemma for \twovass by using, essentially,
the pumping scheme of \onevass in case of thin runs, and the cycles enabled along a run in case of thick runs 
(cf.~\cref{thm:pumping}). % in \cref{sec:app}).
As a more subtle application of the dichotomy, we derive an alternative proof of the exponential run property
(shown originally in~\cite{DBLP:conf/lics/BlondinFGHM15}), 
which immediately implies \pspace-membership of the reachability problem (cf.~\cref{thm:short-run}).
% in \cref{sec:app}).
%

\subparagraph{Further applications.}

We envisage other possible applications of the dichotomy. One important case can be the regular separability problem:
given two \emph{labeled} \twovass $V_1$ and $V_2$, decide if there is a regular language separating languages of
$V_1$ and $V_2$, i.e., including one of them and disjoint from the other.
The problem is decidable in \pspace for \onevass~\cite{DBLP:conf/lics/CzerwinskiL17}
while the decidability status for \twovass is still open.
A cornerstone of the decision procedure of~\cite{DBLP:conf/lics/CzerwinskiL17} is a well-behaved over-approximation
of a language of a \onevass $V$ by a sequence of regular languages $(V_n)_{n\in\N}$, where the precision of
approximation increases with increasing $n$. In case of \onevass, the language $V_n$ is obtained by abstraction
of $V$ \emph{modulo $n$}; on the other hand, as argued in~\cite{DBLP:conf/lics/CzerwinskiL17},
the very same approach necessarily fails for dimensions larger than 1. 
It seems that our dichotomy classification of runs of a \twovass prepares the ground for the right definition of
abstraction $V_n$ modulo $n$. Indeed, intuitively speaking, as long as the run stays within belts, 1-dimensional
counting modulo $n$ along the direction of a belt is sufficient; 
otherwise, a 2-dimensional abstraction modulo $n$ can be applied as soon as a sufficient number of pumpable 
cycles has been identified along a run. 

As our approach builds on natural geometric properties of runs,
we believe that it can be generalized to dimensions larger than 2. However, one should not expect efficient length bounds
from this generalization itself,
as already in dimension 3 
%there are examples of \nvass{3} runs where 
the prefix of a run preceding the first pumpable cycle has non-elementary length
(the length can be as large as tower of $n$ exponentials in the composition of $n$ copies of the
Hopcroft and Pansiot example~\cite{HP79}).

%bound on the length of the shortest run admitting pumping. 
%
%\wojtek{One should not expect efficient bounds from this technique itself. But maybe together with some other
%technique we can get better bounds. The point is that the first semi-positive cycle can be missing, but some cycles
%are always there (as in Jerome's paper).}

%\subparagraph{Related research.}
%\sla{Wojtek, napiszesz? W szczegolnosci trzeba koniecznie porownac sie z Blondynem}
%\wojtek{Wplotlem to w motywacyjna historie powyzej}

% !TEX root = main.tex

\section{Preliminaries}  %APreliminaries}

\label{sem:prelim}

\subparagraph{2-dimensional vector addition systems with states.}  %APreliminaries}

We use standard symbols $\R, \Z, \N$ for the sets of rationals, integers, and non-negative integers, respectively.
Whenever convenient we use subscripts to specify subsets, e.g.,
% $\Zneg$ for negative integers, 
$\Rnonneg$ for non-negative rationals.
We refer to elements of $\Z^2$  briefly as \emph{vectors}. 
%By $\zero = (0,0)$ we denote the zero vector.
\emph{Non-negative} vectors are elements of $\N^2$, and \emph{positive} vectors are elements
of $\Zpos^2$.
A vector with only non-negative coordinates and at least one positive coordinate is called \emph{semi-positive}; 
it is either positive, or \emph{vertical} of the form $(0, a)$, or \emph{horizontal} of the form $(a, 0)$, 
for $a\in\Zpos$. % if at least one coordinate is positive.

%We write $\upto{n}$ for the subset $\{1, \ldots, n\}$ of non-negative integers.

A $2$-dimensional \emph{vector addition system with states} (\twovass)
%\nvass{$d$}) 
$V$ consists of a finite set of control states $Q$
and a finite set of transitions $T \subseteq Q \times \Z^2 \times Q$.  
We refer to the vector $v$ as the \emph{effect} of a transition $(p, v, q)$.
A \emph{path} in $V$ from control state $p$ to $q$ is a sequence of transitions 
% $\pi\in T^*$ as in~\eqref{eq:path}
$\pi = (q_0, v_1, q_1), (q_1, v_2, q_2), \ldots, (q_{n-1}, v_n, q_n) \in T^*$
where $p = q_0$ and $q = q_n$;
%it is a \emph{cycle} \sla {which word is better: cycle or loop?}
%\wojtek{I vote for "cycle"}
it is called a \emph{cycle} whenever the starting and ending control states coincide ($q_0 = q_n$).
The \emph{effect} of a path is defined as $\eff(\pi) = v_1 + \ldots + v_n \in \Z^2$,
and its \emph{length} is $n$.
A cycle is called non-negative, semi-positive or positive, if its effect is so.

A \emph{configuration} of $V$ is an element of $\conf = Q \times \N^2$. 
A transition $t = (p, v, q)$ is \emph{enabled} in a configuration $c = (p', u)$ if $p = p'$ and $u + v \in \N^2$.
% and if this is the case $t$ induces a step from configuration $(p, u)$ to $(q, u+v)$.
Analogously, a path $\pi$ % (c.f.~\eqref{eq:path}) 
is enabled in a configuration $c = (p', u)$ if $q_0 = p'$  and
$u_i = u + v_1 + \ldots + v_i \in \N^2$ for every $i$.
In such case we say that $\pi$ induces a \emph{run} of the form
\[
\rho = (c_0, t_1, c_1), (c_1, t_2, c_2), \ldots, (c_{n-1}, t_n, c_n) \in (\conf \times T \times \conf)^*
\]
with intermediate configurations $c_i = (q_i, u_i)$, from the \emph{source} configuration 
$\src(\rho) = c_0$ to the \emph{target} one $\trg(\rho) = c_n$.
If the source configuration $c_0$ is clear from the context, we do not distinguish between a path enabled in
$c_0$ and a run with source $c_0$, and simply say that the path \emph{is} the run. 
A \emph{$\zero$-run} is a run whose source and target are \emph{$\zero$-configurations}, i.e., 
a configuration whose vector is $\zero$.

%Every path $\pi$ has the smallest, coordinate-wise, vector $x$ such that the configuration $(q_0, x)$ enables $\pi$; 
%we denote this configuration by $\minsrc(\pi)$.

%A run $\rho$ is called a cycle if $\pathof(\rho)$ is a cycle, the effect $\eff(\rho)$ of a run is understood as the effect of its path,
%and likewise the length of $\rho$.

We will sometimes relax the non-negativeness requirement on some coordinates:
For $j\in\{1,2\}$, we say that a path $\pi$ is \emph{$\{j\}$-enabled} in a configuration $c = (p', u)$ 
if $q_0 = p'$  and $(u + v_1 + \ldots + v_i)[j] \in \N$ for every $i$.
We also say that $\pi$ is \emph{$\emptyset$-enabled} in $c$ if just $q_0 = p'$. 

%A subset $I \subseteq \{1,2\}$ of coordinates of a \twovass $V$ determines the 
%\vass $V_I$ of dimension equal to the cardinality of $I$, obtained as the projection of $V$ onto coordinates in $I$.
%Similarly we define the projection $c_I$ of a configuration of $V$, the projection $\pi_I$ of a path $\pi$, etc.
%A path $\pi$ is \emph{$I$-enabled} in a configuration $c$ (in a \vass $V$) 
%if $\pi_I$ is enabled in the configuration $c_I$ (in the \vass $V_I$).
%In particular, $V_\emptyset$ is just a finite automaton, and a path $\pi$ is $\emptyset$-enabled in every 
%configuration with control state equal to the source control state of $\pi$.

The \emph{reversal} of a \twovass $V = (Q,T)$, denoted $\rev(V)$, is a \twovass with the same control states and with transitions
$\setof{(q,-v, p)}{(p,v,q)\in T}$. 
We sometimes speak of the reversal $\rev(\rho)$ of a run $\rho$ of $V$, implicitly meaning a run in the reversal of $V$.

As the \emph{norm} of $v = (v_1, v_2) \in \R^2$, we take the largest of absolute values of $v_1$ and $v_2$,
$
\norm{v} := \max \{|v_1|, |v_2|\}.
$
By the norm of a configuration $c = (q, v)$ we mean the norm of its vector $v$, and
by the norm $\norm{V}$ of a \twovass $V$ we mean the largest among norms of effects of transitions. 

%The number of states of $V$ we denote by $\nrstates{V}$.

%Notice that whole set $\Z^2 \setminus \zero$ cannot be reasonably ordered by a clockwise quasi-ordering.
%However, for every $v \in \Z^2 \setminus \zero$
%we can define such a quasi-order $\preceq_\clock$ on $\Z^2 \setminus \ell_v$.
%In that quasi-order vectors a bit further clockwise than $v$ are smallest in $\preceq_\clock$, while vectors 
%a bit more counter-clockwise than $v$ are greatest.
%It is a quasi-order, but not an order as where is no way of ordering vector $w$ and $2\cdot w$,
%so such linearly dependent vectors (over $\R_+$) remain unordered. We will often use that quasi-ordering at any time when some half-line or some angle is excluded from the consideration without mentioning explicitly the excluded set.

%A \emph{strongly connected component} (SCC) of a graph is maximal wrt. inclusion set of states $S$ such that for all $p, q \in S$ there is a path from $p$ to $q$ (and vice versa).

\subparagraph*{Sequential cones.}

\begin{wrapfigure}[16]{r}[0cm]{6.5cm}
  \centering
  \vspace{-\baselineskip}
  \begin{tikzpicture}[scale=0.5]
    \draw[draw=cA!50,line width=2mm] (75+180:3.2cm) arc (75+180:360+45+180:3.2cm);
    \draw[draw=cA!50,line width=2mm,arrows={-Triangle[width=10pt, length=10pt]}] (70+180:3.15cm) -- (63+180:3.15cm);
%    \draw[vector,draw=cA!80!black,very thick] (0,0) -- (45+180:4cm);
%    \node[inner sep=0.05cm] at (45+180:4.4cm) {$w$};
    \node[point] at (0,0) {};
    \foreach \angle [count=\i] in {30,60,...,330} {
        \node[inner sep=0.05cm] (n\angle) at (45+180-\angle:{sin(\angle*3)*0.3+2.3}) {$u_{\i}$};
        \draw[vectorShadeA] (0,0) -- (n\angle);
    }
\end{tikzpicture}
  \caption{%Quasi-order $\preceq_\clock$. 
  Above $u_{1} \clockwise u_{2} \clockwise \dots  \clockwise u_{11} \clockwise u_1$.
  Also, $u_4 \clockwise u_9$, but $u_4 \not\clockwise u_{11}$ and $u_{11} \clockwise u_4$.
  Pairs of vectors $u_i$, $u_{i+6}$ are contralinear, for $i = 1, \ldots, 5$.}\label{fig:clock-order}
\end{wrapfigure}

For a vector $v \in \Z^2$, define the half-line induced by $v$ as
$\ell_v := \R_{\geq 0} \cdot v = \{\alpha v \mid \alpha \in \R_{\geq 0}\}$.
We call two vectors $v, w$ \emph{colinear} if $\ell_v = \ell_w$, and \emph{contralinear} if $\ell_v = \ell_{-w}$.
For two vectors $u, v \in \Z^2 \setminus \{\zero\}$, % inducing different half-lines $\ell_u \neq \ell_v$,
define the \emph{angle} $\ccangle{u}{v}\subseteq \R^2$ as  the union 
of all half-lines which lie clock-wise between $\ell_u$ and $\ell_v$, including the two half-lines themselves.
In particular, $\ccangle{v}{v} = \ell_v$.
Analogously we define the sets $\coangle{u}{v}$, $\ocangle{u}{v}$ and $\ooangle{u}{v}$ which exclude one
or both of the half-lines.
We refer to an angle of the form $\ccangle{v}{-v}$ as \emph{half-plane}.
We write $v \clockwise u$ when
$u \in \ooangle{v}{-v}$, i.e., $u$ is oriented clock-wise with respect to $v$ (see Figure~\ref{fig:clock-order} for an illustration).
Note that $\clockwise$ defines a total order on pairwise non-colinear non-negative vectors.

By the \emph{cone} of a finite set of vectors $\{v_1, \ldots, v_k\}\subseteq\Z^2$ we mean the set of all non-negative rational 
linear combinations of these vectors:
\[
\cone(v_1, \ldots, v_k) := \setof{\Sigma_{j=1}^k\, a_j v_j \in \R^2
%\a_1 v_1 + \ldots + \a_k v_k \in \R^2
}{a_1, \ldots, a_k \in \Rnonneg}.
\] 
We call the cone of a single vector $\cone(v) = \ell_v$  \emph{trivial}, and the cone of zero vectors $\cone(\emptyset) = \{(0,0)\}$ \emph{degenerate}.
Two non-zero vectors $v_1$ and $v_2$ can be in four distinct relations:
{\sc (i)} they are colinear, {\sc (ii)} they are contralinear, {\sc (iii)} $v_1 \clockwise v_2$ and hence
$\cone(v_1, v_2) = \ccangle{v_1}{v_2}$, {\sc (iv)}
$v_2 \clockwise v_1$ and hence $\cone(v_1, v_2) = \ccangle{v_2}{v_1}$.
\begin{lemma}\label{lem:cone-dichotomy}
Every cone
%$\cone(v_1, \ldots, v_k)$ 
either equals the whole plane $\R^2$, or is included in some half-plane.
\end{lemma}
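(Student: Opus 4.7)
The plan is to prove the contrapositive of the dichotomy: assuming the cone is not the full plane $\R^2$, I will exhibit a closed half-plane containing it. The natural case split is on whether the generators $v_1, \ldots, v_k$ of the cone themselves already fit in a common closed half-plane.

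If they do, say every $v_i$ lies in a closed half-plane $H = \ccangle{u}{-u}$, then the conclusion is immediate: $H$ is a convex set containing the origin and closed under scaling by non-negative reals, hence also closed under non-negative linear combinations, and so $\cone(v_1, \ldots, v_k) \subseteq H$. The degenerate case $\cone(\emptyset) = \{(0,0)\}$ is of course contained in every half-plane.

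The heart of the argument lies in the other case: assuming the generators sit in no closed half-plane, I will show $\cone(v_1, \ldots, v_k) = \R^2$. After discarding zero generators and identifying colinear ones (neither operation affects the cone), I list the remaining distinct-direction vectors $u_1, \ldots, u_m$ in cyclic clockwise order around the origin and consider the sequence of consecutive angles $\ccangle{u_i}{u_{i+1}}$ (indices mod $m$), which together tile the plane. If some such angle were to span at least a half-plane, then the remaining $u_j$'s would all lie in its complement, which spans at most a half-plane, placing every generator in a single closed half-plane -- contradicting the case assumption. Hence every consecutive angle spans strictly less than a half-plane; in particular $u_i$ and $u_{i+1}$ are linearly independent, and $\cone(u_i, u_{i+1}) = \ccangle{u_i}{u_{i+1}}$ by case (iii)/(iv) of the preceding discussion. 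An arbitrary $w \in \R^2$ therefore lies in some $\ccangle{u_i}{u_{i+1}}$, is a non-negative combination of $u_i$ and $u_{i+1}$, and hence belongs to the cone of the original generators. Combined with $(0,0) \in \cone$, this yields $\cone(v_1, \ldots, v_k) = \R^2$.

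I anticipate no substantive obstacle in turning this into a formal proof; the only care needed concerns edge cases (empty or single-direction generator sets, and contralinear pairs giving an angle of exactly a half-plane), all of which collapse cleanly into the half-plane alternative.
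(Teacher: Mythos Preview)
Your argument is correct and handles the edge cases adequately. It differs in organisation from the paper's proof. After the same normalisation to non-zero, pairwise non-colinear generators, the paper splits on the combinatorial structure of the set: first on whether a contralinear pair $v_i, v_j$ is present (if so, either all remaining vectors lie on one side, giving a half-plane, or not, giving the whole plane), and then, in the absence of such a pair, on whether the relation $\clockwise$ admits a 3-cycle $v_i \clockwise v_j \clockwise v_h \clockwise v_i$. A 3-cycle forces the three angles $\ccangle{v_i}{v_j}$, $\ccangle{v_j}{v_h}$, $\ccangle{v_h}{v_i}$ to cover $\R^2$; otherwise $\clockwise$ is a strict total order whose extremal elements bound the cone inside a half-plane. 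Your route instead splits directly on the dichotomy in the conclusion --- generators contained in a half-plane versus not --- and in the latter case uses a full cyclic ordering and the tiling of the plane by all consecutive angles. The paper's version is a touch shorter because it extracts just three witnesses rather than ordering everything, whereas yours is arguably more transparent since the case split mirrors the statement and avoids a separate treatment of contralinear pairs.
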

\begin{proof}
Assume, w.l.o.g.~that the vectors $v_1, \ldots, v_k$ are non-zero and include no colinear pair.
Suppose there is a contralinear pair $v_i, v_j$  among $v_1, \ldots, v_k$. If all other vectors $v_h$ 
satisfy $v_i \clockwise v_h \clockwise v_j$
then $\cone(v_1, \ldots, v_k)$ is included in the half-plane $\ccangle{v_i}{v_j}$.
Otherwise $\cone(v_1, \ldots, v_k)$ is the whole plane.

Now suppose there is no contralinear pair among $v_1, \ldots, v_k$.
If some three $v_i, v_j, v_h$ of them satisfy
$
v_i \clockwise v_j \clockwise v_h \clockwise v_i
$
then $\cone(v_1, \ldots, v_k)$ includes the three angles 
$\ccangle{v_i}{v_j}$, $\ccangle{v_j}{v_h}$ and $\ccangle{v_h}{v_i}$, the union of which is the whole plane.
Otherwise, the relation $\clockwise$ is transitive and hence defines a (strict) total order on $\{v_1, \ldots, v_k\}$.
The minimal and maximal element $v_i$ and $v_j$ w.r.t.~the order satisfy $v_i \clockwise v_j$, and hence
$\cone(v_1, \ldots, v_k) = \ccangle{v_i}{v_j}$ is included in the half-plane $\ccangle{v_i}{-v_i}$.
\end{proof}

%
%When a cone is included in the non-negative orphant $\Rnonneg^2$ we call it \emph{non-negative cone}.
The \emph{sequential cone} of vectors $v_1, \ldots, v_k \in \Z^2$ imposes additional non-negativeness conditions,
namely for every $i$, the partial sum $a_1 v_1 + \ldots + a_i v_i$ must be non-negative (this is required later, when pumping cycles in a run whose effects are $v_1, \ldots, v_k$ in that order):
\[
\poscone(v_1, \ldots, v_k) := \setof{\Sigma_{j=1}^k\, a_j v_j \in \Rnonneg^2
%\a_1 v_1 + \ldots + \a_k v_k
}{a_1, \ldots, a_k \in \Rnonneg, \,
\prettyforall{i}{\,\Sigma_{j=1}^i\, a_j v_j
%\a_1 v_1 + \ldots + \a_i v_i 
\in \Rnonneg^2}}.
\] 
Note that $v_1$ may be assumed w.l.o.g.~to be semi-positive, but
other vectors $v_i$ are not necessarily non-negative; and that every sequential cone is a subset of the non-negative orthant
$\Rnonneg^2$.
Importantly, contrarily to cones, the order of vectors $v_1, \ldots, v_k$ matters for sequential cones.
In fact, sequential cones are just convenient syntactic sugar for cones
of pairs of non-negative vectors:
\begin{lemma} \label{claim:seqcone}
For all vectors $v_1, \ldots, v_k$, the sequential cone 
$\poscone(v_1, \ldots, v_k)$, if not degenerate, equals $\cone(u, v)$, for two non-negative vectors $u$, $v$, 
and each of them  
either belongs to $\{v_1, \ldots, v_k\}$, or is horizontal, or vertical.
\end{lemma}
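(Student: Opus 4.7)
I plan to prove the claim by induction on $k$, based on a simple recursive decomposition of the sequential cone.

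The base cases are immediate: for $k = 0$ the cone is $\{(0,0)\}$ (degenerate), and for $k = 1$ it equals $\ell_{v_1} = \cone(v_1, v_1)$ when $v_1$ is semi-positive and is degenerate otherwise.

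For the inductive step, the heart of the argument is the recursive identity
\[ \poscone(v_1, \ldots, v_k) = \cone(u', w', v_k) \cap \Rnonneg^2, \]
where $\cone(u', w')$ is the representation of $\poscone(v_1, \ldots, v_{k-1})$ supplied by the induction hypothesis (with $u' = w' = (0,0)$ in the degenerate case). The forward inclusion is immediate: any $u = \sum_j a_j v_j$ realizing a point of $\poscone(v_1, \ldots, v_k)$ splits as $s_{k-1} + a_k v_k$ with $s_{k-1} \in \poscone(v_1, \ldots, v_{k-1}) = \cone(u', w')$, and $u \in \Rnonneg^2$ by the last partial-sum constraint. The backward inclusion is the main technical observation: given $u = \alpha u' + \beta w' + a_k v_k$ with $\alpha, \beta, a_k \geq 0$ and $u \in \Rnonneg^2$, I fix witness decompositions $u' = \sum_{j < k} a'_j v_j$ and $w' = \sum_{j < k} a''_j v_j$ with non-negative partial sums $\sigma'_i, \sigma''_i$ provided by the IH, and then the combined coefficients $b_j := \alpha a'_j + \beta a''_j$ for $j < k$ together with $b_k := a_k$ realize $u$; the intermediate partial sums $\alpha \sigma'_i + \beta \sigma''_i$ are non-negative by non-negativity of $\alpha, \beta$, and the final partial sum equals $u \in \Rnonneg^2$ by assumption.

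Once the identity is established, I apply \cref{lem:cone-dichotomy}: $\cone(u', w', v_k)$ is either all of $\R^2$, in which case $\poscone(v_1, \ldots, v_k) = \Rnonneg^2 = \cone((1,0), (0,1))$, or it is a wedge $\cone(p, q)$ with $p, q \in \{u', w', v_k\}$. In the wedge case, each extremal ray of $\cone(p, q) \cap \Rnonneg^2$ is either $\ell_p$ or $\ell_q$ (when contained in $\Rnonneg^2$) or a piece of the positive $x$- or $y$-axis (where the wedge boundary is truncated by a coordinate axis). Thus every extremal ray of the resulting polyhedral cone $\poscone(v_1, \ldots, v_k) \subseteq \Rnonneg^2$ is generated by a vector in $\{u', w', v_k\} \cup \{(1,0), (0,1)\}$, which by the induction hypothesis lies in $\{v_1, \ldots, v_k\}$ together with the horizontal and vertical axis vectors. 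Since a polyhedral cone in $\Rnonneg^2$ is the conic hull of its (at most two) extremal rays, the claim follows.

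The main obstacle I anticipate is the backward direction of the recursive identity. One might hope that $\poscone(v_1, \ldots, v_k) = \cone(v_1, \ldots, v_k) \cap \Rnonneg^2$ holds directly, but that identity fails in general: for instance with $v_1 = (-1, 1)$ and $v_2 = (1, 0)$, the vector $(0, 1)$ lies in the ordinary cone intersected with $\Rnonneg^2$ but not in the sequential cone. The recursion sidesteps this by appending only one new vector $v_k$ at a time, which lets the partial-sum constraints from the earlier vectors be preserved by convexly combining two already-verified witness sequences.
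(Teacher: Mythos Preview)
Your proof is correct and shares the paper's induction-on-$k$ skeleton, but the inductive step is organized differently. The paper computes $\poscone(v_1,\ldots,v_k)$ from $\cone(u,v)=\poscone(v_1,\ldots,v_{k-1})$ by a six-way case split on the angular position of $v_k$ relative to $u$, $v$, the vertical vector $v_0$ and the horizontal vector $h_0$, writing the answer down explicitly in each case. You instead prove the recursive identity $\poscone(v_1,\ldots,v_k)=\big(\cone(u',w')+\Rnonneg\, v_k\big)\cap\Rnonneg^2$ and then argue uniformly that intersecting a planar polyhedral cone with $\Rnonneg^2$ can only introduce axis directions as new extremal rays. Your route is more conceptual and case-free; the paper's route is more explicit and immediately exhibits which of $u,v,v_k,v_0,h_0$ bound the new cone in each situation (information the paper later uses when tracking how the sequential cone grows along a run). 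One minor wording issue: your appeal to \cref{lem:cone-dichotomy} does not literally yield ``a wedge $\cone(p,q)$ with $p,q\in\{u',w',v_k\}$'' when $\cone(u',w',v_k)$ happens to be a line or a half-plane (e.g.\ $v_k$ contralinear to $u'$), since $\cone$ of two contralinear vectors is a line rather than a half-plane; but your extremal-ray argument still goes through in those edge cases, so this is only a matter of phrasing.
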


%\begin{proof}
\subparagraph{Proof.}
We proceed by induction on $k$. For $k = 1$ we have $\poscone(v_1) = \ell_{v_1} = \cone(v_1, v_1)$.
Let $v_0$ and $h_0$ denote some fixed vertical and horizontal vector, respectively.
For the induction step we assume $\poscone(v_1, \ldots, v_{k-1}) = \cone(u, v)$ for non-negative vectors $u$, $v$;
and compute the value of $\poscone(v_1,  \ldots, v_k)$, separately in each of the following distinct cases
(assume w.l.o.g.~$u \clockwise v$):

\begin{wrapfigure}[13]{r}[0cm]{4cm}
  \centering
  \vspace{-0.5\baselineskip}
  \begin{tikzpicture}[scale=0.45]
    \drawGrid{6}
    \coordinate (S) at (0,0);
    \coordinate (p1) at (90:4cm);
    \coordinate (p2) at (60:4cm);
    \coordinate (p3) at (30:4cm);
    \coordinate (p4) at (0:4cm);
    \coordinate (q2) at (60+180:2.5cm);
    \coordinate (q3) at (30+180:2.5cm);

    \begin{pgfonlayer}{foreground}
    \draw[line width=1.0mm,draw=white,line cap=round] (S) -- (0,3.97);
    \draw[line width=1.0mm,draw=white,line cap=round] (S) -- (p2);
    \draw[line width=1.0mm,draw=white,line cap=round] (S) -- (p3);
    \draw[line width=1.0mm,draw=white,line cap=round] (S) -- (3.97,0);
    \draw[line width=1.0mm,draw=white,line cap=round] (S) -- (q2);
    \draw[line width=1.0mm,draw=white,line cap=round] (S) -- (q3);

    \draw[vector] (S) -- node[pos=0.8, auto]{$v_0$} (p1);
    \draw[vector] (S) -- node[pos=0.8, auto,swap]{$h_0$} (p4);
    \draw[vector] (S) -- node[pos=0.8, auto]{$u$} (p2);
    \draw[vector] (S) -- node[pos=0.8, auto,swap]{$v$} (p3);

    \draw[thick] (S) -- (q2);
    \draw[thick] (S) -- (q3);
    \end{pgfonlayer}

    \begin{pgfonlayer}{background}
        \begin{scope}
            \clip circle [radius=1.15cm];
            \fill[cA] (S) -- (p1) -- (p2); 
            \fill[cA] (S) -- (p3) -- (p4); 
            \fill[cA] (S) -- (q2) -- (q3); 
        \end{scope}
        \begin{scope}
            \clip circle [radius=1.25cm];
            \fill[cB] (S) -- (p2) -- (p3); 
            \fill[cB] (S) -- (p4) |- (q2); 
            \fill[cB] (S) -- (q3) |- (p1); 
        \end{scope}
    \end{pgfonlayer}
    \foreach \angle/\labelll in {75/I,45/II,15/III,300/IV,225/V,135/VI} {
        \node at (\angle:1.8cm) {\labelll};
    }
\end{tikzpicture}
%  \label{fig:clock-order}
\end{wrapfigure}

\vspace{-0.5\baselineskip}
\begin{align*}
% \quad = \quad 
&\poscone(v_1,  \ldots, v_k) = \\
&=\begin{cases}
\cone(v_k, v) 	& \text{if } v_k \in \coangle{v_0}{u} \\
\cone(u, v) 		& \text{if } v_k \in \ccangle{u}{v} \\
\cone(u, v_k) 	& \text{if } v_k \in \ocangle{v}{h_0} \\
\cone(u, h_0) 	& \text{if } v_k \in \ocangle{h_0}{-u} \\
\cone(v_0, h_0) 	& \text{if } v_k \in \ooangle{-u}{-v} \\
\cone(v_0, v) 	& \text{if } v_k \in \coangle{-v}{v_0}.
\end{cases}
\end{align*}
%\begin{enumerate}
%\item[(i)]
%If $v_k \in \coangle{v_0}{u}$ then $X = \poscone(v_k, v)$.
%\item[(ii)]
%If $v_k \in \ccangle{u}{v}$ then $X = \poscone(u, v)$.
%\item[(iii)]
%If $u \in \ocangle{v}{h_0}$ then $X = \poscone(u, v_k)$.
%\item[(iv)]
%If $u \in \ocangle{h_0}{-u}$ then $X = \poscone(u, h_0)$.
%\item[(v)]
%If $u \in \ooangle{-u}{-v}$ then $X = \poscone(v_0, h_0)$.
%\item[(vi)]
%If $v_k \in \coangle{-v}{v_0}$ then $X = \poscone(v_0, v)$.
%\end{enumerate}
%
%\end{proof}
\qed

% !TEX root = main.tex
\newpage
\section{Thin-Thick Dichotomy} \label{sec:thin-thick}

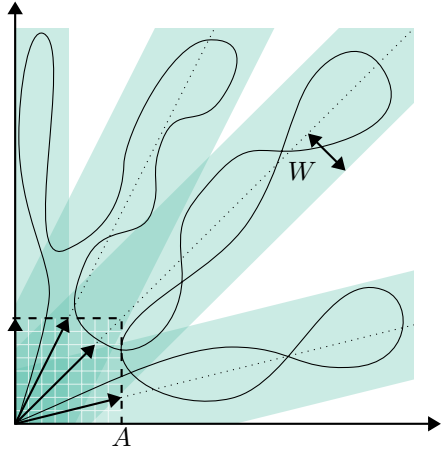
\begin{wrapfigure}[18]{r}[0cm]{7cm}  
  \centering
  \vspace{-0.5\baselineskip}
  \vspace{-\baselineskip}
  \begin{tikzpicture}[scale=0.35]
    % \fill[fill=pink] (0,-1) rectangle (16,16);
    % A. grid, belts and vectors
    \newcommand{\gridSize}{16}
    \drawFading{\gridSize}
    \drawGrid{\gridSize}
    \draw[step=0.5cm,white,thin,draw opacity=0.7] (0,0) grid (4-0.01,4-0.01);
    \foreach \vx/\vy in {0/4,2/4,3/3,4/1} {
        \begin{scope}
            \clip (0,0) rectangle (\gridSize,\gridSize);
            \drawBelt{\vx}{\vy}{2cm}{\gridSize}
        \end{scope}
        \draw[style=vector] (0,0) -- (\vx,\vy);
    }

    % B. dashed rectangle limiting the vectors
    \draw[black,thick,dashed] (-0.1,4) -- (4,4) -- (4,-0.1);
    \node at (4,-0.5) {$A$};

    % C. run within belts
    % steps to modify this line:
    % -- export background to SVG
    % -- import background to Figma or other svg editor
    % -- draw a new path
    % -- export result to SVG
    % -- convert SVG to TikZ with svg2tikz tool
    % -- copy the \path data from resulting file and replace the following path:
    \begin{scope}[y=1.97, x=1.97, yshift=0, inner sep=0pt, outer sep=0pt]
        \path[runLineBlack] (1.0000,1.0000) .. controls (6.3333,17.1667) and
        (17.3000,53.5000) .. (18.5000,69.5000) .. controls (20.0000,89.5000) and
        (2.0000,114.5000) .. (2.0000,157.0000) .. controls (2.0000,184.0000) and
        (7.5000,217.5000) .. (15.5000,213.5000) .. controls (32.5000,205.0000) and
        (1.5000,84.0000) .. (27.0000,95.0000) .. controls (52.5000,106.0000) and
        (58.5000,126.0000) .. (59.0000,142.0000) .. controls (59.5000,158.0000) and
        (84.5000,208.5000) .. (104.5000,210.0000) .. controls (124.5000,211.5000) and
        (124.0000,195.5000) .. (112.0000,178.0000) .. controls (100.0000,160.5000) and
        (88.0000,175.5000) .. (80.0000,155.5000) .. controls (72.0000,135.5000) and
        (80.0000,131.5000) .. (74.0000,115.5000) .. controls (68.0000,99.5000) and
        (47.5000,111.0000) .. (38.0000,88.5000) .. controls (28.5000,66.0000) and
        (29.0000,53.0000) .. (52.0000,42.0000) .. controls (68.9556,33.8908) and
        (92.8600,55.5000) .. (90.5000,85.0000) .. controls (88.5000,110.0000) and
        (89.0000,113.5000) .. (108.5000,135.5000) .. controls (130.1210,159.8930) and
        (134.5000,143.1840) .. (168.0000,152.0000) .. controls (193.5000,158.7100) and
        (211.0000,171.0000) .. (201.5000,190.5000) .. controls (190.7610,212.5440) and
        (171.0000,204.0000) .. (160.0000,187.5000) .. controls (149.0000,171.0000) and
        (145.0000,138.0000) .. (126.5000,110.5000) .. controls (108.4020,83.5979) and
        (68.3155,68.5824) .. (59.0000,47.5000) .. controls (49.5000,26.0000) and
        (87.0000,6.5000) .. (117.5000,14.0000) .. controls (151.6140,22.3888) and
        (170.5000,86.9667) .. (201.5000,72.5000) .. controls (216.5000,65.5000) and
        (214.5000,33.1379) .. (183.5000,31.0000) .. controls (152.5000,28.8620) and
        (150.6280,42.0000) .. (117.5000,42.0000) .. controls (83.5000,42.0000) and
        (34.1667,14.1667) .. (1.0000,1.0000) -- cycle;
    \end{scope}

    % D. width of belt
    \coordinate (width1) at (11,11);
    \coordinate (width2) at ($(width1)!2cm!90:(0,0)$);
    \draw[basicArrowBoth] (width1) -- node[auto,midway,swap] {$W$} (width2);
\end{tikzpicture}
  \caption{Thin run within belts $B_{v,W}$ .} %with $\norm{v}\leq A$.} 
%  Coordinate 1 is horizontal and coordinate 2 is vertical.}
  \label{fig:thin-run}
\end{wrapfigure}

The main result of this section (cf.~\cref{lem:run-2d} below) classifies $\zero$-runs in a \nvass{2}
into \emph{thin} and \emph{thick} ones.
Throughout this section we consider an arbitrary fixed \twovass $V = (Q, T)$.
\newline Let $n = \size{Q}$ and $M = \norm{V}$.

\subparagraph*{Thin runs.}

The \emph{belt} of \emph{direction} $v \in \N^2$ and \emph{width} $W$ is the set 
%of all configurations $(q, v)$ with $v$ belonging to the
\[
\belt{v}{W} = \{u \in \N^2 \mid  \dist(u, \ell_v) \leq W \},
\]
where $\dist(u, \ell_v)$ denotes the Euclidean distance between the point $u$ and the half-line $\ell_v$.
For $A\in\N$, we call $\belt{v}{W}$ an \emph{$A$-belt} if $\norm{v} \leq A$ and $W \leq A$.
We say that a run $\rho$ of $V$ is \emph{$A$-thin} if for every configuration $c$ in $\rho$ there exists an $A$-belt $B$ such that $c \in Q \times B$.
%Figure~\ref{fig:thin-run} is an illustration.

\subparagraph*{Thick runs.}

Let $A\in\N$.
Four cycles $\pi_1, \pi_2, \pi_3, \pi_4 \in T^*$ are \emph{$A$-sequentially enabled} in a run $\rho$ if 
their lengths are at most $A$, and 
the run $\rho$ factors into $\rho = \rho_1 \, \rho_2 \, \rho_3 \, \rho_4 \, \rho_5$ so that
(denote by $v_1, v_2, v_3, v_4$ the effects of $\pi_1, \pi_2, \pi_3, \pi_4$, respectively): %, with $v_1, v_2$ sequentially positive, 
%their norms are at most $A$ 
\begin{itemize}
\item The effect $v_1$ is semi-positive, the cycle $\pi_1$ is enabled in $c_1 := \trg(\rho_1)$, and both coordinates 
are bounded by $A$ along $\rho_1$.
%$\norm{c_1} \leq A$,
\item If $v_1$ is positive then $\pi_2$  is $\emptyset$-enabled in $c_2 := \trg(\rho_2)$.
Otherwise (let $j$ be the coordinate s.t.~$v_1[j] = 0$)
$\pi_2$ is $\{j\}$-enabled in $c_2 := \trg(\rho_2)$, and $j$th coordinate is bounded by $A$ along $\rho_2$.
%$c_2[j]\leq A$,
\item The cycle $\pi_i$ is $\emptyset$-enabled in $c_i := \trg(\rho_i)$, for $i = 3,4$.
\end{itemize}
%Furthermore, the vectors $v_1, v_2, v_3$ are \emph{$A$-sequentially enabled} in $\rho$ when, in addition, 
%$c_1[1] \leq A$, $c_1[2] \leq A$, and if $v_1$ is not positive then $c_2[j] \leq A$.
%
We also say that the four vectors $v_1, v_2, v_3, v_4$ are $A$-sequentially enabled in $\rho$, quantifying 
the cycles existentially.
A $\zero$-run $\tau$ is called \emph{A-thick} if it partitions into $\tau = \rho \, \rho'$ so that
\begin{enumerate}
\item some vectors $v_1, v_2, v_3, v_4$ %of norm at most $A$ 
are $A$-sequentially enabled in $\rho$,
\item some vectors $v'_1, v'_2, v'_3, v'_4$ %of norm at most $A$ 
are $A$-sequentially enabled in $\rev(\rho')$,
\item $\poscone(v_1, v_2, v_3, v_4) \cap \poscone(v'_1, v'_2, v'_3, v'_4)$ is non-trivial.
\end{enumerate}
 
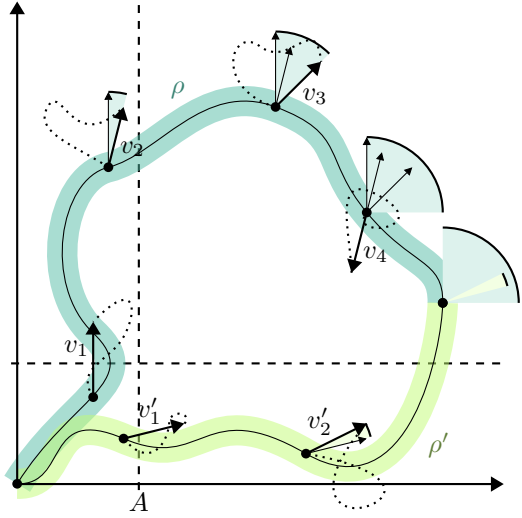
\begin{wrapfigure}[23]{r}[0cm]{7.7cm}  
  \centering
%  \vspace{-0.5\baselineskip}
 % \vspace{-\baselineskip}
  \begin{tikzpicture}[scale=0.40]
    % A. grid
    \newcommand{\gridSize}{16}
    \clip (-0.5,-1.2) rectangle (\gridSize+0.5,\gridSize);
    \drawGrid{\gridSize}

    % B. L shape
    \draw[black,thick,dashed] (-0.2,4) -- (\gridSize,4);
    \draw[black,thick,dashed] (4,-0.2) -- (4,\gridSize);
    % \fill[beltBg] (0,0) rectangle (\gridSize,4);
    % \fill[beltBg] (0,0) rectangle (4,\gridSize);
    % \draw[thick] (4,-0.1) -- (4,0.1);
    \node at (4,-0.6) {$A$};
    % \foreach \i in {0.0, 0.5, ..., 3.5} {
    %     \draw[white,thin,draw opacity=0.7] (0,\i) -- (\gridSize,\i);
    %     \draw[white,thin,draw opacity=0.7] (\i,0) -- (\i,\gridSize);
    % }

    % C. Imported and modified content
    \begin{scope}[yshift=16cm]
    \begin{scope}[xscale=0.05,yscale=-0.05]
        \begin{scope}
            \draw[runOutlineA]    (280,200) .. controls (280.5,171) and (259.5,176) .. (230,140) .. controls (200.5,104) and (217.5,88) .. (170,70) .. controls (122.5,52) and (92.5,102) .. node[auto,pos=0.4,swap,inner sep=0]{\color{cA!70!black}$\rho$} (60,110) .. controls (27.5,118) and (16.5,188) .. (50,220) .. controls (83.5,252) and (34.5,264) .. (0,320) ;
            \draw[runOutlineB]    (0,320) .. controls (34.5,321) and (16.5,267) .. (70,290) .. controls (123.5,313) and (121.5,260) .. (190,300) .. controls (258.5,340) and (279.5,229) .. node[auto,pos=0.5,swap,inner sep=0]{\color{cB!50!black}$\rho'$} (280,200);
        \end{scope}

        \begin{scope}[even odd rule]
            \clip (90,-10) -- (60,0) -- (60,110) -- cycle; 
            \draw[sequentialConeAngleA] (60,110) circle (50);
        \end{scope}
        \begin{scope}[even odd rule]
            \clip (280,-40) -- (170,-40) -- (170,70) -- cycle;
            \draw[sequentialConeAngleA] (170,70) circle (50);
        \end{scope}
        \begin{scope}[even odd rule]
            \clip (320,140) -- (320,0) -- (230,-10) -- (230,140) -- cycle;
            \draw[sequentialConeAngleA] (230,140) circle (50);
        \end{scope}
        \begin{scope}[even odd rule]
            \clip (380,250) -- (370,210) -- (190,300) -- cycle;
            \draw[sequentialConeAngleB] (190,300) circle (44);
        \end{scope}

        \begin{scope}[even odd rule,xshift=50cm,yshift=60cm]
            \clip (320,140) -- (320,0) -- (230,-10) -- (230,140) -- cycle;
            \draw[sequentialConeAngleA] (230,140) circle (50);
        \end{scope}
        \begin{scope}[even odd rule,xshift=90cm,yshift=-100cm]
            \clip (380,250) -- (370,210) -- (190,300) -- cycle;
            \draw[sequentialConeAngleB] (190,300) circle (44);
        \end{scope}

        % vectors
        \draw[vector]    (190,300) -- node[vectorLabel,pos=0.5] {$v_2'$} (230,280) ;
        \draw[vector,yshift=42.5cm]    
                         (50,220)  -- node[vectorLabel,pos=0.65] {$v_1$}  (50,170) ;
        \draw[vector]    (60,110)  -- node[vectorLabel,pos=0.5,swap] {$v_2$}  (70,70) ;
        \draw[vector]    (230,140) -- node[vectorLabel,pos=0.5] {$v_4$}  (220,180) ;
        \draw[vector]    (170,70)  -- node[vectorLabel,pos=0.5,swap] {$v_3$}  (200,40) ;
        \draw[vector]    (70,290)  -- node[vectorLabel,pos=0.7] {$v_1'$} (110,280) ;

        \node[point] at (280,200) {};
        
        % previous vectors
        \draw[vectorShadeA] (60,110) -- (60,60) ;
        \draw[vectorShadeA] (170,70) -- (180,30) ;
        \draw[vectorShadeA] (170,70) -- (170,20) ;
        \draw[vectorShadeA] (230,140) -- (260,110) ;
        \draw[vectorShadeA] (230,140) -- (240,100) ;
        \draw[vectorShadeA] (230,140) -- (230,90) ;

        \draw[vectorShadeB] (190,300) -- (230,290) ;

        % loop outlines
        
        % bottom run
        \draw [dotted,thick,color=black,draw opacity=1 ]   (230,280) .. controls (191.5,324) and (211.5,346-5) .. (230,340-5) .. controls (248.5,334-5) and (251.5,311) .. (190,300) ;
        \draw [dotted,thick,color=black,draw opacity=1 ]   (110,280) .. controls (116.5,274) and (100.5,269) .. (100,280) .. controls (99.5,291) and (88.5,312) .. (70,290) ;
        % top run
        \draw [dotted,thick,color=black,draw opacity=1,yshift=42.5cm ]   (50,220) .. controls (37.5,200) and (60.5,197) .. (70,180) .. controls (79.5,163) and (78.5,140) .. (50,170) ;
        \draw [dotted,thick,color=black,draw opacity=1 ]   (60,110) .. controls (37.5,95) and (10.5,87) .. (20,70) .. controls (29.5,53) and (36.5,103) .. (70,70) ;
        \draw [dotted,thick,color=black,draw opacity=1 ]   (170,70) .. controls (128.5,58) and (140.5,15) .. (160,30) .. controls (179.5,45) and (196.5,6) .. (200,40) ;
        \draw [dotted,thick,color=black,draw opacity=1 ]   (220,180) .. controls (219.5,154) and (205.5,109) .. (240,130) .. controls (274.5,151) and (216.5,158) .. (230,140) ;
        
        \draw[runLineBlack]    (0,320) node[point]{} .. controls (34.5,321) and (16.5,267) .. (70,290) node[point]{} .. controls (123.5,313) and (121.5,260) .. (190,300) node[point]{} .. controls (258.5,340) and (279.5,229) .. (280,200);
        \draw[runLineBlack]    (280,200) node[point]{} .. controls (280.5,171) and (259.5,176) .. (230,140) node[point]{} .. controls (200.5,104) and (217.5,88) .. (170,70) node[point]{} .. controls (122.5,52) and (92.5,102) .. (60,110) node[point]{} .. controls (27.5,118) and (16.5,188) .. (50,220) node[point,yshift=-0.85cm]{} .. controls (83.5,252) and (34.5,264) .. (0,320) ;

    \end{scope}
    \end{scope}
\end{tikzpicture}
  \caption{Thick run. Blue angles denote sequential cones $\poscone(v_1, v_2)$, $\poscone(v_1, v_2, v_3)$ and 
  $\poscone(v_1, v_2, v_3, v_4)$, respectively, and green angle denotes $\poscone(v'_1, v'_2)$.}
  \label{fig:thick-run}
\end{wrapfigure}

\newpage
\noindent
Figure~\ref{fig:thick-run} illustrates the geometric ideas underlying these three conditions for $A$-thick runs.
Concerning condition 1, a cycle $\pi_1$ depicted by a dotted line,
with vertical effect $v_1$,   
%, enabled in configuration $c_1$, 
can be used to increase the second (vertical) coordinate arbitrarily, which justifies the relaxed requirement that
a cycle $\pi_2$ with effect $v_2$ is only $\{1\}$-enabled.
Note that the norm of the configuration enabling $\pi_1$, as well as the first coordinate of the configuration enabling $\pi_2$,
are bounded by $A$.
Concerning condition 2, a cycle $\pi'_1$ with positive effect $v'_1$ can be used to increase both coordinates arbitrarily; therefore
a cycle $\pi'_2$ with effect $v'_2$ is only required to be $\emptyset$-enabled, and no coordinate of the configuration enabling $\pi'_2$ is
required to be bounded by $A$.
In the illustrated example, vectors $v'_3$ and $v'_4$ are not needed; 
formally, one can assume $v'_2 = v'_3 = v'_4$ and $\rho'_3 = \rho'_4 = \emptyseq$.
Condition 3 ensures that the cycles $\pi_1, \ldots, \pi_4$ and
$\pi_1', \ldots, \pi_4'$ can be pumped such that the pumped versions of $\rho$ and $\rho'$
are still connected. In the illustrated example, observe that 
$\poscone(v_1, v_2) \cap \poscone(v'_1) = \emptyset$.
Intuitively, both coordinates in the target of $\rho$ can be increased arbitrarily using $v_1$ and $v_2$,
and similarly both coordinates of the target of $\rev(\rho')$ can be increased arbitrarily using $v'_1$,
but `directions of increase' are non-crossing.
Adding $v_3$ and $v'_2$ is not sufficient, as still
$\poscone(v_1, v_2, v_3) \cap \poscone(v'_1, v'_2) = \emptyset$.
When vector $v_4$ is adjoined, condition 3 holds as $\poscone(v_1, v_2, v_3, v_4) = \Rnonneg^2$.
Finally, the four vectors are really needed here, e.g., vector $v_3$ can not be omitted as
$\poscone(v_1, v_2, v_4) = \poscone(v_1, v_2)$.

Here is the main result of this section:  

\begin{theorem}[Thin-Thick Dichotomy]\label{lem:run-2d}
There is a polynomial $p$ such that 
%for every \twovass $V$, %  of norm M with $n$ states, 
%and source and target zero configurations, 
every $\zero$-run in a \twovass $V$ is either $p(nM)^n$-thin or $p(nM)^n$-thick.
% for some numbers $A, B, D, E \leq p(n M)^n$.
\end{theorem}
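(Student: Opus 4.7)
The plan is to prove the contrapositive: if a $\zero$-run $\tau$ is not $A$-thin for $A = p(nM)^n$ with $p$ a suitably chosen polynomial, then $\tau$ must be $A$-thick. The guiding geometric intuition is that a run which leaves every small belt must accumulate enough variation in its counter-value trajectory to furnish the cycles required by thickness.

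First I would locate a configuration $c$ in $\tau$ that is not contained in any $A$-belt; such a $c$ exists by assumption. I split $\tau = \rho \, \rho'$ at (or at a maximum-norm configuration near) $c$, and aim to exhibit four vectors $v_1, v_2, v_3, v_4$ that are $A$-sequentially enabled in $\rho$, and symmetrically $v'_1, \ldots, v'_4$ that are $A$-sequentially enabled in $\rev(\rho')$, whose sequential cones intersect non-trivially.

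For the extraction in $\rho$, I would proceed greedily. By pigeonhole on the $n$ control states, every sub-path of length exceeding $n$ contains a short cycle. Within an initial short prefix of $\rho$, one extracts a semi-positive cycle $\pi_1$ of effect $v_1$ enabled at a low-norm configuration $c_1$; if no such cycle existed, $\rho$ would be forced to stay inside a single $A$-belt, contradicting the failure of thinness. Once $\pi_1$ is identified, pumping it arbitrarily raises any coordinate $j'$ with $v_1[j']>0$, so the next cycle $\pi_2$ is only required to be $\{j\}$-enabled on the complementary coordinate. This relaxation is iterated to produce $\pi_3$ and $\pi_4$, which need only be $\emptyset$-enabled. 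The same construction applied to $\rev(\rho')$ yields $v'_1, \ldots, v'_4$. Invoking \cref{claim:seqcone}, each of the two sequential cones is determined by a pair of non-negative vectors; since the direction from $\zero$ to $c$ is by assumption not captured by any small belt, both cones must contain (roughly) the half-line through $c$, so their intersection is non-trivial, giving condition~3 of thickness.

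The main obstacle is the iterative cycle extraction together with the quantitative bookkeeping required to obtain the $p(nM)^n$ bound. A multi-scale analysis appears unavoidable: at scale $A_k$ one argues that either the remaining portion of $\rho$ is thin relative to the cone built so far, or else a new cycle extending that cone can be extracted at the next scale $A_{k+1}$, obtained from $A_k$ by a polynomial inflation governed by $nM$. Controlling this inflation and showing that the recursion must terminate within a number of steps governed by $n$ is where the bulk of the technical work lies and accounts for the exponential-in-$n$ bound. A further subtle point is ensuring that each newly extracted cycle has effect genuinely outside the cone already spanned; this must again be argued by a belt-confinement obstruction, lest the greedy process stall with effects colinear to a previously found vector.
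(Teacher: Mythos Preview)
Your overall architecture---split $\tau$ at a configuration $t$ outside all $A$-belts, extract sequentially-enabled cycles in $\rho$ and in $\rev(\rho')$, and argue that both sequential cones contain the direction of $t$---matches the paper. But two of your load-bearing steps are misidentified, and this is a genuine gap rather than a stylistic difference.

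First, extracting the initial \emph{semi-positive} cycle $\pi_1$ is not a pigeonhole one-liner. Pigeonhole on $n$ states gives you some short cycle, but there is no reason its effect is $\geq 0$; the content here is exactly \cref{lem:first-cycle} (the Non-negative Cycle Lemma), whose proof requires a decomposition of the control graph into strongly connected components and a dichotomy on each \SCC (either every state lies on a short positive cycle, or all cycle effects lie in a half-plane avoiding $\Zpos^2$, which forces the run to stay bounded). The exponent $n$ in $p(nM)^n$ comes precisely from chaining the polynomial per-\SCC bound across at most $n$ {\SCC}s---not from a multi-scale recursion as you suggest. Second, once $\pi_1$ and $\pi_2$ are in hand, the paper does \emph{not} iterate across growing scales $A_k$. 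It works at a single scale: one repeatedly takes the first configuration $\emptyset$-enabling a simple cycle whose effect lies clockwise of the current $v_i$, and this iteration terminates after at most $(nM+1)^2$ steps simply because simple-cycle effects have norm $\leq nM$ and are therefore finitely many. The accompanying quantitative control is not a scale inflation but an inductive bound (\cref{claim:final}) showing that $\poscone(v_0,v_i)$ always contains a point within distance roughly $P(nM)^n + i\cdot nM$ of the current configuration $c_{i+1}$; decomposing the infix between $c_i$ and $c_{i+1}$ into simple cycles (all of whose effects lie in the half-plane $\ccangle{-v_i}{v_i}$ by choice of $c_{i+1}$) is what makes this induction go through. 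Your ``multi-scale'' framework and the claim that it terminates in $n$ steps are unsupported, and without \cref{lem:first-cycle} you have no mechanism producing the semi-positive $\pi_1$ or the $(\cdot)^n$ bound.
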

For the proof of \cref{lem:run-2d} we need the following core fact (shown in the appendix):   
\begin{lemma}[Non-negative Cycle Lemma] \label{lem:first-cycle}
There is a polynomial $P$ such that every run $\rho$ in $V$ from a $\zero$-configuration to a
target configuration of norm larger than $P(n M)^n$, contains a configuration enabling a  
semi-positive cycle of length at most $P(n M)$.
\end{lemma}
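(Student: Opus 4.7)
The plan is to establish this lemma by iterating the one-dimensional pumping argument on each coordinate and then combining the resulting short cycles geometrically.

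Without loss of generality assume the first coordinate of the target exceeds $P(nM)^n / 2$. I would first apply the classical 1-VASS cycle pumping argument to the first-coordinate projection of $\rho$: on the prefix ending when the first coordinate first crosses the threshold $nM$, iteratively remove any state-repeating subpath whose first-coordinate effect is non-positive (these removals preserve enablement because the first coordinate only increases at subsequent positions). Since the prefix length exceeds $n$, the simplified path still contains a state repetition, yielding a cycle $\sigma_1$ of length at most $n$ enabled at some configuration of $\rho$, with effect $(a_1,b_1)$ and $a_1 \geq 1$. If $b_1 \geq 0$ then $\sigma_1$ is already semi-positive and we are done; otherwise $b_1 < 0$.

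Symmetrically, applying the same 1-VASS argument to the second coordinate --- leveraging that the second coordinate must reach a polynomially large value somewhere along $\rho$, either because the target has large second coordinate, or because the negative second-coordinate contribution of the many uses of $\sigma_1$-like cycles needed to realize the large first coordinate of the target must be compensated elsewhere --- I would obtain a cycle $\sigma_2$ of length at most $n$ enabled at some configuration of $\rho$, with effect $(a_2,b_2)$ and $b_2 \geq 1$. Again, if $a_2 \geq 0$ we are done, so assume $a_2 < 0$. I would then combine $\sigma_1$ and $\sigma_2$ into a single cycle by concatenating powers $\sigma_1^{k_1}$ and $\sigma_2^{k_2}$ via short connecting subpaths of length at most $n$ in the state graph (available because the anchor states lie in the same strongly connected component visited by $\rho$); with $k_1,k_2 = O(M)$ the total length is polynomial in $nM$, and the combined effect $(k_1 a_1 + k_2 a_2,\, k_1 b_1 + k_2 b_2)$ plus the bounded error from the connecting paths is semi-positive provided the determinant $a_1 b_2 - a_2 b_1$ is strictly positive. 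This sign condition must hold because otherwise the cone spanned by all cycle effects used in $\rho$ would lie in a half-plane missing the interior of the non-negative orthant, preventing $\rho$ from accumulating a large-norm non-negative target vector (invoking \cref{lem:cone-dichotomy}).

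The main obstacle is arranging enablement of the combined cycle at a \emph{common} configuration of $\rho$: the cycles $\sigma_1$ and $\sigma_2$ are originally enabled at possibly different states with different counter reserves, and the concatenated cycle needs sufficient reserves in both coordinates at a single anchor state. I expect the factor $n$ in the exponent of the threshold $P(nM)^n$ to emerge precisely from this coordination: iterating the 1-VASS cycle-removal analysis over the at most $n$ candidate anchor states, each iteration potentially blowing up the required reserve by a polynomial factor before the next short cycle can be extracted. Formalizing this bookkeeping, together with the determinant-sign geometric argument for the non-degenerate case, will be the most technical part of the proof.
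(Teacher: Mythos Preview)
Your geometric instincts are right --- extracting two short cycles with complementary effects and combining them, with \cref{lem:cone-dichotomy} ruling out the degenerate case --- but the proposal has a structural gap that prevents it from going through.

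The key unjustified step is ``the anchor states lie in the same strongly connected component visited by $\rho$''. Nothing in your extraction of $\sigma_1$ and $\sigma_2$ forces this: the run $\rho$ may traverse several \SCC{}s of the control graph, and it is entirely possible that the first-coordinate-increasing cycle $\sigma_1$ lives in one \SCC while the second-coordinate-increasing cycle $\sigma_2$ lives in a later one. In that case no concatenation $\sigma_1^{k_1}\cdots\sigma_2^{k_2}$ is a cycle at all, and your combination step collapses. Relatedly, your half-plane argument (``otherwise the cone spanned by all cycle effects used in $\rho$ would lie in a half-plane'') is only valid \emph{within a single \SCC}; globally, the cycle effects of different \SCC{}s need not be realizable together, so their cone says nothing about reachable vectors. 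Finally, the argument that the second coordinate must get large somewhere is hand-wavy: the target may have second coordinate $0$ and the whole run may keep that coordinate bounded, in which case you need a horizontal cycle directly rather than a combination.

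The paper's proof is organised precisely around this missing piece. It first shows an \SCC dichotomy: in every \SCC either (a) every state lies on a positive cycle of length polynomial in $nM$, or (b) all cycle effects of the \SCC lie in a half-plane containing no positive vector. Then, for a run staying in one \SCC, case~(a) yields an enabled positive cycle once both coordinates are moderately large (and if one coordinate stays small, a separate one-coordinate-bounded lemma produces a short vertical or horizontal cycle), while case~(b) caps the norm growth by a multiplicative factor polynomial in $nM$. Chaining this bound over the at most $n$ \SCC{}s traversed by $\rho$ is what produces the exponent~$n$ in $P(nM)^n$ --- it is not an artefact of iterating over anchor states as you conjectured, but of the number of \SCC{}s.
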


%
%\begin{lemma}\label{lem:positive-approximation}
%There are polynomials $P$ and $Q$ such that for every $\zero$-sourced run in $V$,
%there are vectors $v_1, v_2, v_3$ of norm at most $P(n, M)$ which are 
%$P(n,M)$-sequentially enabled in that run, and such that $\poscone(v_1, v_2, v_3)$ contains a 
%vector $v \in \N^2$ with $\norm{v-t} \leq Q(n,M)^n$, where $t$ is the target vector.
%\end{lemma}
%

\subparagraph*{Proof of \cref{lem:run-2d}.}
%\begin{proof}[Proof of \cref{lem:run-2d}]
%
%For an arbitrary fixed reference vector $w \in \Z^2 \setminus \zero$ we define a quasi-order $\preceq_\clock$ on 
%$\Z^2 \setminus \ell_w$ as follows (c.f.~Figure~\ref{fig:clock-order}): 
%\[
%u \preceq_\clock v \iff \ell_w \not\subseteq \ooangle{u}{v}.
%\]
%Intuitively, a vector increases with respect to $\preceq_\clock$ when rotated clockwise,
%as long as the rotation does not cross the half-line $\ell_w$. 
%Observe that co-linear vectors $u, v$ are equivalent with respect to $\preceq_\clock$,
%and that exclusion of $\ell_w$ is necessary for making the quasi-order non-cyclic.
%In the sequel the reference vector $w$ will be often left implicite, with the proviso that $w \in \Zneg^2$.
%

Let $P$ be the polynomial from \cref{lem:first-cycle}.  %; w.l.o.g.~assume that $P(x) \geq x$ for all $x$. 
The polynomial $p$ required in \cref{lem:run-2d} can be chosen arbitrarily as long as 
%
%\begin{align*} %\label{eq:disj}
$p(x) \geq \sqrt 2 \cdot \big(P(x) + (x+1)^3\big) \cdot x$.
%\end{align*}
%
for all $x$; note that the following inequality follows:
\begin{align} \label{eq:disj2}
p(n M)^n \geq \sqrt 2 \cdot \big(\left(P(n M)\right)^n + (n M+1)^3\big) \cdot nM.
\end{align}
%
%and the following \emph{Disjointness Condition} holds:
%
%\begin{quote} 
%For every vector $t\in \Z^2$ outside of the $p(nM)^n$-belts, the set of vectors
%$\setof{v\in\Z^2}{\norm{v-t} \leq P(nM)^n + (nM + 1)^3}$ intersects no half-line $\ell_w$ determined by a vector $w$ with
%norm $\norm{w} \leq P(nM)^n$.
%\end{quote}

In the sequel we deliberately confuse configurations $c = (q, v)$ with \emph{their vectors $v$}: whenever convenient,
we use $c$ to denote the vector $v$, hoping that this does not lead to any confusion.

Let $\tau$ be a $\zero$-run of $V$ which is not $p(n M)^n$-thin, i.e., $\tau$ contains therefore
a configuration $t$ which lies outside of all the $p(n M)^n$-belts. 
We need to demonstrate points 1--3 in the definition of thick run.
To this aim we split $\tau$ into $\tau = \rho \, \rho'$ where $\trg(\rho) = t = \src(\rho')$, and are going to prove the following two claims (a) and (a'). Let $D := P(nM)^n + (n M + 1)^3$. 
%We have thus $p(n M)^n > \sqrt 2 \cdot D$.
For $x, y \in \R^2$, let $\dist(x, y)$ denote their Euclidean distance.

\begin{enumerate}
\item[(a)] Some vectors $v_1, v_2, v_3, v_4$ are $P(nM)^n$-sequentially enabled in $\rho$, 
and the sequential cone $\poscone(v_1, v_2, v_3, v_4)$ contains a point $u \in \Rnonneg^2$ with $\norm{u-t} \leq D$.
\item[(a')] Some vectors $v'_1, v'_2, v'_3, v'_4$ are $P(nM)^n$-sequentially enabled in $\rev(\rho')$, 
and the sequential cone $\poscone(v'_1, v'_2, v'_3, v'_4)$ contains a point $u \in \Rnonneg^2$ with $\norm{u-t} \leq D$.
\end{enumerate}

\noindent
In simple words, instead of proving point 3, we prove that both sequential cones contain a point $v$ which is sufficiently
close to $t$.

\begin{claim} \label{claim:a}
The conditions (a) and (a') guarantee that $\tau$ is thick.
\end{claim}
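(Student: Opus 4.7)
The plan is to show that the configuration vector $t$ itself lies in both sequential cones, which immediately yields a non-zero common element since $\norm{t} > p(nM)^n > 0$. Conditions~1 and~2 of thickness follow directly from (a) and (a') once one notes that $P(nM)^n \leq p(nM)^n$ by \cref{eq:disj2}: being $P(nM)^n$-sequentially enabled is stronger than being $p(nM)^n$-sequentially enabled, so cycles and coordinate bounds inherit. Thus only condition~3, the non-trivial intersection of the two sequential cones, requires genuine work, and I aim at the stronger statement $t \in \poscone(v_1, v_2, v_3, v_4) \cap \poscone(v'_1, v'_2, v'_3, v'_4)$.

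By \cref{claim:seqcone}, $\poscone(v_1, v_2, v_3, v_4) = \cone(a, b)$ for non-negative vectors $a, b$, each being either some $v_i$ or horizontal or vertical. Since each cycle $\pi_i$ has length at most $P(nM)^n$ and each transition effect has norm at most $M$, the norms $\norm{a}, \norm{b}$ are bounded by $P(nM)^n \cdot M$, which by \cref{eq:disj2} is itself at most $p(nM)^n$. Suppose towards a contradiction that $t \notin \cone(a, b)$. Since $t$ and $\cone(a, b)$ both lie in $\Rnonneg^2$ and the cone forms an angular sector (possibly degenerating to a single ray), $t$ falls outside this sector on one of its two sides; let $b$ denote the bounding vector of $\cone(a, b)$ on the side of $t$. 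Then $t$ lies in the open half-plane opposite to the interior of $\cone(a, b)$ with respect to the line through the origin spanned by $b$, so the segment $[u, t]$ crosses this line and hence
\[
\dist(t, \ell_b) \;\leq\; \norm{t - u} \;\leq\; D,
\]
where I use that $t, b \in \Rnonneg^2$ implies $t \cdot b \geq 0$, so the foot of the perpendicular from $t$ to the line lands on the ray $\ell_b$ itself rather than on its opposite extension.

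Since $\norm{b} \leq p(nM)^n$ and $\dist(t, \ell_b) \leq D \leq p(nM)^n$, we obtain $t \in \belt{b}{p(nM)^n}$, contradicting the choice of $t$ as a configuration lying outside every $p(nM)^n$-belt. Hence $t \in \poscone(v_1, v_2, v_3, v_4)$, and the symmetric argument applied to $\rev(\rho')$ gives $t \in \poscone(v'_1, v'_2, v'_3, v'_4)$, so condition~3 holds with $t$ as witness. The main obstacle is the careful geometric setup: correctly identifying which of the two bounding rays of $\cone(a, b)$ to use as the separator, verifying that the non-negative orthant forces the cone's interior and $t$ into opposite open half-planes of the spanning line, and confirming that the norm bound $\norm{b} \leq P(nM)^n \cdot M$ is indeed absorbed by the slack in \cref{eq:disj2}.
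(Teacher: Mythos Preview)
Your approach is essentially the same geometric idea as the paper's: both exploit that $t$ lies outside every $p(nM)^n$-belt, so no half-line $\ell_w$ with $\norm{w} \leq p(nM)^n$ can come close to $t$. The paper packages this as ``the square $X = \{u : \norm{u-t} \leq D\}$ lies strictly between two consecutive such half-lines, hence any sequential cone meeting $X$ swallows all of $X$''; you package it as ``if $t$ were outside the cone it would be close to a boundary ray, hence in a belt''.

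There is, however, a genuine gap. Condition~3 in the definition of thick requires the intersection of the two sequential cones to be \emph{non-trivial}, which in the paper's terminology means it is not a single ray (see the sentence just before \cref{lem:cone-dichotomy}). Your opening line aims only at ``a non-zero common element'', and your argument delivers exactly that: $t$ lies in both cones, hence so does the ray $\ell_t$. This shows the intersection is non-degenerate, not that it is non-trivial. The distinction matters downstream: \cref{claim:shift} explicitly uses non-triviality to extract two non-colinear points from the intersection. The fix is immediate with your own tools: since $t$ lies in no $p(nM)^n$-belt, $t$ cannot lie on any boundary ray $\ell_a$ or $\ell_b$ of either cone (being on such a ray would put $t$ in the corresponding belt with width $0$), so $t$ is in the \emph{interior} of both cones and the intersection is two-dimensional. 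The paper gets this for free by placing the full-dimensional square $X$ inside both cones.

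A minor slip: the inequality $\dist(t, \ell_b) \leq \norm{t-u}$ mixes Euclidean distance (used for belts) with the max-norm $\norm{\cdot}$. The correct bound is $\dist(t, \ell_b) \leq \sqrt{2}\,\norm{t-u} \leq \sqrt{2}\,D$, which is still at most $p(nM)^n$ by \eqref{eq:disj2}, so the contradiction survives.
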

Indeed, points 1--2 in the definition of thick run are immediate as $P(n M) \leq p(n M)$.
For point 3, 
observe that the inequality~\eqref{eq:disj2} implies $p(nM)^n \geq \sqrt 2 \cdot D$, which guarantees that
the circle $\setof{u\in\Rnonneg^2}{\dist(u, t) \leq \sqrt 2 \cdot D}$ does not touch any half-line $\ell_w$ induced by 
a non-negative vector $w$ with $\norm{w} \leq p(nM)^n$.
In consequence, neither does the square $X := \setof{u\in\Rnonneg^2}{\norm{u-t} \leq D}$
inscribed in the circle, and hence $X$ 
lies between two consecutive half-lines $\ell_w$ induced by a non-negative vector $w$ with $\norm{w} \leq p(nM)^n$.
Hence, as $\poscone(v_1, v_2, v_3, v_4)$ contains some point of $X$, by \cref{claim:seqcone} it includes the whole $X$,
and likewise $\poscone(v'_1, v'_2, v'_3, v'_4)$. In consequence, the whole $X$ is included in 
$\poscone(v_1, v_2, v_3, v_4) \cap \poscone(v'_1, v'_2, v'_3, v'_4)$ which entails point 3.
\Cref{claim:a} is thus proved.

As condition (a') is fully symmetric to (a), we focus exclusively on proving condition (a), i.e., on
constructing sequentially enabled vectors $v_1, v_2, v_3, v_4$.
% such that the sequential cone $\poscone(v_1, v_2, v_3, v_4)$ contains a point sufficiently close to $t$.

Vector $t$ lies outside of $p(nM)^n$-belts, hence outside of all the $P(nM)^n$-belts, therefore
its norm $\norm{t} > P(n M)^n$.
Relying on \cref{lem:first-cycle},
let $c_1$ be the first configuration in the run $\rho$ which enables a semi-positive cycle $\pi_1$
of length bounded by $P(n M)$, and let $v_1 = \eff(\pi_1)$.
We start with the following obvious claim (let $v_0$ be some vertical vector, e.g. $v_0 = (0,1)$):
\begin{claim}\label{claim:0}
$\poscone(v_0)$ contains a point $u\in \Rnonneg^2$ such that $\norm{u - c_{1}} \leq P(nM)^n + n M$.
\end{claim}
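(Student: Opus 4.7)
The plan is to exploit the minimality of $c_1$, together with the contrapositive of the Non-negative Cycle Lemma (\cref{lem:first-cycle}), to bound $\norm{c_1}$; then $u$ will simply be the vertical projection of $c_1$ onto $\ell_{v_0} = \poscone(v_0)$.

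First I would dispose of the degenerate case where $c_1$ coincides with $\src(\rho)$: since $\tau = \rho\,\rho'$ is a $\zero$-run, in that case $c_1 = \zero$ and $u = (0,0)$ witnesses the claim with distance $0$. Otherwise I would look at the configuration $c''$ immediately preceding $c_1$ along $\rho$. By the choice of $c_1$ as the \emph{first} configuration in $\rho$ enabling a semi-positive cycle of length at most $P(nM)$, no configuration in the prefix of $\rho$ ending at $c''$ enables such a cycle. Applying the contrapositive of \cref{lem:first-cycle} to that prefix---which starts from a $\zero$-configuration, as $\rho$ does---then yields $\norm{c''} \leq P(nM)^n$, and one extra transition whose effect has norm at most $M$ gives $\norm{c_1} \leq P(nM)^n + M$. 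Setting $u := (0, c_1[2]) \in \poscone(v_0)$ finishes the argument, since in the max-norm $\norm{u - c_1} = c_1[1] \leq \norm{c_1} \leq P(nM)^n + M \leq P(nM)^n + nM$.

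I do not foresee real obstacles. The only slightly delicate step is the invocation of the contrapositive of \cref{lem:first-cycle}: the minimality of $c_1$ must be propagated throughout the whole prefix ending at $c''$ (and not merely used at $c''$ itself), so that the hypothesis of the lemma---a run from a $\zero$-configuration along which no configuration enables a short semi-positive cycle---actually holds. Once this is cleanly stated, the bound on $\norm{c''}$, and hence on $\norm{c_1}$, is immediate.
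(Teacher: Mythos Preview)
Your proposal is correct and follows essentially the same approach as the paper: both bound $\norm{c_1}\le P(nM)^n+M$ via the minimality of $c_1$ and the contrapositive of \cref{lem:first-cycle}. The only cosmetic difference is the choice of witness: the paper simply takes $u=(0,0)$, whereas you take the vertical projection $u=(0,c_1[2])$; either choice lies in $\poscone(v_0)=\ell_{v_0}$ and satisfies the required bound.
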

Indeed, due to \cref{lem:first-cycle} we may assume $\norm{c_1} \leq P(nM)^n + M$ and hence
$u = \zero$ does the job.

Recall that the relation $\clockwise$ defines a total order on
pairwise non-colinear non-negative vectors.
\begin{claim} \label{claim:wlog1}
We can assume w.l.o.g.~that $v_1 \clockwise t$.
\end{claim}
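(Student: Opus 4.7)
\textbf{Proof plan for \cref{claim:wlog1}.} The plan is to exploit the obvious symmetry given by swapping the two coordinates. Let $\sigma\colon\Z^2\to\Z^2$ be the swap $\sigma(a,b) = (b,a)$, and let $\sigma(V)$ denote the \twovass obtained by applying $\sigma$ to every transition effect. Then $\sigma(V)$ has the same number of control states $n$ and the same norm $M$ as $V$, and $\sigma$ induces a bijection between runs in $V$ and runs in $\sigma(V)$ preserving sources, targets, lengths, and effects of cycles. Moreover, since the definitions of $A$-thin and $A$-thick runs, of sequential cones, of the polynomial $p$, and of the configuration $c_1$ together with its short semi-positive cycle $\pi_1$, are all symmetric in the two coordinates, every ingredient so far assembled transfers intact through $\sigma$. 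The key observation is that $\sigma$ reverses the clockwise order on pairs of non-colinear first-quadrant vectors: $v \clockwise u$ in $V$ if and only if $\sigma(u) \clockwise \sigma(v)$ in $\sigma(V)$.

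With this symmetry in hand, it suffices to rule out the two degenerate cases in which neither $v_1 \clockwise t$ nor $t \clockwise v_1$ holds, namely when $v_1$ and $t$ are colinear or contralinear. Contralinearity is immediate to exclude: $t$ is a non-negative vector (in fact $\norm{t} > P(nM)^n > 0$), while $v_1$ is semi-positive, so both lie in the non-negative orthant and cannot be opposite.

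To exclude colinearity is the main step, and the idea is that if $t$ lay on $\ell_{v_1}$, then the belt around $v_1$ would swallow $t$ and contradict the assumption that $t$ sits outside every $p(nM)^n$-belt. Concretely, I would bound $\norm{v_1} \leq P(nM) \cdot M$ using that $\pi_1$ has length at most $P(nM)$ and each transition effect has norm at most $M$; by the choice of $p$ this is well below $p(nM)^n$, so $\belt{v_1}{p(nM)^n}$ is a $p(nM)^n$-belt. If $t$ were colinear with $v_1$ then $\dist(t,\ell_{v_1}) = 0 \leq p(nM)^n$, placing $t$ inside this belt, contradiction.

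Having ruled out both degenerate cases, exactly one of $v_1 \clockwise t$ or $t \clockwise v_1$ must hold. In the first case the conclusion is immediate; in the second, replacing $V$, $\tau$, $t$, $v_1$ by their swaps under $\sigma$ reduces to the first case without disturbing any other part of the proof. The only obstacle worth watching is the bookkeeping that nothing done \emph{after} this claim breaks under the swap, but since the remaining construction of $v_2,v_3,v_4$ and the definition of thick runs is patently symmetric in the two coordinates, this is routine.
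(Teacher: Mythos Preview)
Your proposal is correct and relies on the same coordinate-swap symmetry that the paper implicitly uses for the case $t \clockwise v_1$. The one genuine difference is your treatment of the colinear case: you argue it is \emph{impossible}, because $\norm{v_1}\leq P(nM)\cdot M\leq p(nM)^n$ and hence $t$ would lie in the $p(nM)^n$-belt $\belt{v_1}{p(nM)^n}$, contradicting the choice of $t$. The paper instead disposes of this case by observing that if $t$ and $v_1$ are colinear then $t\in\cone(v_1)=\poscone(v_1,v_1,v_1,v_1)$, so condition~(a) holds immediately with $u=t$ and $\norm{u-t}=0$. Both arguments are valid; the paper's is a single line, while yours needs the norm bound on $v_1$ but yields the slightly stronger information that the colinear case never actually occurs.
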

Indeed, if $v_1$ and $t$ were colinear then $t\in\cone(v_1)$ and hence condition (a) would hold.

Split $\rho$ into the prefix ending in $c_1$ and the remaining suffix: 
$\rho = \rho_1 \, \sigma$, where $\trg(\rho_1) = c_1 = \src(\sigma)$.
As the next step we will identify a configuration $c_2$ in $\sigma$ which 
satisfies \cref{claim:1} (which will serve later as the basis of induction)
and enables a cycle $\pi_2$ with effect $v_2$ (as stated in \cref{claim:cycle}).

\begin{claim}\label{claim:1}
$\poscone(v_0, v_1)$ contains a point $u\in\Rnonneg^2$ such that $\norm{u - c_{2}} \leq P(nM)^n + 2 n M$.
\end{claim}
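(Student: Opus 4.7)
My plan is to define $c_2$ as the first configuration along $\sigma$ that enables a short semi-positive cycle $\pi_2$ whose effect $v_2$ strictly enlarges the sequential cone beyond $\poscone(v_0,v_1)$ (i.e.\ $v_1 \clockwise v_2$), and then to bound its distance to $\poscone(v_0, v_1)$ via a reduction to \cref{lem:first-cycle}. The inductive shape of the argument is parallel to \cref{claim:0}: $c_1$ was close to $\poscone(v_0)=\ell_{v_0}$, and we now want $c_2$ close to $\poscone(v_0,v_1)$.

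First I would argue that such a $c_2$ must appear inside a bounded neighbourhood of $\poscone(v_0, v_1)$. Since $v_1 \clockwise t$ by \cref{claim:wlog1} and $t$ lies outside every $p(nM)^n$-belt, $t$ is far from $\ell_{v_1}$ and lies on its clockwise side, so $t$ is far outside $\poscone(v_0, v_1)$. Consequently the suffix $\sigma$ of $\rho$ starting at $c_1$ must eventually leave a large neighbourhood of $\poscone(v_0, v_1)$. Suppose, for contradiction, that the first new short semi-positive cycle became enabled only after $\sigma$ had already wandered distance greater than $P(nM)^n + nM$ from this sequential cone. I would then consider the composite run $\rho_1 \, \pi_1^k \, \sigma''$, where $\sigma''$ is the prefix of $\sigma$ from $c_1$ to the escape point and $k$ is chosen so large that the endpoint of the composite run has norm exceeding $P(nM)^n$. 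By \cref{lem:first-cycle} this composite run must contain a configuration enabling a short semi-positive cycle; because $c_1$ is by construction the first such configuration inside $\rho_1$ and every cycle enabled along $\pi_1^k$ has effect colinear with $v_1$, this new cycle must lie inside $\sigma''$, contradicting the assumption that no new short cycle is enabled there.

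This contradiction delivers a $c_2$ within distance $P(nM)^n + nM$ of $\poscone(v_0, v_1)$, and one extra additive $nM$ term (absorbing at most one length-$\leq P(nM)$ cycle or one additional transition of norm $\leq M$ needed to land precisely at the cycle-enabling configuration) yields the claimed bound $P(nM)^n + 2nM$. I expect the main obstacle to be controlling which cycle is extracted by \cref{lem:first-cycle} on the composite run: one must rule out both a spurious copy of $\pi_1$ and any cycle with effect colinear with $v_0$ or $v_1$. Here the clockwise assumption $v_1 \clockwise t$ from \cref{claim:wlog1}, together with the minimality in the choice of $c_1$, is what forces the new cycle $\pi_2$ to have an effect $v_2$ genuinely enlarging the sequential cone, i.e.\ satisfying $v_1 \clockwise v_2$; this is what makes the inductive step usable when constructing $c_3$ and $c_4$ in the subsequent claims.
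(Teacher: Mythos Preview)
Your approach diverges substantially from the paper's and has a genuine gap.

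In the paper, $c_2$ is not defined as the first configuration enabling a short semi-positive cycle with $v_1 \clockwise v_2$. Instead the paper makes a case distinction on the shape of $v_1$. If $v_1$ is positive, one simply sets $c_2 := c_1$ and $\pi_2 := \pi_1$; then \cref{claim:0} already gives the bound. If $v_1$ is not positive, then by \cref{claim:wlog1} it is vertical, so $\poscone(v_0,v_1)=\ell_{v_0}$ is the vertical axis and the task reduces to bounding $c_2[1]$. The paper does this by an elementary pigeonhole: let $d_1:=c_1$ and let $d_{i+1}$ be the first configuration after $d_i$ with strictly larger first coordinate; since $t[1]$ is large, there are more than $n$ such $d_i$'s, two of them share a control state, and $c_2$ is taken to be the earlier one. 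Its first coordinate is at most $P(nM)^n+(n{+}1)M$, which yields the claim. The cycle $\pi_2$ is then the infix between the two repeated $d_i$'s (only $\{1\}$-enabled, and of length up to $p(nM)^n$, not $P(nM)$).

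The gap in your plan is precisely the obstacle you flag yourself but do not resolve. \cref{lem:first-cycle} gives \emph{some} short semi-positive cycle along the run $\rho_1\,\pi_1^k\,\sigma''$; it gives you no control over its direction. In particular it may return $\pi_1$ at $c_1$, or another cycle with effect colinear with $v_1$, and your contradiction collapses. Your auxiliary claim that ``every cycle enabled along $\pi_1^k$ has effect colinear with $v_1$'' is also false: the configurations $c_1+jv_1$ visited while pumping can enable entirely unrelated short semi-positive cycles. Finally, even if you could extract a short, fully enabled, semi-positive $\pi_2$ with $v_1\clockwise v_2$, that is strictly more than what is needed or used downstream: the paper's $\pi_2$ is only $\{1\}$-enabled and is allowed length $p(nM)^n$. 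The elementary case split plus pigeonhole avoids all of these difficulties.
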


\begin{wrapfigure}[17]{r}[0cm]{4cm}
  \centering
  \vspace{-0.5\baselineskip}
  \begin{tikzpicture}[scale=0.40]
    % A. grid, belts and vectors
    \drawGridB{6}{14}
    \node at (5.2,-0.6) {$W$};
    \draw[black,thick,dashed] (5.2,-0.2) -- (5.2,14);

    % B. path
    \begin{scope}[xshift=-5mm]
    \begin{scope}[xscale=0.05,yscale=0.05]

    \draw[runLineBlack]
        (10,0) node[](a){} 
        .. controls (25.0,2) and (39.5,20) .. 
        (40,40) node[point,label=right:{$c_1=d_1$}](b){} 
        .. controls (40.5,60) and (31.5,65) .. 
        (30,80) node[](c){} 
        .. controls (28.5,95) and (41.5,103) .. 
        (50,110) node[point,label=right:{$d_2$}](d){} 
        .. controls (58.5,117) and (60.5,137) .. 
        (60,140) node[point,label=right:{$d_3$}](e){} 
        .. controls (59.5,143) and (57.5,158) .. 
        (50,170) coordinate(f) ;
    \draw[runLineBlack,decoration={zigzag,segment length=1mm, amplitude=0.25mm},decorate,very thick]
        (f) 
        .. controls (42.5,182) and (43.5,201) .. 
        node[midway,auto,xshift=-1mm] () {$\tau$}
        (50,210) coordinate (g) ;
    \draw[runLineBlack]
        (g)
        .. controls (56.5,219) and (62.5,224) .. 
        (70,230) node[point,label=right:{$d_4$}](h){}
        .. controls (77.5,236) and (84.5,243) .. 
        node[midway,auto] {$\pi_2$}
        (88,250) node[point,label=right:{$d_5$\ }](i){} 
        .. controls (95.5,257) and (108.5,268) .. 
        (120,280) node[](j){} ;
    \node[point] at (f) {};
    \node[point] at (g) {};
    
    \begin{pgfonlayer}{background}
    \draw[runOutlineA,line cap=round]
    (50,110)
    .. controls (58.5,117) and (60.5,137) .. 
    (60,140)
    .. controls (59.5,143) and (57.5,158) .. 
    (50,170)
    .. controls (42.5,182) and (43.5,201) .. 
    (50,210)
    .. controls (56.5,219) and (62.5,224) .. 
    (70,230)
    .. controls (77.5,236) and (84.5,243) .. 
    (90,250)
    ;
    \draw[runOutlineA,draw=white,line width=3mm,line cap=round,draw opacity=1]
    (50,170)
    .. controls (42.5,182) and (43.5,201) .. 
    (50,210);
    \end{pgfonlayer}

    \end{scope}
    \end{scope}

    \foreach \idx in {b,d,e,h,i} {
        \path let \p1 = (\idx) in coordinate (projection) at (\x1,14);
        \draw[dotted] (\idx) -- (projection);
    }
\end{tikzpicture}
%  \label{fig:clock-order}
\end{wrapfigure}

\noindent
The proof of \cref{claim:1} depends on whether $v_1$ is positive. %, or not. 
If $v_1$ is so, we simply duplicate the first cycle: $c_2 := c_1$ and $\pi_2 := \pi_1$, and use \cref{claim:0}.
Otherwise $v_1$ is vertical due to \cref{claim:wlog1}.  % (i.e., not horizontal). 
If $t[1] \leq W = P(n M)^n + (n+1) M$ then condition (a) holds immediately as 
$\poscone(v_1) = \ell_{v_1}$ contains a point $u\in\Rnonneg^2$
with $\norm{u-t} \leq P(n M)^n + (n+1) M \leq D$.
Therefore suppose $t[1] > P(n M)^n + (n+1) M$, and define the sequence $d_1, \ldots, d_m$ of configurations
as follows: let $d_1 := c_1$, and let $d_{i+1}$ be the first configuration in $\sigma$ with $d_{i+1}[1] > d_i[1]$.
Recall that $d_1[1] \leq P(n M)^n + M$, and observe that $d_{i+1}[1] \leq d_i[1]+M$.
Thus by the pigeonhole principle $m > n$ and hence for some $i<j\leq n+1$ the configurations 
$d_i$ and $d_j$ must have the same control state.
The infix $\sigma_{ij}$ of the path $\sigma$ from $d_i$ to $d_j$ is thus a cycle, 
enabled in $d_i$, whose effect is positive on the first (horizontal) coordinate.
Let $c_2 := d_i$.
As $c_2[1] \leq P(n M)^n + (n+1) M$, $\poscone(v_0, v_1) = \ell_{v_0}$ contains necessarily 
a point $u\in\Rnonneg^2$ such that $\norm{u - c_{2}} \leq P(nM)^n + (n+1) M$, which proves
\cref{claim:1}.

\begin{claim} \label{claim:cycle}
The configuration $c_2$ $\{1\}$-enables a cycle $\pi_2$ of length bounded by $p(n M)^n$, 
such that the first coordinate of $\eff(\pi_2)$ is positive.
\end{claim}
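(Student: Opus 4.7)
The plan is to case-split according to whether $v_1$ is positive or vertical, matching the construction of $c_2$ in the preceding paragraphs.

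If $v_1$ is positive, I would simply take $\pi_2 := \pi_1$. It is a cycle enabled in $c_1 = c_2$, of length at most $P(nM)$ by \cref{lem:first-cycle}, and its effect $v_1$ is positive, so in particular its first coordinate is positive. Since $P(nM) \leq p(nM)^n$, the claim holds in this case.

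If instead $v_1$ is vertical (the sub-case $t[1] \leq W$ was already dispatched by condition (a) in the text above), the construction furnishes the cycle $\sigma_{ij}$ going from $c_2 = d_i$ back to $d_j$ at the same control state. This cycle is fully enabled in $c_2$, hence in particular $\{1\}$-enabled, and its first-coordinate effect equals $d_j[1] - d_i[1] \in [1, M]$, which is positive. So only a length bound is missing. To establish it I would use a one-dimensional pigeonhole shortening on the first coordinate. Along $\sigma_{ij}$ the first coordinate stays non-negative (it is part of the run $\tau$) and bounded from above by $K := P(nM)^n + (n+1)M$: between $d_{k-1}$ and $d_k$ the first coordinate never exceeds $d_{k-1}[1]$ by definition of the $d_k$'s, and each $d_k[1]$ is at most $P(nM)^n + kM \leq K$. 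Hence only $n(K+1)$ distinct pairs (control state, first coordinate) appear along $\sigma_{ij}$. Whenever $|\sigma_{ij}| \geq n(K+1)$, some such pair repeats, and the infix between the two occurrences is a subcycle with first-coordinate effect $0$. Excising it yields a strictly shorter cycle, still at $c_2$'s control state, still with the same positive first-coordinate effect, and still $\{1\}$-enabled in $c_2$ (first coordinates before the excision are unchanged; those after it are shifted by $0$). Iterating, I obtain $\pi_2$ of length below $n(K+1)$, which is in turn below $p(nM)^n$ by the choice of $p$.

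The main obstacle is ensuring that the shortening simultaneously preserves positivity of the first-coordinate effect \emph{and} $\{1\}$-enabledness. A naive control-state-only matching could excise a subcycle with positive first-coord effect (killing positivity) or one whose partial sums drop on the first coordinate (lowering some subsequent first coordinate below $0$ in the shortened cycle). Matching on the pair (control state, first coordinate) excises only first-coord-effect-$0$ subcycles and avoids both problems at once, while still being aggressive enough to bring the length down to $n(K+1)$.
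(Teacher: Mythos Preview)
Your proof is correct and follows essentially the same approach as the paper: the paper also shortens $\sigma_{ij}$ by iteratively removing subcycles with vertical effect $(0,w)$, which is exactly your pigeonhole on (control state, first coordinate), and arrives at the same bound $(P(nM)^n + (n+1)M)\cdot n \leq p(nM)^n$ via inequality~\eqref{eq:disj2}. One harmless slip: $d_j[1]-d_i[1]$ need not lie in $[1,M]$ (only in $[j-i,(j-i)M]$), but positivity is all the claim requires.
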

Recalling the proof of the previous claim, 
observe that the first (horizontal) coordinate in the infix $\sigma_{ij}$ is bounded
by $P(n M)^n + (n+1) M$, and think of the second (vertical) coordinate as irrelevant.
Let $\pi_2$ be the path inducing $\sigma_{ij}$. For bounding the length of $\pi_2$,
as long as $\pi_2$ contains a cycle $\alpha$ with vertical effect $(0, w)$, remove $\alpha$ from 
$\pi_2$.
This process ends yielding a cycle $\pi_2$ of length at most $(P(n M)^n + (n+1) M) \cdot n$, and hence at most $p(n M)^n$
(by the inequality~\eqref{eq:disj2}),
which is $\{1\}$-enabled in $c_2$, but not necessarily enabled.
Let $v_2 := \eff(\pi_2)$.

\begin{claim}
We can assume w.l.o.g.~that $v_2 \clockwise t$.
\end{claim}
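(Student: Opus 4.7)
The plan is to prove this in the style of \cref{claim:wlog1}: show that if $v_2 \clockwise t$ fails, then condition (a) already holds with $u = t$, so we may freely assume the desired orientation. Besides $v_2 \clockwise t$ itself, the options are that $v_2, t$ are contralinear, colinear, or satisfy $t \clockwise v_2$. Contralinearity is immediately ruled out since $v_2[1] > 0$ while $t[1] \geq 0$, forbidding $v_2 = -\gamma t$ for any $\gamma > 0$.

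In the colinear case, the sign constraint forces $v_2 = \beta t$ for some $\beta > 0$, so $v_2$ is non-negative and $t = \beta^{-1} v_2$ lies in $\poscone(v_2) \subseteq \poscone(v_1, v_2)$. Setting $\pi_3 = \pi_4 = \emptyseq$ at the control state of $c_2$ (empty cycles, vacuously $\emptyset$-enabled, with $v_3 = v_4 = (0,0)$) yields $\poscone(v_1, v_2) = \poscone(v_1, v_2, v_3, v_4)$, so condition (a) follows with $u = t$.

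The hard part is $t \clockwise v_2$ with $v_2$ non-colinear to $t$. Here I would first observe that $v_1, v_2$ cannot be colinear (else $v_1 \clockwise t$ transfers to $v_2 \clockwise t$, contradicting the case) and cannot be contralinear (both have non-negative first coordinate, with $v_2[1] > 0$), so $\clockwise$ is total on $\{v_1, v_2\}$. The crux is to rule out $v_2 \clockwise v_1$: in that event the three strict clockwise relations $v_1 \clockwise t$, $t \clockwise v_2$, $v_2 \clockwise v_1$ form a cyclic arrangement, and their three rotation angles, each lying in $(0, \pi)$, must sum to exactly $2\pi$. However, $v_1$ semi-positive and $t$ non-negative nonzero force $\theta(v_1), \theta(t) \in [0, \pi/2]$, while $v_2[1] > 0$ places $\theta(v_2) \in [0, \pi/2) \cup (3\pi/2, 2\pi)$. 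A short case split on the quadrant of $v_2$ disposes of the cycle: if $\theta(v_2) \in (3\pi/2, 2\pi)$ then already $\theta(v_2) - \theta(v_1) \in (\pi, 2\pi)$ falsifies $v_2 \clockwise v_1$; if $\theta(v_2) \in [0, \pi/2)$ then all three rotation angles sit in $(0, \pi/2]$ and sum to at most $3\pi/2 < 2\pi$, a contradiction. Hence $v_1 \clockwise v_2$, and then $t \in \ccangle{v_1}{v_2} = \cone(v_1, v_2)$; writing $t = a_1 v_1 + a_2 v_2$ with $a_1, a_2 \geq 0$, the partial sums $a_1 v_1$ (non-negative since $v_1$ is semi-positive) and $t$ itself are both non-negative, so $t \in \poscone(v_1, v_2)$, and the same trivial extension as above delivers condition (a).

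The main (mild) obstacle will be the angle case analysis above: it is elementary but requires careful tracking of the open/closed status of the angular intervals at axis-aligned edge cases such as $v_1$ vertical or $v_2$ on the positive $x$-axis.
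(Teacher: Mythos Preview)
Your argument is correct, but it takes a longer and more general route than the paper's. The paper's two-line proof exploits a structural fact from the construction of $v_2$ that you do not use: by the case split in the proof of \cref{claim:1}, either $v_1$ is positive, in which case $\pi_2 := \pi_1$ and hence $v_2 = v_1$, or $v_1$ is vertical. In the first case $v_2 \clockwise t$ is immediate from $v_1 \clockwise t$ (this is your ``colinear $v_1, v_2$'' observation). In the second case, $v_1$ being vertical means $\poscone(v_1, v_2)$ is the full angle from the vertical axis clockwise to $v_2$ (or all of $\Rnonneg^2$ when $v_2$ lies in the fourth quadrant), so $t \clockwise v_2$ or colinearity, together with $v_1 \clockwise t$, places $t$ inside that cone directly. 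Your angle-sum analysis to exclude $v_2 \clockwise v_1$ and then trap $t$ in $\ccangle{v_1}{v_2}$ reaches the same conclusion without invoking the dichotomy, working only from the weaker hypotheses that $v_1$ is semi-positive, $v_2[1] > 0$, $t \in \N^2$, and $v_1 \clockwise t$. Both arguments are valid; the paper's is shorter because it remembers how $v_2$ was built, while yours is more self-contained. (A minor point: rather than empty cycles for $\pi_3, \pi_4$, the paper's convention elsewhere is to repeat the last cycle, setting $v_3 = v_4 = v_2$ and $\rho_3 = \rho_4 = \emptyseq$; either choice leaves $\poscone(v_1, v_2, v_3, v_4) = \poscone(v_1, v_2)$.)
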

Indeed, if $v_1 = v_2$ then \cref{claim:wlog1} does the job;
otherwise $v_1$ is vertical and then $t \clockwise v_2$ (or $t$ colinear with $v_2$) would imply 
$t \in \poscone(v_1, v_2)$, hence condition (a) would hold again.

\vspace{2mm}

Split $\sigma$ further into the prefix ending in $c_2$ and the remaining suffix: 
$\sigma = \rho_2 \sigma'$, where $\trg(\rho_2) = c_2 = \src(\sigma')$.
%
%If $t \in \poscone(v) = \ell_v$ we are done, so we can assume that $\ell_v \neq \ell_t$.
%W.l.o.g.~assume that $v \prec_\clock t$.
If $\sigma'$ contains a configuration
which $\emptyset$-enables a simple cycle whose effect $w$ belongs to $\coangle{t}{-v_2}$ 
then $t \in \poscone(v_2, w)$ and hence condition (a) holds. % we are done. 
We aim at achieving this objective incrementally.

\begin{wrapfigure}[10]{r}[0cm]{5.5cm}
  \centering
  \vspace{-\baselineskip}
  \begin{tikzpicture}[scale=0.25]
    % A. grid, belts and vectors
    \newcommand{\gridSize}{16}
%    \drawFading{\gridSize}
    \drawGridB{\gridSize}{16}

    % B. path
    \begin{scope}[yshift=16cm]
    \begin{scope}[xscale=0.05,yscale=-0.05]
        \draw[runOutlineA] (280,160) node (t) [point] {} .. controls (217.5,130) and (176.5,129) .. node (p3) [pos=0.45,point] {} (110,180) node (p2) [point] {} .. controls (43.5,231) and (56.5,264) .. node (p1) [pos=0.7,point] {} (0,320) ;
        \draw[runLineBlack] (280,160) node (t) [point] {} .. controls (217.5,130) and (176.5,129) .. node (p3) [pos=0.45,point] {} (110,180) node (p2) [point] {} .. controls (43.5,231) and (56.5,264) .. node (p1) [pos=0.7,point] {} (0,320) ;
    \end{scope}
    \end{scope}
    \draw[vector] (p1.center) -- node[vectorLabel,pos=1.1,swap]{$v_2$} ++(80:4cm);
    \draw[vectorShadeA] (p2.center) -- ++(80:4cm);
    \draw[vector] (p2.center) -- node[vectorLabel,pos=1.05,swap]{$v_3$} ++(55:4cm);
    \draw[vectorShadeA] (p3.center) -- ++(80:4cm);
    \draw[vectorShadeA] (p3.center) -- ++(55:4cm);
    \draw[vector] (p3.center) -- node[vectorLabel,pos=1,swap]{$v_4$} ++(30:4cm);
    \path (t) ++(0.5,-0.5) node {$t$};
    \path (p1) ++(1.0,-0.9) node {$c_2$};
    \path (p2) ++(0.8,-1.0) node {$c_3$};
    \path (p3) ++(0.0,-1.2) node {$c_4$};
\end{tikzpicture}
%  \caption{.}
  \label{fig:increasing-angle}
\end{wrapfigure}

%let $c_1 := c$, $v_1 := v$, and
For $i \geq 2$, let $c_{i+1}$ be the first configuration in $\sigma'$
after $c_i$ that $\emptyset$-enables a simple cycle $\pi_{i+1}$ with effect $v_{i+1} \in \ooangle{v_i}{-v_i}$.
As discussed above, if $v_{i+1} \in \coangle{t}{-v_i}$ for some $i$ then $t \in \poscone(v_i, v_{i+1})$ 
and hence condition (a) holds.  %we are done.
Assume therefore that the sequence $v_1, \ldots, v_m$ so defined satisfies $v_{i+1} \in \ooangle{v_i}{t}$ 
for all $i \geq 2$. Let $c_{m+1} := t$.
%\wojtek{Rysunek tu}
%Let $\rho_i$ be the infix of $\rho$ which starts in $c_i$ and ends in $c_{i+1}$ (if $i < m$) or in $t$ (if $i = m$). 
As vectors $v_3, \ldots, v_m$ are pairwise different, semi-positive and, 
being effects of simple cycles, have norms at most $nM$,
we know that $m \leq (nM+1)^2 + 1$.

\begin{claim}\label{claim:final}
For every $i = 1, \ldots, m$, $\poscone(v_0, v_i)$ contains a point $u\in\Rnonneg^2$ such
that $\norm{u - c_{i+1}} \leq P(nM)^n + (i+1) n M$.
\end{claim}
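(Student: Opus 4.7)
The claim is proven by induction on $i$. The base case $i=1$ is precisely \cref{claim:1}. For the inductive step, set $E_i := P(nM)^n + (i+1)nM$, assume $u_i \in \poscone(v_0, v_i)$ with $\norm{u_i - c_{i+1}} \leq E_i$, and aim to construct $u_{i+1} \in \poscone(v_0, v_{i+1})$ with $\norm{u_{i+1} - c_{i+2}} \leq E_{i+1} = E_i + nM$.

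The key step is to analyze the subpath of $\sigma'$ from $c_{i+1}$ to $c_{i+2}$. By the defining property of $c_{i+2}$, no configuration strictly before $c_{i+2}$ in this subpath $\emptyset$-enables a simple cycle with effect in $\ooangle{v_{i+1}}{-v_{i+1}}$. Decomposing the transition sequence as an acyclic path of length $\leq n$ (effect $\Delta_0$ of norm $\leq nM$) plus a multiset of simple cycles, every cycle effect $w_j$ therefore lies in the closed half-plane $\ccangle{-v_{i+1}}{v_{i+1}}$ (the side containing $v_0$). Setting $\Sigma := \sum_j w_j$, the displacement equals $c_{i+2} - c_{i+1} = \Delta_0 + \Sigma$. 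Using that $\{v_0, v_{i+1}\}$ is a basis of $\R^2$ (since $v_{i+1}[1] > 0$), I express $\Sigma = \alpha v_{i+1} + \beta v_0$ with $\beta \geq 0$, and decompose $u_i = a v_0 + b v_{i+1}$ with $a, b \geq 0$ --- the latter by $u_i \in \cone(v_0, v_i) \subseteq \cone(v_0, v_{i+1})$, since $v_i$ lies in the wedge $\cone(v_0, v_{i+1})$.

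The construction then branches on the sign of $b + \alpha$. When $b + \alpha \geq 0$, I set $u_{i+1} := u_i + \Sigma = (a+\beta) v_0 + (b+\alpha) v_{i+1} \in \poscone(v_0, v_{i+1})$, so that $u_{i+1} - c_{i+2} = (u_i - c_{i+1}) - \Delta_0$ yields $\norm{u_{i+1} - c_{i+2}} \leq E_i + nM$. When $b + \alpha < 0$, I instead set $u_{i+1} := c_{i+2}[2] \cdot v_0 \in \cone(v_0) \subseteq \poscone(v_0, v_{i+1})$; the error then reduces to $c_{i+2}[1]$, which I bound by unpacking
\[c_{i+2}[1] = (b+\alpha)\, v_{i+1}[1] + (c_{i+1}[1] - u_i[1]) + \Delta_0[1],\]
noting that $(b+\alpha)\, v_{i+1}[1] < 0$ and $|c_{i+1}[1] - u_i[1]| \leq E_i$ give $c_{i+2}[1] \leq E_i + nM$.

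The main obstacle is this second case: the naive shift $u_i + \Sigma$ can leave the cone $\poscone(v_0, v_{i+1})$, and salvaging the witness requires exploiting the non-negativity $c_{i+2}[1] \geq 0$ to absorb the ``leftward overshoot'' $|b+\alpha|\, v_{i+1}[1]$ into the error budget $E_i + nM$. A brief sanity check also gives $\poscone(v_0, v_{i+1}) = \cone(v_0, v_{i+1})$: this follows since both $v_0$ and $v_{i+1}$ are semi-positive, the semi-positivity of $v_{i+1}$ being inherited from the sequence $v_2 \clockwise v_3 \clockwise \cdots \clockwise v_m \clockwise t$ lying in the first quadrant, itself a consequence of the earlier choices in the construction.
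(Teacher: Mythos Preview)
Your argument is correct and follows the same inductive strategy as the paper: the base case is \cref{claim:1}, and the step decomposes the infix from $c_{i+1}$ to $c_{i+2}$ into a short residual path plus simple cycles whose effects, by minimality of $c_{i+2}$, lie in the half-plane $\ccangle{-v_{i+1}}{v_{i+1}}$, then shifts the inductive witness by the cycle sum. The only divergence is in the repair when the shifted point $u_i+\Sigma$ leaves the target cone: the paper handles this in one line by ``translating $u$ towards $c_{i+1}$ until it enters $\Rnonneg^2$,'' whereas your basis decomposition in $\{v_0,v_{i+1}\}$ and case split on the sign of $b+\alpha$ (falling back to the point $(0,c_{i+2}[2])$ and bounding $c_{i+2}[1]$ directly when $b+\alpha<0$) makes membership in $\poscone(v_0,v_{i+1})$ fully explicit.
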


\begin{proof}
By induction on $i$. The induction base is exactly \cref{claim:1}.
For the induction step, we are going to show that
$\poscone(v_0, v_i)$ contains a vector $u$ such that $\norm{u - c_{i+1}} \leq P(nM)^n + (i+1) n M$.
Decompose the infix of $\sigma'$ which starts in $c_i$ and ends in $c_{i+1}$ into simple cycles, 
plus the remaining path $\bar \rho$ of length at most $n$.
The norm of the effect $\bar v$ of $\bar \rho$ is hence bounded by $n M$, and we have
$$c_{i+1} \ =  \ c_i + s + \bar v,$$
where $s$ is the sum of effects of all the simple cycles. 
By the definition of $v_{i+1}$, the effects of all the simple cycles belong to the half-plane
$\ccangle{-v_i}{v_i}$, and hence
there belongs $s$. 
By induction assumption there is $u' \in \poscone(v_0, v_{i-1})$ such that $\norm{u' - c_i} \leq P(nM)^n + inM$. 
As $v_{i-1} \clockwise v_i$, we also have $u' \in \poscone(v_0, v_i)$.
Consider the point   
$$u \ := \ u' + s$$
which necessarily belongs to the half-plane $\ccangle{-v_i}{v_i}$ but 
not necessarily to $\poscone(v_0, v_i) = \ccangle{-v_i}{v_i} \cap \Rnonneg^2$.
Ignoring this issue, by routine calculations we get
\[
\norm{u - c_{i+1}}  \ = \ \norm{u' + s - c_i - s - \bar v} \ \leq \ \norm{u' - c_i} + \norm{\bar v} \leq \norm{u' - c_i} + n M
\leq P(n M)^n + (i+1) n M 
\]
as required for the induction step.
Finally, if $u \notin\Rnonneg^2$, translate $u$ towards $c_{i+1}$ until it enters the non-negative orthant $\Rnonneg^2$;
clearly, the translation can only decrease the value of $\norm{u - c_{i+1}}$.
\end{proof}
Applying the claim to $i = m$, and knowing that $m \leq (nM+1)^2 + 1$, 
we get some point $u \in \poscone(v_0, v_m)$ such that 
$\norm{u - t} \leq P(nM)^n + ((nM+1)^2 + 1) \cdot n M \leq P(nM)^n + (nM+1)^3$.
Furthermore, relying on the assumptions that $t$ lies outside of all $p(nM)^n$-belts and 
that $v_1 \clockwise t$ we prove, similarly as in the proof of \cref{claim:a}, that $v_1 \clockwise u$ and hence
the point $u$ belongs also to $\poscone(v_1, v_m)$.
This completes the proof of \cref{lem:run-2d}.
\qed

\section{Dichotomy in Action} \label{sec:app}

This section illustrates applicability of \cref{lem:run-2d}.
As before, we use symbols $n$ and $M$ for the number of control states, and the norm of a \twovass, respectively.
As the first corollary we provide a pumping lemma for \twovass: in case of thin runs apply, essentially,
pumping schemes of \onevass, and in case of thick runs use the cycles enabled along a run.
As another application, we derive an alternative proof of the exponential run property for \twovass.
\begin{theorem}[Pumping] \label{thm:pumping}
There is a polynomial $p$ such that every $\zero$-run $\tau$ in a \twovass of length greater that $p(n M)^n$ 
factors into
$
\tau \ = \ \tau_0 \, \tau_1 \, \ldots \, \tau_k
$
($k \geq 1$),
so that for some non-empty cycles $\alpha_1, \ldots, \alpha_k$ of length at most $p(n M)^n$, the path
$
\tau_0 \, \alpha_1^i \, \tau_1 \, \alpha_2^i \, \ldots, \, \alpha_k^i \, \tau_k
$
is a $\zero$-run for every $i\in\N$.
Furthermore, the lengths of $\tau_0$ and $\tau_k$ are also bounded by $p(n M)^n$.
\end{theorem}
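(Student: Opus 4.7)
The plan is to apply the Thin-Thick Dichotomy (\cref{lem:run-2d}) to the $\zero$-run $\tau$, taking the polynomial $p$ in the present theorem to dominate the one provided by \cref{lem:run-2d}. Thus every $\zero$-run of length greater than $p(nM)^n$ is either $p(nM)^n$-thin or $p(nM)^n$-thick, and the two cases are handled separately.

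For the thick case, write $\tau = \rho\,\rho'$ as in the definition of thick run, with cycles $\pi_1, \ldots, \pi_4$ of length at most $p(nM)^n$ sequentially enabled in $\rho$ (with effects $v_1, \ldots, v_4$) and cycles $\pi'_1, \ldots, \pi'_4$ sequentially enabled in $\rev(\rho')$ (with effects $v'_1, \ldots, v'_4$). Non-triviality of $\poscone(v_1, \ldots, v_4) \cap \poscone(v'_1, \ldots, v'_4)$ yields non-negative rationals $a_j, b_j$, not all zero, with $\sum_j a_j v_j = \sum_j b_j v'_j$ and all partial sums of the two expansions in $\N^2$, as required by the two sequential cones. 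Clearing denominators produces integer coefficients. We take $\alpha_j := \pi_j^{a_j}$ for $j = 1, \ldots, 4$ inserted at the enabling positions in $\rho$, and $\alpha_{4+j}$ as the reversal of $\pi'_j$ (a cycle in $V$ with effect $-v'_j$), repeated $b_j$ times, inserted at the corresponding position in $\rho'$. Pumping each by $i$ preserves the $\zero$ target thanks to the balance $\sum_j a_j v_j = \sum_j b_j v'_j$; the sequential-cone conditions guarantee non-negativity along the pumped $\rho$, and symmetrically along the pumped $\rho'$ when viewed in reverse.

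For the thin case, every configuration of $\tau$ lies in some $p(nM)^n$-belt. Since the number of distinct such belts is polynomial in $p(nM)^n$, a sufficiently long thin run must contain a long sub-run whose configurations all lie in a single belt $\belt{v}{W}$. Within such a sub-run, the coordinate perpendicular to $v$ is bounded by $W$ and can be absorbed into an enlarged state space of size at most $n \cdot (W+1)$, reducing the dynamics to a $1$-\vass on the parallel coordinate. Standard $1$-\vass pumping techniques then locate a pair of short cycles inside this sub-run whose perpendicular effects are both zero and whose parallel effects are opposites: one positioned early with positive parallel effect, and one positioned later with negative parallel effect. These become two of the $\alpha_j$; pumping them jointly shifts the parallel counter upward between the two positions (preserving non-negativity there, as the shift is positive along $v$) and leaves the counter values of the rest of $\tau$ unchanged.

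The main obstacle lies in the thin case. The paired pumpable cycles must have \emph{exactly} compensating parallel effects, and their perpendicular effects must both be zero, so that arbitrary pumping by $i$ preserves non-negativity for every $i \in \N$. Moreover, a thin run may cross many belts, and we must extract a sufficiently long sub-run within a single belt. Finally, the bounds on $\size{\tau_0}$ and $\size{\tau_k}$ are obtained by applying the same dichotomy locally at each end: if the prefix preceding $\alpha_1$ exceeds $p(nM)^n$, then either a configuration repeats within it by pigeon-hole on bounded-norm configurations, yielding an extra zero-effect pumpable cycle, or \cref{lem:first-cycle} locates an earlier semi-positive cycle that can serve as $\alpha_1$ instead.
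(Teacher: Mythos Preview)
Your overall architecture---apply the dichotomy and treat thin and thick runs separately---is exactly the paper's. But both branches, as written, have real gaps.

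\textbf{Thick case.} The claim that ``the sequential-cone conditions guarantee non-negativity along the pumped $\rho$'' is not justified. By definition, $\pi_2,\pi_3,\pi_4$ are only $\emptyset$-enabled (or $\{j\}$-enabled) at their insertion points $c_j$, so each $\pi_j$ may dip below zero from $c_j$; the deficit is some $e_j\in\N^2$. For $\pi_2^{a_2}$ to be a run from $c_2 + a_1 v_1$ you need $a_1 v_1 \geq e_2$ (and similarly $a_1 v_1 + a_2 v_2 \geq \max(e_2,e_3)$, etc.), not merely the sequential-cone condition $a_1 v_1 \geq 0$. The paper encodes exactly these offset constraints as a linear system $\mathcal C$ and observes that any solution of the homogeneous system $\mathcal C^{\zero}$ (which is what you produce from the cone intersection) can be \emph{scaled} to a solution of $\mathcal C$. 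You also omit the step that controls the \emph{length} of the $\alpha_j$: after clearing denominators you have integer $a_j$, but no bound on them. The paper invokes a small-solution bound for linear systems (\cref{lem:sol}) to get $a_j$ polynomial in $A=p(nM)^n$, which is what makes $|\alpha_j|\le p(nM)^n$ true.

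\textbf{Thin case.} The reduction ``absorb the perpendicular coordinate into an enlarged state space of size $n\cdot(W{+}1)$'' does not work as stated: the perpendicular coordinate (distance to $\ell_v$) is not integer-valued, so it does not take $W{+}1$ values. The paper realizes your intended idea differently: it first shows (\cref{claim:independent-belts}) that outside a bounded square $S$ the run cannot change belts, then tiles the belt $\belt{v}{A}$ by translates $\mathcal B_i = \mathcal B_0 + i\cdot v$, each containing at most $A^2$ integer points. Pigeonhole on the pair (control state, position-within-block) along the ascending and descending excursions then yields two cycles whose effects are $d\cdot v$ and $-d\cdot v$. This is the rigorous substitute for your $1$-\vass reduction, and it immediately gives the required ``perpendicular effect zero, parallel effects opposite'' pair with the stated length bound. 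Your sketch for bounding $|\tau_0|$ and $|\tau_k|$ via \cref{lem:first-cycle} is also off-target; in the thick case the paper simply uses that both coordinates along $\rho_1$ are bounded by $A$ (part of the definition of $A$-sequentially enabled), so a long $\rho_1$ repeats a configuration and yields an extra zero-effect cycle; in the thin case one applies the block argument to the \emph{first} and \emph{last} belts in which the norm bound is exceeded.
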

\begin{theorem}[Exponential run] \label{thm:short-run}
There is a polynomial $p$ such that for every $\zero$-run $\tau$ in a \twovass, there is a $\zero$-run
of length bounded by $p(n M)^n$ with the same source and target as $\tau$.
\end{theorem}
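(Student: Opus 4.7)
My plan is to apply the Thin--Thick Dichotomy (\cref{lem:run-2d}) to $\tau$ and handle the two cases separately. Set $A := p(nM)^n$ for the polynomial $p$ of the dichotomy.

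In the thin case, every configuration of $\tau$ lies in some $A$-belt $\belt{v}{W}$ with $\norm{v}, W \le A$, and there are only polynomially many candidate belts. I would reduce the problem to a $1$-VASS by augmenting the state space: a new state records the current belt identifier together with the (integer-encoded) perpendicular offset to the belt axis, which is polynomially bounded; a single counter then tracks the position along the axis. The resulting $1$-VASS has polynomially many (in $A$ and $n$) states and polynomial transition norm. Applying the polynomial bound on shortest reachability witnesses for $1$-VASS (Chistikov et al.~\cite{DBLP:conf/fossacs/ChistikovCHPW16}) yields a $\zero$-to-$\zero$ run of length at most $p'(nM)^n$ for a suitable polynomial $p'$, which lifts back to a run in the original \twovass.

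In the thick case, the dichotomy supplies a decomposition $\tau = \rho\rho'$ together with cycles $\pi_1,\ldots,\pi_4$ of length at most $A$, $A$-sequentially enabled in $\rho$ at configurations $c_1,\ldots,c_4$, and analogous cycles $\pi'_1,\ldots,\pi'_4$ in $\rev(\rho')$, with $\poscone(v_1,\ldots,v_4)\cap\poscone(v'_1,\ldots,v'_4)$ non-trivial. By \cref{claim:seqcone} each sequential cone is a two-generator cone of non-negative vectors, so their intersection contains a non-zero integer vector $w = \sum_i k_i v_i = \sum_i k'_i v'_i$ with coefficients $k_i,k'_i$ bounded by $p''(nM)^n$ for some polynomial $p''$. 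I would then assemble the short $\zero$-run as follows: a polynomial-length prefix from $\zero$ to $c_1$ (feasible because $\rho_1$ keeps both coordinates within $[0,A]$, so only polynomially many configurations can appear there and the shortest $\zero$-to-$c_1$ path within this bounded region is polynomial); then iterate $\pi_1^{k_1}$ at $c_1$, traverse a short control-state shortcut of length at most $n$ to reach the state of $c_2$ (the counters stay non-negative because $\pi_1^{k_1}$ has raised them well above $nM$), execute $\pi_2^{k_2}$, and continue analogously through $\pi_3$ and $\pi_4$; finally perform the mirror-image construction from the target $\zero$-configuration using $\pi'_i$ cycles. The identity $\sum k_i v_i = \sum k'_i v'_i$ ensures the two halves' effects cancel, so the concatenation is a genuine $\zero$-to-$\zero$ run.

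I expect the main obstacle to lie in the gluing step of the thick case: verifying rigorously that each short transition between successive cycle points can be realized with both counters staying non-negative, and that the two halves meet at a common intermediate configuration. Two facts should make this go through: first, the semi-positivity of $v_1$ (and $v'_1$) lets $\pi_1^{k_1}$ raise the counters well above $nM$, so arbitrary length-$n$ control-state shortcuts become safe, and the relaxed $\{j\}$-enabledness of $\pi_2$ upgrades to full enabledness once coordinate $j$ has been raised sufficiently; second, the cone-intersection condition guarantees a consistent meeting configuration of polynomially bounded norm. Summing prefix, cycle-iteration blocks, connecting shortcuts, and symmetric suffix, the total length stays polynomial in $A$ and hence bounded by $p'''(nM)^n$ for a suitable polynomial $p'''$, completing the theorem.
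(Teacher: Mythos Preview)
Your thick-case construction has a genuine gap at precisely the point you flag as the ``main obstacle''. You replace the path segments $\rho_2,\rho_3,\rho_4,\rho_5$ (and their primed counterparts) by length-$n$ control-state shortcuts, but those shortcuts carry effects unrelated to the original segments. The identity $\sum_i k_i v_i = \sum_i k'_i v'_i$ equates only the \emph{cycle} contributions; it says nothing about the prefix to $c_1$, the shortcut effects, or the corresponding pieces on the $\rho'$ side. Hence there is no reason the forward half and the backward half meet at a common configuration --- in fact already the control states need not agree, since you have discarded $\rho_5$ and $\rho'_5$, so the forward part ends in the state of $c_4$ while the backward part ends in the state of $c'_4$. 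The paper's proof addresses exactly this difficulty and needs substantially more machinery than a single common cone vector: it keeps (contracted versions of) the original segments $\rho_2,\ldots,\rho_5,\rho'_5,\ldots,\rho'_2$, shrinks each of them modulo a carefully chosen integer $m$ so that the residual effect mismatch between the two halves is componentwise a multiple of $m$, and then proves (\cref{claim:shift}) that every vector of the form $(\pm m,0)$ or $(0,\pm m)$ is realizable as a \emph{shift}, i.e.\ as a difference of pumped cycle totals on the two sides. This shift mechanism --- crucially using that the cone intersection is \emph{non-trivial}, not merely non-empty --- is what lets one absorb an arbitrary bounded discrepancy; your proposal has no analogue of it.

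Your thin case is also more delicate than presented. The definition of $A$-thin permits different configurations to lie in \emph{different} $A$-belts, and the run may switch belts (this can happen only inside a bounded square, cf.~\cref{claim:independent-belts}, but it does happen), so a single belt identifier in the state is not enough. Moreover, ``position along the axis'' of a belt with a non-axis-parallel direction $v$ is not an integer, so it cannot serve directly as the counter of a $1$-VASS; you would need an integer surrogate such as $v[2]\cdot x - v[1]\cdot y$ together with a careful treatment of belt changes. The paper avoids this encoding altogether: outside a polynomially bounded square the run stays in one belt, and there one extracts two infix cycles of opposite effect that can be excised (\cref{claim:2loopsagain}), iterating until all configurations have polynomially bounded norm.
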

We fix from now on a \twovass $V = (Q, T)$ and the polynomial $p$ of \cref{lem:run-2d}. Let $A = p(n M)^n$.
Both proofs proceed separately for thin and thick runs $\tau$. The former (fairly standard) 
case is moved to the appendix, so assume below $\tau$ to be $A$-thick. 
The polynomials required in \cref{thm:pumping,thm:short-run} can be read out from the constructions.

We rely on the standard tool, cf.~Prop.~2 in~\cite{taming}
(the norm of a system of inequalities is the largest absolute value of its coefficient, 
and likewise we define the norm of a solution):
%
%\subparagraph{Case 1: The run $\tau$ is $A$-thin.}
%%
%
%\subparagraph{Case 2: The run $\tau$ is $A$-thick.}
%
% !TEX root = main.tex
%
%The following lemma will serve us as a convenient tool:
%
\begin{lemma} \label{lem:sol}
Let $\U$ be a system of $d$ linear inequalities  of norm $M$ with $k$ variables.
Then the smallest norm of a non-negative-integer solution of $\U$ is in ${\cal O}(k \cdot M)^d$.
%\sla{sformulowac dla Cone a nie dla ukladu rownan}
\end{lemma}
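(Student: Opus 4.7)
The plan is to combine a support-reduction argument with Hadamard-type determinantal bounds. Since we only need to exhibit \emph{some} non-negative integer solution with small norm---the minimum-norm one being no larger---it suffices to produce a ``sparse'' such solution and bound its entries via Cramer's rule.

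First, I would convert $\U$ into standard equality form $A y = b$ over $y \in \N^{k+d}$ by introducing one non-negative slack variable per inequality. The resulting matrix $A \in \Z^{d \times (k+d)}$ and vector $b \in \Z^d$ both have norm at most $M$, and feasibility of $A y = b$ over $\N^{k+d}$ is equivalent to feasibility of $\U$ over $\N^k$. The next step is to establish that, assuming feasibility, some non-negative integer solution $y^*$ has support of cardinality at most $d$. This follows by a Carath\'eodory-type argument: if $y$ is feasible and $|\textup{supp}(y)| > d$, then the columns of $A$ indexed by $\textup{supp}(y)$ are linearly dependent over $\Q$, so there is a non-zero integer vector $z$ supported on $\textup{supp}(y)$ with $A z = 0$. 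Choosing the sign of $z$ and a suitable positive scalar $\alpha$, the vector $y - \alpha z$ remains feasible, non-negative, and has strictly smaller support. Iterating yields a feasible $y^*$ with $|\textup{supp}(y^*)| \leq d$.

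Once the support is bounded by $d$, the restricted subsystem $A' y' = b$ involves $A'$ of size $d \times s$ with $s \leq d$, and one solves it by Cramer's rule: each component of $y'$ is a ratio of $d \times d$ determinants of submatrices of $[A' \mid b]$. By Hadamard's inequality, every such determinant is bounded in absolute value by $(\sqrt{d}\, M)^d = d^{d/2} M^d$. Since the denominator is a non-zero integer, every entry of $y^*$ is bounded by $d^{d/2} M^d$, which lies in ${\cal O}(k M)^d$.

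The main obstacle is ensuring the support-reduction step preserves integrality, since the reduction scalar $\alpha$ need not be an integer. This is handled by passing to the kernel lattice of the column-subsystem (for instance via Hermite normal form), which admits generators with entries bounded by Cramer's rule; integer reductions supported on $\textup{supp}(y)$ are then available and strictly decrease the support. After finitely many such reductions, a sparse integer solution is obtained, and the Hadamard bound on the restricted subsystem yields the stated norm estimate.
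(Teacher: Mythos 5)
The paper does not actually prove this lemma --- it imports it as Prop.~2 of the cited reference --- so your argument has to stand on its own, and it does not: the support-reduction step is false. You claim that a feasible slack-augmented system $Ay=b$ over $\N^{k+d}$ (with $d$ rows) always admits a non-negative \emph{integer} solution of support at most $d$. Already for $d=1$, $k=2$ this fails: the single inequality $2x_1+3x_2\ge 5$ becomes $2x_1+3x_2-s=5$ over $\N^3$, and a support-$1$ solution would require one of $2x_1=5$, $3x_2=5$, $-s=5$ to be solvable in $\N$, which none is, although $(1,1,0)$ is feasible. The Carath\'eodory reduction you describe is precisely the argument that a feasible \emph{rational} system has a basic solution with at most $d$ nonzero entries; over the integers, the obstacle you flag in your last paragraph is not a technicality that Hermite normal form absorbs but the reason the claim fails --- in the example no integer kernel vector moves $(1,1,0)$ to a non-negative solution of smaller support, because no such solution exists. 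Since the Cramer/Hadamard estimate is then applied to this non-existent sparse solution, the bound is not established. (The known substitute, due to Eisenbrand and Shmonin, only gives support $O(d\log(dM))$, which is too large for your Cramer step.)

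A correct elementary route keeps your determinant bounds but applies them to different objects. Decompose the rational polyhedron $P=\{x\in\Q_{\geq 0}^k : Ax\ge b\}$ (the matrix form of $\U$) as $P=Q+C$, where $Q$ is the convex hull of the vertices of $P$ and $C$ is its recession cone; by Cramer and Hadamard the vertices of $P$, and a finite set of \emph{integer} generators $g_1,\dots,g_r$ of $C$ obtained from the homogeneous system, all have norm bounded by a quantity of the promised shape. Given any non-negative integer solution $z$, write $z=v+\sum_i\lambda_i g_i$ with $v\in Q$, $\lambda_i\ge 0$, and (by Carath\'eodory, now legitimately applied over $\Q$) at most $k$ nonzero $\lambda_i$; then $z-\sum_i\lfloor\lambda_i\rfloor g_i$ is again a non-negative integer solution of $\U$, of norm at most $\norm{v}+k\max_i\norm{g_i}$. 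This is the classical argument of von zur Gathen--Sieveking and Borosh--Treybig (see Pottier for the precise $O(kM)^d$ bookkeeping with $d$ the number of constraints), and it is what Prop.~2 of the paper's reference encapsulates.
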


Consider a split $\tau = \rho \rho'$, where 
$\rho = \rho_1 \, \rho_2 \, \rho_3 \, \rho_4 \, \rho_5$ and $\rho' = \rho'_5 \, \rho'_4 \, \rho'_3 \, \rho'_2 \, \rho'_1$,
as well as cycles $\pi_1, \ldots, \pi_4$ and $\pi'_1, \ldots, \pi'_4$ given by the definition of thick run.
Let $v_1, \ldots, v_4$ and $v'_1, \ldots, v'_4$
be the respective effects of $\pi_1, \ldots, \pi_4$ and $\pi'_1, \ldots, \pi'_4$.
For $j = 1, \ldots, 4$ let $c_j = \trg(\rho_j)$ and for $j = 2, \ldots, 4$ let $e_j \in\N^2$ % (resp.~$e'_2, e'_3, e'_4\in\N^2$) 
be the minimal non-negative vector such that the configuration $c_j + e_j$ enables cycle $\pi_j$.
%  (resp.~$\rev(\pi'_2), \rev(\pi'_3), \rev(\pi'_4)$).
We define the following system $\U$
%_{\pi_1 \pi_2 \pi_3 \pi_4}$ 
of linear inequalities with 6 variables $a_1, a_2, a_3, a_4, x, y$
($\max$ is understood point-wise):
\begin{align} 
%\begin{aligned}
a_1 v_1 & \, \geq \, e_2  \label{eq:U1} \\
a_1 v_1 + a_2 v_2 & \, \geq \, \max(e_2,e_3) \label{eq:U2} \\ 
a_1 v_1 + a_2 v_2 + a_3 v_3 & \, \geq \, \max(e_3, e_4) \label{eq:U3} \\
a_1 v_1 + a_2 v_2 + a_3 v_3 + a_4 v_4 & \, = \, (x, y) \, \geq \, e_4 \label{eq:U4} 
%a_1 v_1 + a_2 v_2 + a_3 v_3 + a_4 v_4 & \, = \, (x, y) \label{eq:Uxy}
%\end{aligned}
\end{align}
(Observe that when $v_1[j] = 0$, i.e., in case when $v_1$ is vertical or horizontal,
$e_j = 0$ and therefore one of the two first inequalities is always satisfied, namely
$a_1 v_1[j]  \, \geq \, e_2[j]$.)
Likewise, we have a system of inequalities $\U'$
%_{\rev(\pi'_1) \rev(\pi'_2) \rev(\pi'_3) \rev(\pi'_4)}$ 
with 6 variables $a'_1, a'_2, a'_3, a'_4, x', y'$.
Observe that the sequential cone 
$\poscone(v_1, v_2, v_3, v_4)$ contains exactly
(projections on $(x, y)$ of) non-negative rational solutions of the modified system 
$\U^{\zero}$ %_{\pi_1 \pi_2 \pi_3 \pi_4}$
obtained 
%from $\U_{\pi_1 \pi_2 \pi_3 \pi_4}$ 
by replacing all the right-hand sides
% of~\eqref{eq:U1}--\eqref{eq:U4} 
with $\zero$. 
Likewise we define ${\U'}^{\zero}$.
%_{\rev(\pi'_1) \rev(\pi'_2) \rev(\pi'_3) \rev(\pi'_4)}$.
%
Finally, we define the compound system $\C$ by enhancing the union 
of $\U$
%_{\pi_1 \pi_2 \pi_3 \pi_4}$ 
and $\U'$
%_{\rev(\pi'_1) \rev(\pi'_2) \rev(\pi'_3) \rev(\pi'_4)}$ 
with two additional equalities (likewise we define the system $\C^{\zero}$)
\begin{align} \label{eq:xy}
(x, y) = (x', y').
\end{align}
\begin{claim} \label{claim:Usol}
$\C$ admits a non-negative integer solution $(a_1, a_2, a_3, a_4, x, y, a'_1, a'_2, a'_3, a'_4, x', y')$.
\end{claim}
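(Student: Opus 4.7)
The plan is to establish feasibility of $\C$ over the non-negative rationals, and then clear denominators to obtain the required integer solution. The argument combines a non-trivial solution of the homogeneous system $\C^\zero$ with base enabling solutions of $\U$ and $\U'$ constructed separately. Condition~3 of the $A$-thick run definition provides a non-zero vector $(X_0, Y_0) \in \poscone(v_1, v_2, v_3, v_4) \cap \poscone(v'_1, v'_2, v'_3, v'_4)$, and by the definition of sequential cones there exist non-negative rationals $a_1^\circ, \ldots, a_4^\circ$ and $a_1'^\circ, \ldots, a_4'^\circ$ expressing $(X_0, Y_0)$ in the two cones with all partial sums non-negative; this yields a rational solution of $\C^\zero$.

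Next, I would build a base enabling solution of $\U$ by sequentially pumping the cycles $\pi_1, \pi_2, \pi_3, \pi_4$ from the thick run. Inserting enough copies of $\pi_1$ at $c_1$ achieves $a_1 v_1 \geq e_2$: the parenthetical observation in the setup (the $\{j\}$-enabling of $\pi_2$ forces $e_2[j] = 0$ whenever $v_1[j] = 0$) guarantees that this inequality is always satisfiable, and pumping $\pi_1$ is harmless because $v_1$ is semi-positive. Iterating this argument through $\pi_2, \pi_3, \pi_4$ yields non-negative integers $(b_1, \ldots, b_4)$ satisfying all four inequalities of $\U$ for some target $(x^\bullet, y^\bullet) \in \poscone(v_1, v_2, v_3, v_4)$; the symmetric argument for $\rev(\rho')$ yields $(b_1', \ldots, b_4')$ solving $\U'$ with target $(x'^\bullet, y'^\bullet)$.

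Finally, I would combine the two by setting $a_i := b_i + \lambda\, a_i^\circ$ and $a_i' := b_i' + \lambda'\, a_i'^\circ$ for non-negative rationals $\lambda, \lambda'$. Since any $\C^\zero$-solution has non-negative partial sums, these additions only inflate the partial sums of the $b$-solutions, so all inequalities of $\U$ and $\U'$ remain satisfied. The resulting targets are $(x^\bullet, y^\bullet) + \lambda (X_0, Y_0)$ and $(x'^\bullet, y'^\bullet) + \lambda' (X_0, Y_0)$, which I would make coincide by choosing $\lambda, \lambda'$ appropriately and using a second linearly independent shift direction from the intersection when it is 2-dimensional; clearing denominators then produces a non-negative integer solution of $\C$. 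The main obstacle is this last matching step when $\poscone(v_1, v_2, v_3, v_4) \cap \poscone(v'_1, v'_2, v'_3, v'_4)$ is only a 1-dimensional ray, since then $(x'^\bullet - x^\bullet, y'^\bullet - y^\bullet)$ must already be proportional to $(X_0, Y_0)$; resolving this requires an additional geometric argument, exploiting that the ray lies on the boundary of both cones together with the semi-positivity of $v_1$ and $v_1'$ to place the base values on the ray (or equivalently, to absorb the transverse component using further pumping of $\pi_3, \pi_4$ or of $\pi_3', \pi_4'$).
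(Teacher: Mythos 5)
Your strategy is workable but considerably more roundabout than the paper's. The paper's proof is essentially two sentences: pick one non-zero point of $\poscone(v_1,\ldots,v_4)\cap\poscone(v'_1,\ldots,v'_4)$; its two representations in the two sequential cones give a non-negative rational solution of the homogeneous system $\C^{\zero}$ in which the equality $(x,y)=(x',y')$ holds \emph{automatically}, because both sides equal that common point. Scaling this single solution up by a large integer clears denominators and dominates the fixed right-hand sides, yielding an integer solution of $\C$. By splitting instead into separate particular solutions of $\U$ and $\U'$ with unrelated targets plus homogeneous corrections, you manufacture the target-matching problem that you then have to fight; the paper never encounters it.

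On that matching problem: the ``main obstacle'' you flag --- the intersection being a one-dimensional ray --- cannot occur. Condition~3 of the definition of an $A$-thick run requires the intersection to be \emph{non-trivial}, and in the paper's terminology a trivial cone is exactly one of the form $\ell_v$ (and indeed the paper later exploits that the non-trivial intersection is a genuinely two-dimensional cone, e.g.\ in the proof of \cref{claim:shift}). So the intersection contains two linearly independent directions, and your own two-dimensional argument (realizing the difference of the base targets as $(\lambda_1-\lambda'_1)w_1+(\lambda_2-\lambda'_2)w_2$ with all coefficients non-negative on the appropriate side) already closes the proof; the extra geometric argument you call for is vacuous. What your longer route does buy is an explicit treatment of the inhomogeneous right-hand sides via the base enabling solutions, where the paper's ``scale up'' step tacitly assumes the relevant partial sums of the homogeneous solution are strictly positive wherever the corresponding right-hand side is; but granting that, the paper's single-point argument is both shorter and structurally cleaner.
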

\begin{proof} 
The system $\C^{\zero}$ admits a non-negative rational solution as  
the intersection of the cones $\poscone(v_1, v_2, v_3, v_4)$ and $\poscone(v'_1, v'_2, v'_3, v'_4)$
is non-empty by assumption.
%Furthermore, the intersection is even non-trivial, which guarantees that 
%$\U^{\zero}$ admits a solution making all left-hand sides positive.
As intersection of cones is stable under multiplications by non-negative rationals,
the solution can be scaled up arbitrarily, to yield a non-negative integer one, and even a 
non-negative integer solution of the stronger system $\C$.
\end{proof}
\begin{claim} \label{claim:run}
For every non-negative integer solution 
%\begin{align} \label{eq:sol}
%$(a_1, a_2, a_3, a_4, x, y, a'_1, a'_2, a'_3, a'_4, x', y')$
%\end{align}
of $\C$, for the cycles defined as $\a_j := \pi_j^{a_j}$ and $\a'_j := {(\pi'_j)}^{a'_j}$, for $j = 1, 2, 3, 4$,
the following path is a $\zero$-run:
\[
\rho_1 \; \alpha_1 \; \rho_2 \; \alpha_2 \; \rho_3 \; \alpha_3 \; \rho_4 \; \alpha_4 \; \rho_5 \; 
\rho'_5 \; \alpha'_4 \; \rho'_4 \; \alpha'_3 \; \rho'_3 \; \alpha'_2 \; \rho'_2 \; \alpha'_1 \; \rho'_1.
\]
\end{claim}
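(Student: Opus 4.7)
The assertion to verify is that the concatenated path is a $\zero$-run. Matching of control states at concatenation points is automatic because every $\alpha_j$ and $\alpha'_j$ is a cycle, and the $\rho_i, \rho'_i$ are consecutive segments of the original run $\tau$; hence only non-negativity of intermediate configurations must be checked. Two elementary facts will do all the work: \emph{monotonicity} of enabledness (a path enabled at $c$ is enabled at every $c' \geq c$), and the observation that for any fixed vector $v$ the map $i \mapsto u + i v$ is linear coordinate-wise, so its minimum over an interval $[0, N]$ is attained at an endpoint.

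I would process the forward half $\rho_1 \, \alpha_1 \, \rho_2 \, \alpha_2 \, \rho_3 \, \alpha_3 \, \rho_4 \, \alpha_4 \, \rho_5$ segment by segment. The prefix $\rho_1$ coincides with $\tau$, and $\alpha_1 = \pi_1^{a_1}$ is enabled at $c_1$ with semi-positive effect $v_1$, so it executes and leaves us at $c_1 + a_1 v_1$. Next, $\rho_2$ runs from $c_1 + a_1 v_1$: if $v_1$ is positive, monotonicity suffices; if $v_1$ is vertical or horizontal, the coordinate untouched by $v_1$ is unchanged from the original run (which kept it $\geq 0$) while the other has only grown, so $\rho_2$ is again enabled and reaches $c_2 + a_1 v_1$. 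For each $\alpha_j = \pi_j^{a_j}$ with $j \in \{2, 3, 4\}$, by definition of $e_j$ the cycle $\pi_j$ executes validly from any configuration $\geq c_j + e_j$; so I need $\sum_{k < j} a_k v_k + i v_j \geq e_j$ for all $i \in \{0, \ldots, a_j - 1\}$. Since this expression is linear in $i$, it suffices to verify the two endpoint cases $i = 0$ and $i = a_j$ of the larger interval $[0, a_j]$, and both are exactly the $e_j$-components of two adjacent inequalities of $\U$ --- namely \eqref{eq:U1} and the $e_2$-part of \eqref{eq:U2} for $j = 2$, the $e_3$-parts of \eqref{eq:U2}, \eqref{eq:U3} for $j = 3$, and the $e_4$-parts of \eqref{eq:U3}, \eqref{eq:U4} for $j = 4$. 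After $\alpha_4$ we are at $c_4 + (x, y)$ with $(x, y) \geq e_4 \geq 0$ by \eqref{eq:U4}, so $\rho_5$ runs by monotonicity and lands at $t + (x, y)$.

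The backward half $\rho'_5 \, \alpha'_4 \, \rho'_4 \, \alpha'_3 \, \rho'_3 \, \alpha'_2 \, \rho'_2 \, \alpha'_1 \, \rho'_1$ is handled symmetrically by transporting the argument into $\rev(V)$: the reversed path $\rev(\rho'_1) \, \rev(\alpha'_1) \, \cdots \, \rev(\alpha'_4) \, \rev(\rho'_5)$ starting at $(0,0)$ is a valid run in $\rev(V)$ ending at $t + (x', y')$, exactly by applying the previous paragraph to $\U'$. Reversing yields a valid run in $V$ from $t + (x', y')$ to $(0,0)$. The gluing equality~\eqref{eq:xy} forces $(x, y) = (x', y')$, so the two halves meet precisely at $t + (x, y)$, and the whole concatenation is therefore a $\zero$-run.

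The main subtlety I anticipate having to articulate carefully is why each of \eqref{eq:U2}, \eqref{eq:U3} carries a $\max$ on its right-hand side: the two components handle two distinct enabling obligations --- one keeping the previous cycle iteration valid throughout via the linearity-in-$i$ argument, and one enabling the next cycle after the intervening $\rho_{j+1}$. Once this two-purpose reading of each inequality is made explicit, the remaining verification is pure monotonicity book-keeping.
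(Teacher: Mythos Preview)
Your proposal is correct and follows essentially the same approach as the paper's proof: interpret each inequality of $\U$ as guaranteeing two enabling obligations (the last iteration of the current cycle and the first iteration of the next), invoke the symmetric argument in $\rev(V)$ for the second half, and glue via~\eqref{eq:xy}. Your version is more explicit than the paper's on two points---the linearity-in-$i$ reduction to the endpoints $i=0$ and $i=a_j$, and the monotonicity argument for the intervening segments $\rho_j$---but the underlying reasoning is identical.
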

\begin{proof}
The first two inequalities~\eqref{eq:U1} enforce that the first cycle $\pi_1$ is repeated sufficiently many $a_1$ times
so that $\pi_2$ is enabled in configuration $\trg(\rho_1 \, \alpha_1 \, \rho_2)$.
Then the next two 
inequalities~\eqref{eq:U2} enforce that $\pi_1$ and $\pi_2$ are jointly repeated sufficiently many 
$a_1$, $a_2$ times so that $\pi_2$ is still enabled after its last repetition 
(which guarantees that every of intermediate repetitions of $\pi_2$ is also enabled), 
and that $\pi_3$ is enabled in configuration $\trg(\rho_1 \, \alpha_1 \, \rho_2 \, \alpha_2 \, \rho_3)$. 
Likewise for~\eqref{eq:U3}.  
Finally, the inequalities \eqref{eq:U4} enforce that 
$\pi_1, \ldots, \pi_4$ are jointly repeated sufficiently many times so that $\pi_4$ is still enabled after its last repetition.  
Analogous argument, but in the reverse order, applies for the repetitions of $\pi'_4, \ldots, \pi'_1$.
Finally, %the two 
equalities~\eqref{eq:xy} ensure that the total effect of $\alpha_1, \ldots, \alpha_4$ is 
precisely compensated by the total effect of $\rev(\alpha'_1), \ldots, \rev(\alpha'_4)$.
%\Cref{claim:run} is thus proved.
\end{proof}

\begin{proof}[Proof of \cref{thm:pumping}] 
Consider a solution of $\C$. In particular the sum
$\eff(\alpha_1) + \ldots + \eff(\alpha_j)$, as well as 
$\eff(\rev(\alpha'_1)) + \ldots + \eff(\rev(\alpha'_j))$, is necessarily non-negative for every $j = 1, \ldots, 4$. 
Therefore, as a direct corollary of \cref{claim:run}, for every $i\in\N$ the path
\[
\rho_1 \, \alpha_1^i \, \rho_2 \, \alpha_2^i \, \rho_3 \, \alpha_3^i \, \rho_4 \, \alpha_4^i \, \rho_5 \, 
\rho'_5 \, (\alpha'_4)^i \, \rho'_4 \, (\alpha'_3)^i \, \rho'_3 \, (\alpha'_2)^i \, \rho'_2 \, (\alpha'_1)^i \, \rho'_1
\]
is also a $\zero$-run.
For bounding the lengths of cycles we use \cref{claim:Usol} and
apply \cref{lem:sol} to $\C$, to deduce that
$\C$ admits a non-negative integer solution of norm polynomial in $A = p(nM)^n$.
This, together with the bounds on lengths of cycles $\pi_1, \ldots, \pi_4$ and
$\pi'_1, \ldots, \pi'_4$ in the definition of $A$-thick run, 
entails required bounds on the lengths of the pumpable cycles.
Finally, the lengths of the extremal factors $\rho_1$ and $\rho'_1$ can be also bounded:
if $\rho_1$ (resp.~$\rho'_1$) is long enough it must admit a repetition of configuration, 
we add one more cycle determined by the first (resp.~last) such repetition,
thus increasing $k$ from 8 to 10.
%\Cref{thm:pumping} is proved.
\end{proof}

For proving \cref{thm:short-run} we will need a slightly more elaborate pumping.
By the definition of thick run, both coordinates are bounded by $A$ along $\rho_1$ and $\rho'_1$.
W.l.o.g.~assume that no configuration repeats in each of the two runs, and hence
their lengths are bounded by $A^2$.
%As the first step we are going to modify the remaining factors of $\rho$ and $\rho'$ in order to obtain a new
%run of length bounded as required; again,
%the specific polynomial $p$ can be read out from the construction described below.

Let $\C_\delta$ denote the union of of 
$\U$
%_{\pi_1 \pi_2 \pi_3 \pi_4}$ 
and $\U'$
%_{\rev(\pi'_1) \rev(\pi'_2) \rev(\pi'_3) \rev(\pi'_4)}$ 
enhanced, this time, by the two equalities
\begin{align*} \label{eq:xy2}
(x, y) + (\delta_x, \delta_y) = (x', y').
\end{align*}
The two additional variables $\delta_x, \delta_y$ describe, intuitively, possible differences
between the total effect of $\pi_1^{a_1}, \ldots, \pi_4^{a_4}$ and the total effect of 
$\rev(\pi'_1)^{a'_1}, \ldots, \rev(\pi'_4)^{a'_4}$.
The projection of any solution of $\C_\delta$ on variables $(\delta_x, \delta_y)$ we call below a \emph{shift}.
%and the remaining part of the solution we call \emph{support} of that shift. 
%
\begin{claim} \label{claim:shift}
For some non-negative integer $m$ bounded polynomially with respect to $A$, all the four vectors
$(0, m)$, $(m, 0)$, $(0, -m)$ and $(-m, 0)$ are shifts.
\end{claim}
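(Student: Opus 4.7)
My plan is to handle each of the four axis-aligned shifts individually and combine them at the end via a least common multiple. For each direction $\vec d \in \{(1,0), (-1,0), (0,1), (0,-1)\}$ I will produce a positive integer $m_{\vec d}$, polynomial in $A$, such that $m_{\vec d} \cdot \vec d$ is a shift; then setting $m := \operatorname{lcm}(m_{(1,0)}, m_{(-1,0)}, m_{(0,1)}, m_{(0,-1)})$ gives a single polynomially bounded $m$ that works for all four directions. The combination works because any positive integer multiple of a shift is again a shift: uniformly scaling the components $a_j, a'_j, x, y, x', y', \delta_x, \delta_y$ of an integer solution of $\C_\delta$ by a positive integer $k$ preserves every equality as well as every inequality in $\U, \U'$ (the left-hand side of each such inequality is already at least the non-negative right-hand side, so multiplying by $k \geq 1$ preserves the bound), while $(\delta_x, \delta_y)$ scales by $k$.

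For a single direction $\vec d$, I augment $\C_\delta$ with a fresh variable $m$, the equalities $\delta_x = m \vec d_x$ and $\delta_y = m \vec d_y$, and the inequality $m \geq 1$; call the resulting system $\C_\delta^{\vec d}$. It has a bounded number of variables and inequalities, each of norm polynomial in $A$. Once I show that $\C_\delta^{\vec d}$ admits a non-negative rational solution, \cref{lem:sol} produces a non-negative integer solution of norm polynomial in $A$; the value of $m$ in that solution is the desired $m_{\vec d}$.

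The heart of the proof is thus rational solvability of $\C_\delta^{\vec d}$. By the third clause of the thick-run definition, together with \cref{claim:seqcone} and \cref{lem:cone-dichotomy}, the intersection $K := \poscone(v_1, v_2, v_3, v_4) \cap \poscone(v'_1, v'_2, v'_3, v'_4)$ is a genuinely two-dimensional cone of the form $\cone(u^*, v^*)$ with non-colinear non-negative generators. Pick a point $p$ in the (relative) interior of $K$; both coordinates of $p$ are then strictly positive, since $K \subseteq \Rnonneg^2$ and the interior avoids the axes. For sufficiently large rational $\lambda$, both $\lambda p$ and $\lambda p - \vec d$ lie in the interior of $K$, hence simultaneously in $\poscone(v_1, v_2, v_3, v_4)$ and $\poscone(v'_1, v'_2, v'_3, v'_4)$; by further scaling the rational coefficients witnessing these memberships, the right-hand-side offsets $e_j$ in $\U, \U'$ can also be dominated. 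Setting $(x', y') := \lambda p$, $(x, y) := \lambda p - \vec d$, and $m := 1$, with the $a_j, a'_j$ witnessing the above, yields a non-negative rational solution of $\C_\delta^{\vec d}$.

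The main obstacle is precisely this rational solvability step for the negative directions $(-1, 0)$ and $(0, -1)$: for $\delta_x$ or $\delta_y$ to be strictly negative, $(x, y)$ must strictly exceed $(x', y')$ in a coordinate, which requires the two sequential cones to genuinely overlap in a two-dimensional open region. This is exactly what the thick-run assumption guarantees; were $K$ only one-dimensional, all shifts would be forced colinear with its direction, excluding axis-aligned shifts transverse to it.
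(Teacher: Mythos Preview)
Your approach is essentially the paper's: for each of the four axis directions you exploit two-dimensionality of the cone intersection to find two points differing by that direction, obtain a polynomially bounded integer shift in each direction via \cref{lem:sol}, and finish with an lcm. One small wording issue: \cref{lem:sol} bounds the smallest non-negative integer solution \emph{given that one exists}; it does not by itself convert a rational solution into an integer one, so you should insert the scaling step (clear denominators, then scale up past the $e_j$'s) before invoking it --- exactly as the paper does in the proof of \cref{claim:Usol}.
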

\begin{proof}
We reason analogously as in the proof of \cref{claim:Usol}, but this time we rely on the assumption that
intersection of the cones $\poscone(v_1, v_2, v_3, v_4)$ and $\poscone(v'_1, v'_2, v'_3, v'_4)$ is 
non-trivial, and hence contains, for some $v\in\Rpos^2$ and $a\in\Rpos$, the points $v$ and
$v + (0, a)$.
%, $v + (a, 0)$, $v + (0, -a)$ and $v + (-a, 0)$.
By scaling we obtain an integer point $v'\in\N^2$ and a non-negative integer $m_1\in\N$
so that $v'$ and $v' + (0, m_1)$ both belong to the intersection of cones.
Therefore the vector $(0, m_1)$ is a shift.
Likewise we obtain three other non-negative integers $m_2, m_3, m_4\in\N$
such that $(m_2, 0)$, $(0, -m_3)$ and $(-m_4, 0)$ are all shifts. 
Each of the integers $m_1, \ldots, m_4$ can be bounded polynomially in $A$ using \cref{lem:sol}.
As shifts are stable under multiplication by non-negative integers, 
it is enough to take as $m$ the least common multiple of the four integers.
\end{proof}
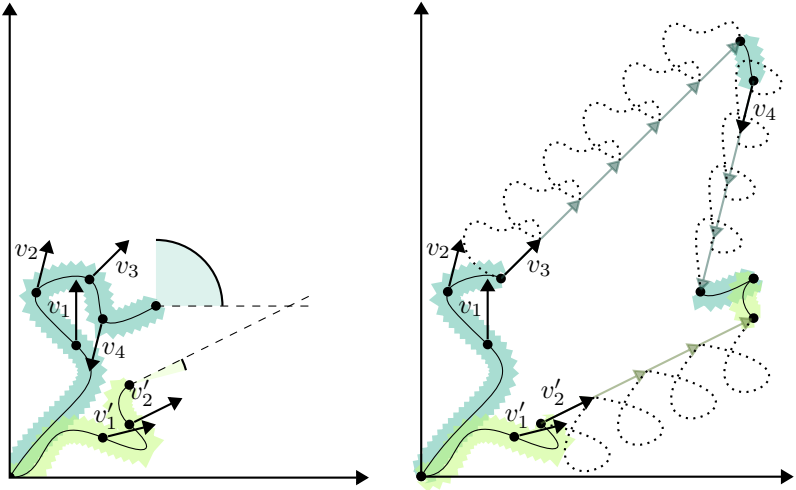
\begin{figure}[h]
    \vspace{-0.5\baselineskip}
    \centering{\begin{tikzpicture}
    \node[anchor=east] (S) at (0,0) {\begin{tikzpicture}[anchor=center,scale=0.35]
    % A. grid, belts and vectors
    \drawGridB{13.5}{18}
    % \node at (4,-0.6) {$K$};
    % \draw[black,thick,dashed] (4,-0.2) -- (4,12.5);

    % B. path
    \clip (0,0) rectangle (11.5,18);
    \begin{scope}[yshift=16cm]
    \begin{scope}[xscale=0.05,yscale=-0.05]

        \begin{scope}[even odd rule,xshift=-120cm,yshift=50cm]
            \clip (320,140) -- (320,0) -- (230,-10) -- (230,140) -- cycle;
            \draw[sequentialConeAngleA] (230,140) circle (50);
        \end{scope}
        \begin{scope}[even odd rule,xshift=-100cm,yshift=-50cm]
            \clip (380,250) -- (370,210) -- (190,300) -- cycle;
            \draw[sequentialConeAngleB] (190,300) circle (44);
        \end{scope}  

        \draw[runOutlineA,line width=2.5mm,decoration = {zigzag,segment length=1mm, amplitude=0.25mm},decorate]    (110,190)
            .. controls (100.5,196) and (74.5,209) .. (70,200)
            .. controls (65.5,191) and (69.5,177) .. (60,170)
            .. controls (50.5,163) and (21.5,172) .. (20,180)
            .. controls (18.5,188) and (16.5,188) .. (50,220)
            .. controls (83.5,252) and (34.5,264) .. (0,320) ;
        \draw[runOutlineB,line width=2.5mm,decoration = {zigzag,segment length=1mm, amplitude=0.25mm},decorate]    (0,320)
            .. controls (34.5,321) and (16.5,267) .. (70,290)
            .. controls (123.5,313) and (97.5,286) .. (90,280)
            .. controls (82.5,274) and (76.5,259) .. (90,250) ;

        \draw[runLineBlack]    (110,190) node[point] {}
            .. controls (100.5,196) and (74.5,209) .. (70,200) node[point] (p4) {}
            .. controls (65.5,191) and (69.5,177) .. (60,170) node[point] (p3) {}
            .. controls (50.5,163) and (21.5,172) .. (20,180) node[point] (p2) {}
            .. controls (18.5,188) and (16.5,188) .. (50,220) node[point] (p1) {}
            .. controls (83.5,252) and (34.5,264) .. (0,320) node[point] {} ;
        \draw[runLineBlack]    (0,320) node[point] {}
            .. controls (34.5,321) and (16.5,267) .. (70,290) node[point] (q1) {}
            .. controls (123.5,313) and (97.5,286) .. (90,280) node[point] (q2) {}
            .. controls (82.5,274) and (76.5,259) .. (90,250) node[point] {} ;

        \draw[vector]    (q2) -- node[vectorLabel,pos=0.5] {$v_2'$} +(230-190,280-300) ;
        \draw[vector,yshift=42.5cm]    
                            (p1)  -- node[vectorLabel,pos=0.5] {$v_1$}  +(50-50,170-220) ;
        \draw[vector]    (p2)  -- node[vectorLabel,pos=0.5] {$v_2$}  +(70-60,70-110) ;
        \draw[vector]    (p4) -- node[vectorLabel,pos=0.3] {$v_4$}  +(-10,180-140) ;
        \draw[vector]    (p3)  -- node[vectorLabel,pos=0.5,swap] {$v_3$}  +(200-170,40-70) ;
        \draw[vector]    (q1)  -- node[vectorLabel,pos=0.3] {$v_1'$} +(110-70,280-290) ;
    \end{scope}
    \end{scope}
    \draw[dashed,thin] (5.5,6.5) -- (11.5,6.5);
    \draw[dashed,thin] (4.5,3.5) -- ++(7,3.5);

\end{tikzpicture}};
    \node[anchor=west] (T) at (0.2,0)  {\begin{tikzpicture}[anchor=center,scale=0.35]
    % A. grid, belts and vectors
    \drawGridB{14}{18}
    % \node at (4,-0.6) {$K$};
    % \draw[black,thick,dashed] (4,-0.2) -- (4,12.5);

    % B. path
    \begin{scope}[yshift=16cm]
    \begin{scope}[xscale=0.05,yscale=-0.05]

        \draw[runOutlineA,line width=2.5mm,decoration = {zigzag,segment length=1mm, amplitude=0.25mm},decorate,xshift=140cm,yshift=-20cm]    (110,190) 
            .. controls (100.5,196) and (74.5,209) .. (70,200) ;

        \draw[runOutlineA,line width=2.5mm,decoration = {zigzag,segment length=1mm, amplitude=0.25mm},decorate,xshift=180cm,yshift=-180cm]  (70,200)  
            .. controls (65.5,191) and (69.5,177) .. (60,170) ;
        
        \draw[runOutlineA,line width=2.5mm,decoration = {zigzag,segment length=1mm, amplitude=0.25mm},decorate]  (60,170) 
            .. controls (50.5,163) and (21.5,172) .. (20,180) 
            .. controls (18.5,188) and (16.5,188) .. (50,220) 
            .. controls (83.5,252) and (34.5,264) .. (0,320) ;
        \draw[runOutlineB,line width=2.5mm,decoration = {zigzag,segment length=1mm, amplitude=0.25mm},decorate]    (0,320) 
            .. controls (34.5,321) and (16.5,267) .. (70,290) 
            .. controls (123.5,313) and (97.5,286) .. (90,280) ;
        \draw[runOutlineB,line width=2.5mm,decoration = {zigzag,segment length=1mm, amplitude=0.25mm},decorate,xshift=160cm,yshift=-80cm]     (90,280) 
            .. controls (82.5,274) and (76.5,259) .. (90,250) ;

        \draw[runLineBlack,xshift=140cm,yshift=-20cm]    (110,190) node[point] {}
            .. controls (100.5,196) and (74.5,209) .. (70,200) node[point] (p4up) {} ;

        \draw[runLineBlack,xshift=180cm,yshift=-180cm]  (70,200)  node[point] (p4down) {}
            .. controls (65.5,191) and (69.5,177) .. (60,170) node[point] (p3up) {} ;
        
        \draw[runLineBlack]  (60,170) node[point] (p3down) {}
            .. controls (50.5,163) and (21.5,172) .. (20,180) node[point] (p2) {}
            .. controls (18.5,188) and (16.5,188) .. (50,220) node[point] (p1) {}
            .. controls (83.5,252) and (34.5,264) .. (0,320) node[point] {} ;
        \draw[runLineBlack]    (0,320) node[point] {}
            .. controls (34.5,321) and (16.5,267) .. (70,290) node[point] (q1) {}
            .. controls (123.5,313) and (97.5,286) .. (90,280) node[point] (q2up) {} ;
        \draw[runLineBlack,xshift=160cm,yshift=-80cm]     (90,280) node[point] (q2down) {}
            .. controls (82.5,274) and (76.5,259) .. (90,250) node[point] {} ;

        \foreach \i in {0,...,5} {
            \draw[basicArrow,draw=cA!50!black,opacity=0.5] (p3down) ++(30*\i,-30*\i) -- ++(30,-30);
            \draw [dotted,thick,color=black,draw opacity=1 ]  (p3down) ++(30*\i,-30*\i) .. 
                controls +(128.5-170,58-70) and +(140.5-160,15-30) .. ++(160-170,30-70) .. 
                controls +(179.5-160,45-30) and +(196.5-200,6-40) .. ++(200-160,40-30) ;

        }

        \foreach \i in {0,...,3} {
            \draw[basicArrow,draw=cA!50!black,opacity=0.5] (p4down) ++(-10*\i,40*\i) -- ++(-10,40);
            \draw [dotted,thick,color=black,draw opacity=1] (p4down) ++(-10*\i-10,40*\i+40) 
             .. controls +(219.5-220,154-180) and +(205.5-240,109-130) .. ++(240-220,130-180)
             .. controls +(274.5-240,151-130) and +(216.5-230,158-140) .. ++(230-240,140-130) ;
            \draw[basicArrow,draw=cB!50!black,opacity=0.5] (q2up) ++(40*\i,-20*\i) -- ++(40,-20);
            \draw [dotted,thick,color=black,draw opacity=1 ] (q2up) ++(40*\i+40,-20*\i-20)
                .. controls +(191.5-230,324-280) and +(211.5-230,346-340) .. ++(230-230,340-280-5)
                .. controls +(248.5-230,334-340) and +(251.5-190,311-300) .. ++(190-230,300-340+5) ;

        }

        \draw[vector]    (q2up) -- node[vectorLabel,pos=0.5] {$v_2'$} +(230-190,280-300) ;
        \draw[vector,yshift=42.5cm]    
                            (p1)  -- node[vectorLabel,pos=0.5] {$v_1$}  +(50-50,170-220) ;
        \draw[vector]    (p2)  -- node[vectorLabel,pos=0.5] {$v_2$}  +(10,-40) ;
        \draw[vector]    (p4down) -- node[vectorLabel,pos=0.3] {$v_4$}  +(-10,180-140) ;
        \draw[vector]    (p3down)  -- node[vectorLabel,pos=0.5,swap] {$v_3$}  +(30,-30) ;
        \draw[vector]    (q1)  -- node[vectorLabel,pos=0.3] {$v_1'$} +(110-70,280-290) ;
    \end{scope}
    \end{scope}
    % \draw[dashed,thin] (5.5,6.5) -- (16,6.5);
    % \draw[dashed,thin] (4.5,3.5) -- ++(11.5,5.75);

\end{tikzpicture}};
\end{tikzpicture}}
    \caption{Contracted paths $\widetilde \rho, \widetilde \rho'$ (left) 
    and reconstructed $\zero$-run $\bar \tau = \bar \rho \, \bar \rho'$ (right).}
    \label{fig:shortrun}
\end{figure}
\begin{proof}[Proof of \cref{thm:short-run}]
We use $m$ from the last claim to modify all factors of $\tau$ except for %the first one 
$\rho_1$ and 
%the last one 
$\rho'_1$, in order to reduce their lengths to at most $n \cdot m^2$.
W.l.o.g.~assume $m$ to be larger than $A$ (take a sufficient multiplicity of $m$ otherwise); this assumption
allows us to proceed uniformly, irrespectively whether $v_1$ is positive or not.
Observe that any path longer than $n \cdot m^2$ 
% must contain, as infix, a cycle whose effect is 
%a multiplicity of $m$ on every coordinate. (In other words, the path 
must contain two configurations 
with the same control state whose vectors are coordinate-wise congruent modulo $m$. 
As long as this happens, we remove the infix; note that this operation changes the effect of 
the whole path by a multiplicity of $m$ on every coordinate. 
If this operation is performed on factors
$
\rho_2, \,  \rho_3,  \, \rho_4, \,  \rho_5, \, \rho'_5, \, \rho'_4, \,  \rho'_3, \, \rho'_2,
$
the paths $\rho, \rho'$ are transformed into contracted paths  
(see the left picture in \cref{fig:shortrun}) of the form:
%$\widetilde \rho\,\widetilde \rho'$, where:
\[
\widetilde \rho  \ = \  
\rho_1 \; \widetilde \rho_2 \; \widetilde \rho_3 \; \widetilde \rho_4 \; 
\widetilde \rho_5, \qquad 
\widetilde \rho' \ = \ 
\widetilde \rho'_5 \;  \widetilde \rho'_4 \; \widetilde \rho'_3 \; \widetilde \rho'_2 \; \sigma_1,
\]
%(note that the last segment $\rev(\rho_1)$ has been also changed due to the modifications performed in the proceeding segments)
each of total length at most $5\, n \cdot m^2$.
%, which however does not necessarily induce a $\zero$-run. 
Importantly, their effects $\eff(\widetilde \rho)$ and $\eff(\widetilde \rho')$ are bounded polynomially in $A$,
and their difference is (coordinate-wise) divisible by $m$:
\[
\eff(\widetilde \rho) - \eff(\rev(\widetilde \rho')) \ = \ (a m, b m)
\qquad \text{for some integers $a, b \in\Z$ polynomial in $A$.}
\]
Our aim is now to pump up the cycles $\pi_1, \ldots, \pi_4$ and $\rev(\pi'_1), \ldots, \rev(\pi'_4)$ 
(see the right picture in \cref{fig:shortrun}), 
to finally end up with the paths of the form
\begin{align} \label{eq:runagain}
\bar \rho \ = \ \rho_1 \, 
\pi_1^{a_1} \, \widetilde \rho_2 \, \pi_2^{a_2} \, \widetilde \rho_3 \, \pi_3^{a_3} \, \widetilde \rho_4 \, \pi_4^{a_4}
\, \widetilde \rho_5, \quad
\bar \rho' \ = \ \widetilde \rho'_5 \, 
(\pi'_4)^{a'_4} \, \widetilde \rho'_4 \, (\pi'_3)^{a'_3} \, \widetilde \rho'_3 \, (\pi'_2)^{a'_2} \, \widetilde \rho'_2 \, (\pi'_1)^{a'_1}
 \, \rho'_1,
\end{align}
such that $\bar \tau = \bar \rho \, \bar \rho'$ is a $\zero$-run.
In other words, we aim at $\eff(\bar \rho) = \eff(\rev(\bar \rho'))$.
We are going to use Lemma~\ref{lem:sol} twice.
For $j = 2, \ldots, 5$ let $c_j := \eff(\rho_1 \widetilde \rho_2 \ldots \widetilde \rho_j) \in\Z^2$, 
and let $f_j$ be the minimal non-negative vector such that the configuration 
$c_{j-1} + f_j$ enables $\widetilde \rho_{j}$.
For $j = 2, \ldots, 4$ let $e_j\in\N^2$ be the minimal non-negative vector such that the configuration 
$c_j + e_j$ enables $\pi_j$. Finally, let $e_5$ be the minimal non-negative vector such that $c_5 + e_5 \geq \zero$.
Analogously to the system $\U$~\eqref{eq:U1}--\eqref{eq:U4},
we define the system $\widetilde \U$ of linear inequalities:
% (let $w_i = m\, v_i$ for $i = 1,\ldots, 4$):
%
\begin{align*} 
\begin{aligned}
a_1 m v_1  \ & \geq \ \max(e_2, f_2) \\
a_1 m v_1 + a_2 m v_2  \ & \geq \ \max(e_2, e_3, f_3) \\
a_1 m v_1 + a_2 m v_2 + a_3 m v_3  \ & \geq \ \max(e_3, e_4, f_4) \\
a_1 m v_1 + a_2 m v_2 + a_3 m v_3 + a_4 m v_4  \ & \geq \ \max(e_4, e_5, f_5) 
\end{aligned}
\end{align*}
%
%\begin{align} 
%\begin{aligned}
%& u_1 + a_1 w_1  \ \geq \ f_2 \\
%& u_1 + a_1 w_1 + u_2  \ \geq \ e_2 \\
%& u_1 + a_1 w_1 + u_2 + a_2 w_2  \ \geq \ \max(e_2, f_3) \\
%& u_1 + a_1 w_1 + u_2 + a_2 w_2 + u_3  \ \geq \ e_3 \\
%& u_1 + a_1 w_1 + u_2 + a_2 w_2 + u_3 + a_3 w_3  \ \geq \ \max(e_3, f_4) \\
%& u_1 + a_1 w_1 + u_2 + a_2 w_2 + u_3 + a_3 w_3 + u_4  \ \geq \ e_4 \\
%& u_1 + a_1 w_1 + u_2 + a_2 w_2 + u_3 + a_3 w_3 + u_4 + a_4 w_4  \ \geq \ \max(e_4, f) \\
%& u_1 + a_1 w_1 + u_2 + a_2 w_2 + u_3 + a_3 w_3 + u_4 + a_3 w_4 + u_5 \ \geq \ \zero \\
%\end{aligned}
%\end{align}
%
In words, $\widetilde \U$ requires that every prefix of $\bar \rho$ is enabled in the source $\zero$-configuration,
and that the number of repetitions of every cycle $\pi_i$ is divisible by $m$.
Clearly $\widetilde \U$ has a non-negative integer solution, as
$v_1$ is either positive, or vertical or horizontal in which case $v_2$ is positive on the relevant coordinate.
Likewise we define a system of inequalities $\widetilde \U'$ that 
requires that every prefix of $\rev(\bar \rho')$ is enabled in the target $\zero$-configuration.
%which has a non-negative integer solution.
%
%\begin{align} 
%\begin{aligned}
%&\eff(\rho_1 \, \pi_1^{a_1}) \ \geq \ f_2 \\
%&\eff(\rho_1 \, \pi_1^{a_1} \, \widetilde \rho_2) \ \geq \ e_2 \\
%& \eff(\rho_1 \, \pi_1^{a_1} \, \widetilde \rho_2 \, \pi_2^{a_2}) \ \geq \ \max(e_2, f_3) \\
%& \eff(\rho_1 \, \pi_1^{a_1} \, \widetilde \rho_2 \, \pi_2^{a_2} \, \widetilde \rho_3) \ \geq \ e_3 \\
%& \eff(\pi_1^{a_1} \, \widetilde \rho_2 \, \pi_2^{a_2} \, \widetilde \rho_3 \, \pi_3^{a_3}) \ \geq \ \max(e_3, f_4) \\
%& \eff(\pi_1^{a_1} \, \widetilde \rho_2 \, \pi_2^{a_2} \, \widetilde \rho_3 \, \pi_3^{a_3} \, \widetilde \rho_4) \ \geq \ e_4 \\
%& \eff(\pi_1^{a_1} \, \widetilde \rho_2 \, \pi_2^{a_2} \, \widetilde \rho_3 \, \pi_3^{a_3} \, \widetilde \rho_4 \,
%\pi_4^{a_4}) \ \geq \ \max(e_4, f) \\
%& \eff(\pi_1^{a_1} \, \widetilde \rho_2 \, \pi_2^{a_2} \, \widetilde \rho_3 \, \pi_3^{a_3} \, \widetilde \rho_4 \,
%\pi_4^{a_4} \, \widetilde \rho_5) \ \geq \ \zero \\
%\end{aligned}
%\end{align}
%
Consider some fixed solutions of $\widetilde \U$ and $\widetilde \U'$ bounded,
by the virtue of \cref{lem:sol}, polynomially in $A$.
We have thus two fixed runs $\bar \rho$ and $\rev(\bar \rho')$
of the form~\eqref{eq:runagain}, with source vector $\zero$;
the number of repetitions of each cycles is divisible by $m$,
and the difference of their effects is (coordinate-wise) divisible by $m$:
\[
\eff (\bar \rho) - \eff(\rev(\bar \rho')) \ = \ (am, bm)
\qquad \text{for some integers $a, b \in\Z$ polynomial in $A$.}
\]
As shifts are closed under addition, by \cref{claim:shift} we know that $(am, bm)$ is a shift.
Substituting $(am, bm)$ for $(\delta_x, \delta_y)$ in the system $\C_\delta$ yields a system which admits,
again by \cref{lem:sol}, a solution bounded polynomially in $A$.
We use such a solution to increase the numbers of repetitions of respective cycles 
$a_1, \ldots, a_4$ and $a'_4, \ldots, a'_1$ in $\bar \rho$ and $\bar \rho'$, respectively.
This turns the path $\bar \tau = \bar\rho \, \bar\rho'$ into a $\zero$-run of length bounded polynomially in $A$.
\end{proof}

\newpage

\bibliographystyle{plain}
\bibliography{main}

\newpage

\appendix

% !TEX root = main.tex

\section{Proof of Non-negative Cycle Lemma}  

In this section we prove \cref{lem:first-cycle}.
Fix a \twovass $V$ with $n$ states, and let $M = \norm{V}$.
We proceed by a sequence of auxiliary lemmas. 

\begin{wrapfigure}[17]{r}[0cm]{4cm}
  \centering
  \vspace{-\baselineskip}
  \begin{tikzpicture}[scale=0.5]
    % A. grid, belts and vectors
    \drawGridB{5}{12.5}
    \node at (4,-0.6) {$K$};
    \fill[cA,fill opacity=0.3] (0,0) rectangle (4,12.5);
    \foreach \i in {1,2,...,7} {
        \draw[white,semithick,draw opacity=0.7] (\i/2,0) -- (\i/2,12.5);
    }
    \draw[black,thick,dashed] (4,-0.2) -- (4,12.5);

    % B. path
    \begin{scope}[yshift=16cm]
    \begin{scope}[xscale=0.05,yscale=-0.05]
        % \draw[runOutlineA] (40,90)
        %     .. controls (55.5,91)  and (71.5,109) .. (60,130)
        %     .. controls (48.5,151) and (22.5,135) .. (20,170)
        %     .. controls (17.5,205) and (80.5,233) .. (70,200)
        %     .. controls (59.5,167) and (19.5,219) .. (40,250)
        %     .. controls (60.5,281) and (53.5,314) .. (0,320) ;
        \draw[runLineBlack] (40,90) node[point,label=left:{$q$}] (b) {}
            .. controls (55.5,91)  and (71.5,109) .. (60,130) node[point,label=right:{$r$}] {}
            .. controls (48.5,151) and (22.5,135) .. (20,170) node[point,label=left:{$s$}] {}
            .. controls (17.5,205) and (80.5,233) .. (70,200) node[point,label=left:{$r$}] {}
            .. controls (59.5,167) and (19.5,219) .. (40,250) node[point,label=left:{$q$}] (a) {}
            .. controls (60.5,281) and (53.5,314) .. (0,320)  node[point,label=left:{$p$}] {};
    \end{scope}
    \end{scope}

    \draw[vector] (a) -- (b);
\end{tikzpicture}
%  \caption{repeated state}
  \label{fig:increasing-angle}
\end{wrapfigure}

\begin{lemma}\label{lem:one-coordinate-bounded}
Let $\rho$ be a run such that one of coordinates is smaller than $K$ in all configurations in $\rho$,
and such that $\norm{\trg(\rho)} > \norm{\src(\rho)} + K n M$. 
Then 
\begin{enumerate}
\item [{\sc (i)}] $\rho$ contains, as an infix, a cycle with vertical or horizontal effect,
\item [{\sc (ii)}] $\rho$ contains a configuration enabling such a cycle of length polynomial in $K n M$.    % at most $K^2 n M$.
\end{enumerate}
\end{lemma}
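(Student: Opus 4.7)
My plan would be to exploit pigeonhole on (state, first coordinate) pairs, using the hypothesis that one coordinate, say without loss of generality the first, stays below $K$ throughout $\rho$; the symmetric case would yield a horizontal cycle. First I would observe that since the first coordinate is bounded by $K$ while $\norm{\trg(\rho)}$ exceeds $\norm{\src(\rho)}$ by more than $KnM$, the norm of $\trg(\rho)$ must be realised by its second coordinate, which therefore grows by more than $KnM$ from source to target along $\rho$.

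For (i), I would consider the positions in $\rho$ at which the second coordinate strictly exceeds every previously visited value; call these the \emph{records}. Since the second coordinate rises by more than $KnM$ in integer steps, there are more than $KnM$ records. There are only $nK$ possible (state, first coordinate) pairs, so by pigeonhole, among the first $nK+1$ records two share the same pair. The infix of $\rho$ between them is a cycle whose first-coordinate effect is $0$ (matching first coordinates) and whose second-coordinate effect is strictly positive (the later record exceeds the earlier one); this is the required vertical infix cycle.

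For (ii), I would shorten the cycle $C$ obtained in (i) by an iterative surgery. Let $c := \src(C) \in \rho$. Starting with $C' := C$, I would maintain the invariant that $C'$ is a cycle enabled at $c$ with strictly positive vertical effect. While $|C'| > nK$, within the first $nK+1$ configurations of $C'$ there must, by pigeonhole, be a repeated (state, first coordinate) pair at positions $a < b \leq nK$, yielding a subpath of length at most $nK$ that is itself a cycle with some vertical effect $(0,e)$. If $e > 0$, I would return this subpath: a careful tracking of how the surgery rearranges the transitions of the original infix of $\rho$ shows that it is enabled at a configuration of $\rho$ --- either within the initial, unchanged prefix of $C'$, or, when the subpath lies entirely in the later portion, as a contiguous infix of the original $C$. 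If $e \leq 0$, I would remove the subpath and iterate: removing a non-positive-effect infix preserves enabledness at $c$ (subsequent configurations only gain on the second coordinate) and strictly decreases $|C'|$. The surgery terminates either upon discovering a positive subpath, or by reducing $|C'|$ to at most $nK$, in which case $C'$ itself is the desired short vertical cycle enabled at $c$. In both cases, the output cycle has length at most $nK$, which is polynomial in $KnM$.

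The main obstacle I anticipate is the bookkeeping in part (ii) ensuring that any positive subpath found mid-iteration is enabled at a configuration of $\rho$, rather than only at an intermediate configuration produced by the surgery. Handling this rigorously would rely on the observations that the unchanged prefix of $C'$ always consists of configurations of $\rho$, and that a subpath confined to the later, shifted portion of $C'$ corresponds to a contiguous infix of the original $C$, whose source is a configuration of $\rho$.
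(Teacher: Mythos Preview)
Part (i) is essentially the paper's argument: pigeonhole on the pair (state, first coordinate) along a sequence of ``record'' configurations. One harmless slip: you are only guaranteed more than $Kn$ records (each record can raise the running maximum by as much as $M$), not more than $KnM$; but since you only use $nK+1$ of them, the conclusion is unaffected.

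Part (ii) is where you diverge from the paper, and your justification has a real gap. After two or more removals $C'$ carries several seams, and the positive-effect subpath you discover at positions $a<b\le nK$ may straddle one of them; in that case it is \emph{not} a contiguous infix of the original $C$, so your second observation fails, and its source need not lie in the ``unchanged prefix'' either. Your two cases are therefore not exhaustive. The approach is salvageable --- one shows by induction on the number of removals that every maximal block of deleted positions, viewed inside $C$, has total effect $(0,e)$ with $e\le 0$, and from this that any subpath of $C'$ is enabled at the $\rho$-configuration underlying its start --- but that is the missing lemma, not mere bookkeeping.

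The paper avoids the issue with a cleaner idea. It starts its record sequence at the configuration $c_0$ of \emph{globally minimal} second coordinate (rather than at the source) and lets $c_{i+1}$ be the first configuration after $c_i$ with strictly larger second coordinate. Then along the infix cycle $\rho_{ij}$ the second coordinate is trapped in the window $[c_0[2],\,c_j[2]]$, of width at most $KnM$. Hence $\rho_{ij}$ visits at most $Kn(KnM+1)$ distinct configurations, and one shortens it by deleting repetitions of configurations, i.e.\ sub-cycles of effect $(0,0)$, which trivially preserves both enabledness at $c_i\in\rho$ and the positive vertical effect. No seam analysis is needed; the key is the choice of $c_0$, which your record definition does not make.
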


\subparagraph{Proof.}
%\begin{proof}
W.l.o.g.~assume that the first (horizontal) coordinate is bounded by $K$ in all configurations in $\rho$.
Let $s = \src(\rho)$ and $t = \trg(\rho)$.

We first prove that $\rho$ contains a cycle with vertical effect.  % of norm at most $K n M$.
Define a sequence of configurations $c_0, c_1, \ldots, c_m$ as follows.
Let $c_0$ be the first configuration which minimizes the value of the second (vertical) coordinate;
clearly $c_0[2] \leq s[2]$.
Further, let $c_{i+1}$ be the first configuration in $\rho$ such that $c_{i+1}[2] > c_i[2]$.
Thus $c_{i+1}[2] \leq c_i[2] + M$, and in consequence 
\[
t[2] \leq c_m[2] \leq c_0[2] + mM \leq s[2] + mM.
\]
According to the assumption we have $\norm{t} > \norm{s} + K n M$ hence, as the first coordinate is bounded by $K$,
we deduce the inequality 
\[
t[2] > s[2] + K n M.
\]
The two above inequalities relating $t[2]$ and $s[2]$ imply $m > K n$.
Therefore there must be two configurations $c_i$ and $c_j$, for $0 \leq i < j \leq K n$,
with the same control state $q$ and the same first coordinate $c_i[1] = c_j[1]$, 
and thus the infix $\rho_{ij}$ of $\rho$ from $c_i$ to $c_j$
is a cycle with effect $(0, y)$, where $0 < y \leq (j-i) M \leq K n M$.

Now we bound the length of the cycle.
For all configurations in $\rho_{ij}$, we observe that
the first coordinate stays between $0$ and $K-1$, and the second coordinate stays between
$c_0[2]$ and $c_j[2]$. 
We know that $j \leq K n$, hence $c_j[2] \leq c_0[2] + K n M$.
In consequence, the counter values
in all configurations in the cycle $\rho_{ij}$ are restricted to at most $K (K n M + 1)$ different vectors,
and therefore there are at most $L = K n (K n M + 1)$ different configurations in $\rho_{ij}$.
By removing repetitions of configurations, i.e., by removing cycles of effect $\zero$, we reduce the length
of the cycle to at most $L$, which is bounded polynomially in $K n M$.
\qed
%\end{proof}
\vspace{\baselineskip}

Every \twovass $V$ induces a directed graph whose vertices are control states of $V$, 
with an edge from $p$ to $q$ if and only 
if $V$ has a transition of the form $(p, v, q)$. This graph allows us to split control states of $V$ into
strongly connected components, which we call briefly {\SCC}s.
The following lemma distinguishes two kinds of {\SCC}s:
\begin{lemma}\label{lem:scc-dichotomy}
% There is a polynomial $Q_1$ such that
Every \SCC $S$ satisfies one of the following conditions:
\begin{enumerate}
  \item[(a)] every control state in $S$ belongs to some positive cycle of length polynomial in $nM$; % at most $Q(n M)$;
  \item[(b)] the effects of all cycles in $S$ belong to some half-plane   
  containing no positive vector.
\end{enumerate}
\end{lemma}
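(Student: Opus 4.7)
The plan is to analyze the cone $\mathcal{K} := \cone(C)$ generated by the finite set $C$ of effects of all simple cycles of $S$. Each simple cycle has length at most $n$, so every $v \in C$ satisfies $\norm{v} \leq nM$. By iteratively removing simple subcycles, any closed walk in $S$ decomposes as a concatenation of simple cycles, hence its effect is a non-negative integer combination of vectors from $C$; in particular every cycle effect in $S$ lies in $\mathcal{K}$. The dichotomy therefore reduces to whether or not $\mathcal{K}$ contains a positive vector.

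For case (b), suppose $\mathcal{K}$ is disjoint from $\Zpos^2$ (equivalently from $\Rpos^2$). Then the closed convex cone $\mathcal{K}$ and the open convex cone $\Rpos^2$ are disjoint, so a separating hyperplane through the origin produces a half-plane $H$ with $\mathcal{K} \subseteq H$ and $H \cap \Rpos^2 = \emptyset$. Since every cycle effect lies in $\mathcal{K} \subseteq H$, condition (b) holds.

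For case (a), suppose $\mathcal{K}$ contains a positive vector. By Carathéodory's theorem for conic combinations in $\R^2$, there exist $v_1, v_2 \in C$ (possibly equal) such that already $\cone(v_1, v_2)$ contains a positive vector. Fix any $r \in S$; let $\pi_1, \pi_2$ be the simple cycles realizing $v_1, v_2$ at base states $p_1, p_2$, and use strong connectedness of $S$ to select simple connecting paths $P_{r\to p_1}$, $P_{p_1\to p_2}$, $P_{p_2\to r}$ in $S$, each of length less than $n$ and hence with effect of norm less than $nM$. The closed walk
\[
\sigma \;:=\; P_{r \to p_1} \cdot \pi_1^{n_1} \cdot P_{p_1 \to p_2} \cdot \pi_2^{n_2} \cdot P_{p_2 \to r}
\]
has effect $n_1 v_1 + n_2 v_2 + W$ with $\norm{W} \leq 3nM$, so it suffices to find non-negative integers $n_1, n_2$ satisfying $n_1 v_1 + n_2 v_2 \geq (3nM{+}1,\, 3nM{+}1)$ coordinate-wise. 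This is a system of two linear inequalities in two variables with coefficients polynomial in $nM$; the standard small-solution bound (\cref{lem:sol}) yields $n_1, n_2$ polynomial in $nM$, so $\sigma$ is a positive cycle at $r$ of length polynomial in $nM$.

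The main obstacle is case (a): one has to convert the purely geometric statement that $\cone(v_1, v_2)$ meets $\Rpos^2$ into an explicit positive closed walk of polynomial length. This hinges on two ingredients: absorbing the effect $W$ of the connecting paths by inflating the right-hand side by $3nM$, and controlling the multiplicities $n_1, n_2$ via the 2-dimensional linear-programming bound. Case (b), by contrast, is a clean convex-geometry observation once one notes that cycle effects in $S$ generate exactly $\mathcal{K}$.
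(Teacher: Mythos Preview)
Your proof is correct and follows essentially the same approach as the paper: both split on whether the cone of simple-cycle effects meets $\Rpos^2$, use Carath\'eodory to reduce to two simple cycles in case~(a), stitch them together with short connecting paths inside $S$, and bound the required multiplicities via \cref{lem:sol}. The only cosmetic differences are that the paper invokes its \cref{lem:cone-dichotomy} (plus a rotation) instead of the separating-hyperplane theorem for case~(b), and in case~(a) it first finds small $a_1,a_2$ with $a_1v_1+a_2v_2$ positive and then scales by $3nM{+}1$, whereas you solve the inequality $n_1v_1+n_2v_2\geq(3nM{+}1,3nM{+}1)$ directly.
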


\begin{proof}
%We show the proof for a polynomial $Q(n, M) = 9n M^3$. This is not optimized and probably can be improved easily.
%
Let $U$ be the set of effects of simple cycles included in $S$. We consider two cases:

\subparagraph{Case 1: $\cone(U)$ contains a positive vector.}
Fix an arbitrary positive vector $v \in \cone(U)$. By Caratheodory's Theorem, 
$v = a_1 u_1 + a_2 u_2 \in \cone(u_1, u_2)$ for some two vectors $u_1, u_2 \in U$ and $a_1, a_2 \in \N$. 
By \cref{lem:sol} we know that $a_1 u_1 + a_2 u_2$ is positive for some non-negative integers 
$\a_1, \a_2 \leq (2 M)^2$. 
We also know that $u_1$ is the effect of a simple cycle $\pi_1$ from, say,  state $q_1$ to $q_1$;
and $u_2$ is the effect of a simple cycle $\pi_2$ from state $q_2$ to $q_2$.

Fix a state $q \in S$. As $S$ is strongly connected it contains a cycle $\pi$ of length at most $3n$
which contains all $q$, $q_1$ and $q_2$.
Thus absolute values of $\eff(\pi)$ on both coordinates are at most $3nM$, hence are larger or equal than $-3nM$. 
%This implies that effect of cycle $\pi$ plus $3nM + 1$ times any positive vector is a positive vector.
Therefore $\pi$, together with cycle $\pi_1$ repeated $a_1 \cdot (3nM + 1)$ times, and with
cycle $\pi_2$ repeated $a_2 \cdot (3nM + 1)$ times, form a cycle with positive effect.
The length of this cycle is at most $3n + 2n(2M)^2 (3nM + 1)$, hence bounded polynomially in $n M$.
% \leq n + 2M^2 + 6nM^3 \leq 9nM^3 = Q(n,M)$, which finishes the proof.
Condition (a) holds.

\subparagraph{Case 2: $\cone(U)$ contains no positive vector.}
By \cref{lem:cone-dichotomy} we deduce that $\cone(U)$ is included in some
half-plane $\Pi$. If $\Pi$ intersects the positive orthant $\Rpos^2$, rotate the half-plane so that it is disjoint
from $\Rpos^2$. The so obtained half-plane $\Pi'$ contains no positive vector and still includes $\cone(U)$,
hence condition (b) holds.
\end{proof}

\begin{lemma}\label{lem:first-cycle-scc}
There is a polynomial $Q$ such that every run $\rho$ within one \SCC with
$\norm{\trg(\rho)} > Q(n M) \cdot (\norm{\src(\rho)} + 1)$
contains a configuration enabling a semi-positive cycle of length %norm 
at most $Q(n M)$.
\end{lemma}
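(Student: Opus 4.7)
My plan is to apply the SCC dichotomy \cref{lem:scc-dichotomy} and handle the two resulting cases separately, invoking \cref{lem:one-coordinate-bounded} in Case (a) and a structural decomposition argument in Case (b).

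In Case (a), every state of $S$ lies on a positive cycle of length at most some polynomial $L_0$ in $nM$. Set $K := L_0 M$; any configuration $(q, v)$ with $v[1], v[2] \geq K$ enables the positive cycle at $q$, since its dip is bounded by $L_0 M = K$. If some configuration of $\rho$ satisfies this, we are done. Otherwise every configuration of $\rho$ has at least one coordinate below $K$, and since $\norm{t}$ is very large we may assume w.l.o.g.~that $t[1] \geq K$ and hence $t[2] < K$. Restricting to the suffix $\rho^*$ of $\rho$ starting at the last configuration with first coordinate $< K$ (or at $\src(\rho)$ if none), the first coordinate stays $\geq K$ elsewhere in $\rho^*$, forcing the second coordinate to stay below $K + M$ throughout $\rho^*$, while the source of $\rho^*$ has norm polynomial in $nM$. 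Applying \cref{lem:one-coordinate-bounded} to $\rho^*$ with the bound $K + M$ on the second coordinate then yields a cycle of polynomial length in $nM$ with effect $(x, 0)$, $x > 0$, which is semi-positive.

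In Case (b), all simple-cycle effects lie in a half-plane $\Pi$ disjoint from the open positive orthant $\Rpos^2$; let $C$ denote the cone they generate. If $C$ does not touch any positive half-axis, I would use the separating hyperplane theorem together with \cref{lem:sol} to pick a functional $\ell(x, y) = \alpha x + \beta y$ with non-negative integer coefficients polynomial in $nM$, such that $\ell < 0$ on $C \setminus \{\zero\}$ and $\ell \geq 0$ on $\Rnonneg^2$. Decomposing $\rho$ into simple cycles plus a simple path of length at most $n$, cycle contributions to $\ell$ are non-positive, so $\ell(t - s) \leq (\alpha + \beta)\, nM$. Combined with lower bounds on $\ell(t)$ (with separate care for the axis-aligned boundary cases $\alpha = 0$ or $\beta = 0$), this gives $\norm{t} \leq \mathrm{poly}(nM) \cdot (\norm{s} + 1)$, contradicting the hypothesis for $Q$ polynomial and large enough.

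Hence $C$ touches, w.l.o.g.~the positive $x$-axis. Since $C \subseteq \Pi \subseteq \{y \leq 0\}$, this forces the existence of a simple cycle of length $\leq n$ with effect $(a, 0)$, $a > 0$, which is semi-positive. To locate a configuration of $\rho$ enabling such a cycle I would argue structurally: in any decomposition of $\rho$ into simple cycles plus a simple path of length $\leq n$, cycles with $y$-effect $< 0$ contribute a total $y$-decrease bounded by $s[2] + nM$ (since $(t - s)[2] \geq -s[2]$ and the simple path's $y$-contribution is $\geq -nM$), so their count is at most $s[2] + nM$ and their total $x$-contribution at most $(s[2] + nM) \cdot nM$; cycles with horizontal non-positive effect contribute $\leq 0$ to $x$. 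Since the sum of cycle $x$-effects equals $(t - s)[1]$ minus the simple path's $x$-effect and exceeds $(s[2] + nM)\, nM$ for $Q$ sufficiently large, every decomposition must include at least one simple cycle with horizontal positive effect; this cycle is a contiguous sub-walk of $\rho$, has length $\leq n$, and the configuration at its start enables it. The hardest part will be the quantitative argument in Case (b), especially obtaining polynomial-in-$nM$ coefficients for the separating functional and handling the axis-aligned boundary cases; the structural accounting in the final sub-case is also somewhat delicate, as it must apply uniformly across every decomposition of $\rho$.
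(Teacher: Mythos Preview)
Your Case~(a) is essentially the paper's argument, and the separating-functional part of Case~(b) is a clean abstraction of the paper's ``ratio'' bound (the paper phrases it as a bound on the slope of the half-plane boundary, which amounts to the same linear separation). Two minor points there: $\Pi \subseteq \{y \le 0\}$ is not what you get from the hypothesis---rather, $C \subseteq \{y \le 0\}$ follows from $C$ containing a point on the positive $x$-axis and no positive vector; and strict negativity $\ell < 0$ on $C\setminus\{0\}$ can fail when $C$ is a full half-plane, though $\ell \le 0$ with $\alpha,\beta \ge 1$ suffices for your bound.

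The real gap is in the last sub-case. You assert that the simple cycle with effect $(a,0)$, $a>0$, appearing in ``any decomposition of $\rho$ into simple cycles plus a simple path'' is a \emph{contiguous} sub-walk of $\rho$, and hence enabled at its start. This is not justified: in the standard iterative decomposition (repeatedly remove the first simple cycle), only the first cycle extracted is a contiguous infix of $\rho$; later ones are infixes of the already-shortened path and may straddle the holes left by earlier removals. So you have produced a short semi-positive cycle in the \vass, but not a configuration of $\rho$ that enables it. The paper closes this gap differently: once $C \subseteq \{y \le 0\}$ (or symmetrically $\{x \le 0\}$), one coordinate is bounded along $\rho$, so \cref{lem:one-coordinate-bounded}{\sc(i)} yields a \emph{contiguous} infix cycle $\pi$ of $\rho$ with horizontal (resp.\ vertical) effect; the paper then decomposes $\pi$ itself---where all simple sub-cycles have effect $(0,a)$---and extracts a simple increasing one, arguing enabledness inside $\pi$. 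Your accounting argument on the decomposition of $\rho$ does not substitute for this step, because it only guarantees existence of the right \emph{effect} in the decomposition, not a contiguous occurrence in the run.
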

\begin{proof}
Let $Q_1$ and $Q_2$ be the polynomials from \cref{lem:scc-dichotomy}{\sc (ii)}
and \cref{lem:one-coordinate-bounded}(a),
respectively.
Let $s = \src(\rho)$ and $t = \trg(\rho)$, and let $S$ be the \SCC containg $\rho$. 
We split the proof according to the two cases (a) and (b) of \cref{lem:scc-dichotomy}.
The proof goes through for every polynomial $Q$ satisfying the following two inequalities:
\begin{align*}
Q(x) \ & \geq \ Q_1(Q_2(x) \cdot x^2)   & \text{(Case 1)}\\
Q(x) \ & \geq \ x^2  & \text{(Case 2)}
\end{align*}
%
%$Q(x) = 2 \cdot Q_1(Q_2(x) \cdot x^2) \ + \ (Q_2(x) + 1) \cdot x^5$.
%
%$Q(x) = Q_1(Q_2(x) \cdot x^2)$.  % \ + \ (Q_2(x) + 1) \cdot x^5$.
%

\subparagraph{Case 1: $S$ satisfies (a).}

If $\rho$ visits some configuration with both coordinates at least 
$Q_2(n M) \cdot M = K$
%$n M^2 = K$ 
then this configuration necessarily enables a positive cycle of length bounded by $Q_2(n M) \leq Q(n M)$.
Otherwise, we know that in every configuration in $\rho$ one of coordinates is smaller than $K$.
W.l.o.g.~assume $t[1] < K$.
Let $\rho'$ be the longest suffix of $\rho$ such that the first coordinate is bounded by $K-1$ along $\rho'$, and let 
$s' = \src(\rho')$.
We claim that $\norm{s'} \leq \norm{s} + K -1 + M$; indeed, if $s' \neq s$, 
the first coordinate of the configuration $u$ preceding $s'$ in $\rho$ is at least $K$,
and therefore $u[2] \leq K-1$, which implies that $s'[2] \leq K-1 + M$.

By assumption we know that $\norm{t} > Q(n M) \cdot (\norm{s} + 1)$, and hence necessarily
$\norm{t} > \norm{s'} + K n M$.
We can thus apply \cref{lem:one-coordinate-bounded}{\sc (ii)} to $\rho'$, to learn that some configuration in $\rho'$
enables a vertical cycle of length at most $Q_1(K n M) \leq Q(n M)$.

\subparagraph{Case 2: $S$ satisfies (b).}

Denoting by $U$ the set of all simple cycles in $S$, due to condition (b) we know that 
$\cone(U)$ is included in some half-plane
$\Pi = \ccangle{-w}{w}$, where $-w \in \N\times (-\N)$ and $w\in (-\N)\times\N$. 
We aim at showing the following claim:

\begin{claim} \label{claim:Uvert}
$U$ contains a vertical or horizontal cycle.
\end{claim}
Towards contradiction suppose $U$ contains no vertical nor horizontal cycle.
Whenever a vector $p = (-x, y) \in (-\N) \times \N$, for $y > x > 0$,  
is the effect of a simple cycle, its \emph{ratio} $y/x$ is necessarily bounded by $n M$.
Therefore, the vector $w$ determining $\Pi$ can be assumed to have ratio bounded by $n M$ as well.
Note that all cycles contained as an infix in $\rho$, necessarily belong to $\Pi$. 
We are going to show bounds on $t[1]$ and $t[2]$ which contradict the assumption on $\norm{t}$.

Factor the run $\rho$ into a at most $n$ (not necessarily simple) cycles, interleaved with at most 
$n-1$ remaining transitions.
Thus we have $t = s + r + p$, where $p\in\Pi$ is the total effect of the cycles and $r$ is the total effect of at most $n-1$ transitions.
Let $p'$ denote the total effect of those among the cycles whose vertical effect is non-negative
(and hence horizontal effect is forcedly negative). 
Thus
\[
t[2] \ \leq \ (s + r + p')[2].
\]
As the half-plane $\Pi = \ccangle{-w}{w}$ contains all these cycles, and the ratio of $w$ is bounded by $nM$
as discussed above, we know that ratio of $p'$ is also bounded by $nM$.
In consequence
$p'[2] \ \leq \ -p'[1] \ \leq \ (s + r)[1] \cdot n M$, and hence
$$t[2] \ \leq \ (\norm{s} + \norm{r}) \cdot (1 + n M) \ \leq \ (\norm{s} + (n-1) M) \cdot (1 + n M).$$
As the same bound is obtained symmetrically for $t[1]$, we have arrived at a contradiction with the assumption
$\norm{t} > Q(n M) \cdot (\norm{s} + 1)$.
\Cref{claim:Uvert} is thus proved.

\begin{claim} \label{claim:Pi}
The run $\rho$ contains, as an infix, a vertical or horizontal cycle $\pi$.
\end{claim}
W.l.o.g.~supose $U$ contains a vertical cycle. In consequence, no cycle in $S$
has positive first (horizontal) coordinate. Therefore the horizontal coordinate 
is smaller than $K = s[1] + (n-1)M + 1$ in all configurations in $\rho$.
By \cref{lem:one-coordinate-bounded}{\sc (i)} $\rho$ contains, as an infix, a vertical cycle.

Relying on the \cref{claim:Pi}, w.l.o.g.~assume $\rho$ contains a vertical cycle $\pi$ as infix.
For completing the proof of \cref{lem:first-cycle-scc} we need to bound the length of $\pi$.
As $S$ satisfies condition (b), it contains no cycle with positive horizontal effect; in consequence,
decomposition of $\pi$ into simple cycles uses only cycles with effect $(0, a)$, where $a\in\Z$.
Split these simple cycles into \emph{increasing} ($a > 0$) and \emph{non-increasing} ($a \leq 0$).
Suppose the length of $\pi = \pi_0$ 
is greater than $n$ and consider the first simple cycle $\sigma_1$ contained as its infix.
If $\sigma_1$ is non-increasing remove $\sigma_1$ from $\pi$, thus obtaining the path $\pi_1$, 
and consider the first simple cycle $\sigma_2$ contained in $\pi_1$ as an infix. 
Again, remove $\sigma_2$ if it is non-increasing. And so on, continue this process until finally
certain cycle $\sigma_i$ in $\pi_{i-1}$ is increasing.
As all the removed simple cycles $\sigma_1, \ldots, \sigma_{i-1}$ were non-increasing, 
inserting back to $\pi_{i-1}$ those of them which preceed $\sigma_i$ necessarily increases the configuration 
$\src(\sigma_i)$ in $\pi_{i-1}$ so that it enables $\sigma$. 
The proof is thus completed.
\end{proof}

\begin{proof}[Proof of~\cref{lem:first-cycle}]
Let $Q$ be the polynomial from \cref{lem:first-cycle-scc}.
We define a polynomial $P(x) = Q(x) \cdot (x+1)$.
%$P(n, M) = (M+2) Q(n, M)$.
Consider a run $\rho$ from a $\zero$-configuration to some target configuration $t$. 
Let $k\leq n$ be the number of {\SCC}s traversed by the run $\rho$ and, for $i = 1, \ldots, k$, let
$s_i$ and $t_i$ be the first and the last configuration in the $i$-th \SCC, respectively.
Then $s_1 = \zero$ and $t_k = t$.
Suppose, towards contradiction, that 
$\rho$ contains no configuration enabling a semi-positive cycle of length at most $P(n M)$.
As $Q(n M) \leq P(n M)$, by \cref{lem:first-cycle-scc} we obtain
\begin{equation}\label{eq:single-scc}
\norm{t_i} \leq Q(n, M) \cdot (\norm{s_i} + 1)
\end{equation}
for $i = 1, \ldots, k$. We show by induction on $i$ that $\norm{t_i} \leq P(n M)^i$.
For $i = 1$ we use~\eqref{eq:single-scc} and the equality $\norm {s_1} = 0$, to obtain 
$\norm{t_1} \leq Q(n M) \leq P(n M)$.
For the induction step we use~\eqref{eq:single-scc} and the inequality $\norm{s_{i+1}} \leq \norm{t_i} + M$,
to obtain:
\begin{align*}
\norm{t_{i+1}}  \leq \, & Q(n M) \cdot (\norm{s_{i+1}} + 1) \leq \\
& Q(n M) \cdot (\norm{t_i} + M + 1) \leq \\
&  Q(n M) \cdot (P(n M)^i + M + 1) % = P(n, M)^i \cdot Q(n, M) + Q(n, M)(M+1) 
\leq P(n M)^{i+1},
\end{align*}
as required. 
Thus $\norm{t} \leq P(n M)^n$ which contradicts the assumption on $\norm{t}$ 
and therefore completes the proof. % of \cref{lem:first-cycle}.
\end{proof}

% !TEX root = main.tex
\newpage
\section{Missing proof from \cref{sec:app} -- the case of thin run}

As usual we use $n = |Q|$ for the number of control states, and $M= \norm{V}$ for the norm of $V$.
Assume a $\zero$-run $\tau$ to be $A$-thin:
every configuration in $\tau$ lies in some $A$-belt $\mathcal{B}_{v,W}$. 
%
%Fix $M = 6 (A + \sqrt{2}\norm{V}) A^2 + 3(A + \sqrt{2}\norm{V})$ and $S = [0,M]^2$.
Fix $W =  A + \sqrt{2}M$ and $B = 6 W A^2 + 3W$, and let $S = [0,B]^2$.
Let $\norm{v}_2$ denote the Euclidean norm of $v$. Note that $\norm{v}_2 \leq \sqrt{2}\norm{v}$. 
%The proof is split into two separate claims.

\begin{claim}\label{claim:independent-belts}
	The run $\tau$ does not change belts outside of $S$, i.e., any two consecutive configurations 
	$(q, w)$, $(q', w')$ in $\tau$ 
	% $(s, \alpha) \xrightarrow{(s,\delta,t)} (t, \alpha + \delta)$ in $\pi$ 
	satisfying $w, w' \notin S$ 
	share a common belt.
\end{claim}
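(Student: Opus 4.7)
The plan is to observe that the claim reduces to the triangle inequality applied to the Euclidean distance to a half-line, and that the choice $W = A + \sqrt{2}M$ is tailored precisely for this purpose.

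First, since $(q, w)$ and $(q', w')$ are consecutive along $\tau$, their difference $w' - w$ is the effect of a single transition, so $\norm{w' - w} \leq M$, and consequently the Euclidean distance satisfies $\norm{w' - w}_2 \leq \sqrt{2}\, M$ by the remark preceding the claim.

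Next, I invoke $A$-thinness at $w$: there exists an $A$-belt $\belt{v}{W_0}$, with $\norm{v} \leq A$ and $W_0 \leq A$, such that $w \in \belt{v}{W_0}$; in particular $\dist(w, \ell_v) \leq A$. The map $u \mapsto \dist(u, \ell_v)$ is $1$-Lipschitz in the Euclidean metric, so
\[
\dist(w', \ell_v) \;\leq\; \dist(w, \ell_v) + \norm{w - w'}_2 \;\leq\; A + \sqrt{2}\, M \;=\; W,
\]
giving $w' \in \belt{v}{W}$. Since $A \leq W$ we also have $w \in \belt{v}{W_0} \subseteq \belt{v}{W}$, so $w$ and $w'$ share the belt $\belt{v}{W}$.

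There is no genuine obstacle here; the content of the claim is exactly what the triangle inequality delivers once $W$ has been set to $A + \sqrt{2}M$. Notably, the hypothesis $w, w' \notin S$ is not needed for this proof. Its role is to support subsequent claims: the value $B = 6WA^2 + 3W$ is chosen large enough that outside $S$ two belts $\belt{v}{W}$, $\belt{v'}{W}$ with distinct directions of norm at most $A$ cannot jointly contain a single point (the angular separation between two non-colinear non-negative integer vectors of norm at most $A$ is at least on the order of $1/A^2$, whereas the angular width of a belt of width $W$ at Euclidean distance $\geq B$ from the origin is much smaller). Combined with the present claim, that uniqueness will later promote \emph{sharing a belt} to \emph{lying in the canonical belt of the configuration}, which is the form in which the statement is applied.
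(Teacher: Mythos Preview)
Your argument establishes only that $w$ and $w'$ lie in a common belt of width $W=A+\sqrt{2}M$, and you are right that this follows from the triangle inequality with no use of the hypothesis $w,w'\notin S$. But that is not what the claim asserts. In this section ``belt'' means $A$-belt: the run is $A$-thin, and the very next claim invokes the present one to obtain a \emph{unique} belt $\mathcal{B}=\belt{u}{A}$ containing the whole infix $\gamma\gamma'$. A $W$-belt does not suffice there, and indeed $\belt{v}{W}$ is not an $A$-belt since $W>A$. So you have proved a strictly weaker statement than the one required.

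The paper's proof does exactly what you relegated to ``subsequent claims'': starting from $w\in\belt{u}{A}$, it uses your triangle-inequality step to get $w'\in\belt{u}{W}$; then, since $\tau$ is $A$-thin, $w'$ lies in some $A$-belt $\belt{v}{A}\subseteq\belt{v}{W}$; and finally the geometric computation (which you sketched informally) shows that a point in two $W$-belts with non-colinear directions of norm $\leq A$ must lie in $S$. Since $w'\notin S$, the directions are colinear and $w'\in\belt{u}{A}$. In other words, the ``uniqueness'' argument is not a later add-on --- it \emph{is} the content of this claim, and it is precisely where the hypothesis $w,w'\notin S$ is used. Your write-up contains the two ingredients but does not assemble them into a proof of the actual statement.
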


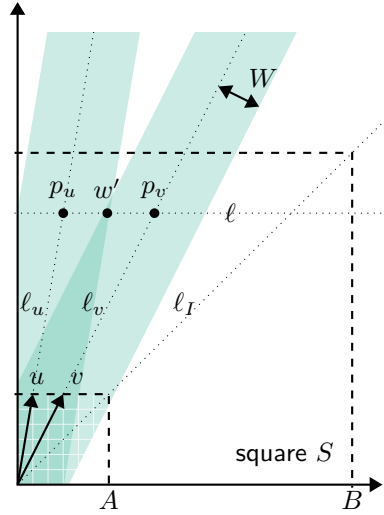
\begin{wrapfigure}[20]{r}[0cm]{5.2cm}  
	\centering
%    \vspace{-\baselineskip}
%    \vspace{-\baselineskip}
	\vspace{-\baselineskip}
	\begin{tikzpicture}[scale=0.4]
    % \fill[fill=pink] (0,-1) rectangle (16,16);
    \clip (-1,-0.8) rectangle (12,16);
    % A. grid, belts and vectors
    \drawFadingB{12}{16}
    \drawGridB{12}{16}
    \draw[step=0.5cm,white,thin,draw opacity=0.7] (0,0) grid (3-0.01,3-0.01);
    % \foreach \vx/\vy/\name in {0.5/3/u,1.5/3/v} {
        \begin{scope}
            \clip (0,0) rectangle (12,16);
            \drawBeltB{0.5}{3}{1.5cm}{12}{16}
        \end{scope}
        \draw[style=vector] (0,0) -- node[auto,swap,pos=1.35,xshift=-0.25cm] {$u$} (0.5,3);
        \begin{scope}
            \clip (0,0) rectangle (12,16);
            \drawBeltB{1.5}{3}{1.5cm}{12}{16}
        \end{scope}
        \draw[style=vector] (0,0) -- node[auto,swap,pos=1.35,xshift=-0.25cm] {$v$} (1.5,3);
    % }

    % B. dashed rectangle limiting the vectors
    \draw[black,thick,dashed] (-0.1,3) -- (3,3) -- (3,-0.1);
    \node at (3,-0.5) {$A$};
    \draw[black,thick,dashed] (-0.1,11) -- (11,11) -- (11,-0.1);
    \node at (11,-0.5) {$B$};
    \node at (7,9) {$\ell$};
    \draw[black,dotted] (0,0) -- (12,12);
    \draw[black,dotted] (-0.1,9) -- (12,9);
    \node[anchor=east] at (10.7,1) {\textsf{square} $S$};

    \node[point,label=above:{$p_u$}] at (3/4*2,3/4*12) {};
    \node[point,label=above:{$p_v$}] at (3/4*6,3/4*12) {};
    \node[point,label=above:{$w'$}] at (2.95,3/4*12) {};
    \node at (1-0.5,6) {$\ell_u$};
    \node at (3-0.5,6) {$\ell_v$};
    \node at (6-0.5,6) {$\ell_I$};

    % C. run within belts
    % steps to modify this line:
    % -- export background to SVG
    % -- import background to Figma or other svg editor
    % -- draw a new path
    % -- export result to SVG
    % -- convert SVG to TikZ with svg2tikz tool
    % -- copy the \path data from resulting file and replace the following path:

    % D. width of belt
    \coordinate (width1) at (1.1*6,1.1*12);
    \coordinate (width2) at ($(width1)!1.5cm!90:(0,0)$);
    \draw[basicArrowBoth] (width1) -- node[auto,midway] {$W$} (width2);
\end{tikzpicture}
	\caption{$A$-belts intersect only within square $S$.}\label{fig:disjoint-belts}
\end{wrapfigure}

\subparagraph*{Proof.}
	Assume that $w \in \belt{u}{A}$ for some $u$ ($\norm{u} \leq A$). 
	We will show $w' \in \belt{u}{A}$. 
	Notice that $w' \in \belt{u}{W}$ 
	(since $\norm{w' - w}_2 \leq \sqrt{2}\norm{w' - w} \leq \sqrt{2}M$). 
	Towards contradiction assume that $v'$ also belongs to some $A$-belt $\belt{v}{A} \neq \belt{u}{A}$ 
	(i.e. $v$ and $u$ non-colinear). Then of course $w' \in \belt{v}{W}$ too. 
	We will show that this implies $w' \in S$.
	% Therefore, to get the contradiction, we only need to show that $\belt{u}{W} \cap \belt{v}{W} \subseteq S$ for any $u, v$, $u \nparallel v$. (as $\delta \notin S$ would be a counterexample).

W.l.o.g.~assume that $u \clockwise v$.
Let $I = (1,1)$. Notice that when $u \clockwise I \clockwise v$ then $w'$ also belongs to $\belt{I}{W}$. 
Thus we can assume that $u \clockwise v \clockwise I$ or $I \clockwise u \clockwise v$. 
W.l.o.g.~let us choose the first option.
Note that this implies that $u[2],v[2] > 0$.

Let $p_u$ and $p_v$ be the intersection points of $\ell_{u}$ and $\ell_{v}$ with the horizontal line $\ell : y = w'[2]$. 
Their horizontal coordinates are $\frac{u[1]}{u[2]} \cdot w'[2]$ and $\frac{v[1]}{v[2]} \cdot w'[2]$, respectively, so 
\[\norm{p_u - p_v} = w'[2]\frac{|u[1]v[2] - v[1]u[2]|}{u[2]v[2]}.\]
 Because the belts intersect with $\ell$ at an angle between $45^\circ$ and $90^\circ$, 
the line segments $\belt{u}{W} \cap \ell$ and $\belt{v}{W} \cap \ell$ are of length $\leq 2\sqrt{2} W < 3W$. 
Thus $\norm{p_u - p_v} \leq \norm{p_u - w'} + \norm{p_v - w'} < 6W$. Consequently:
\begin{align*}
	&w'[2] \frac{|u[1]v[2] - v[1]u[2]|}{u[2]v[2]} \ < \ 6W \\
	&w'[2] < 6W \frac{u[2]v[2]}{|u[1]v[2] - v[1]u[2]|} \ < \ 6W A^2  \ \leq \ B - 3W \ < \ B.
\end{align*}
% Since $\beta \in \belt{v}{W} \cap \ell$, $\norm{p_v - \beta} \leq 2\sqrt{2}
%W$. 
Furthermore $p_v[1] < p_v[2] = w'[2]$ and $\norm{p_v - w'} < 3W$ so $w'[1] < B$ too, contradiction.
\qed

\newcommand{\limitNumberOfBlocks}{(A^2\cdot n)^2}
\begin{claim} \label{claim:2loops}
	Let $C = \limitNumberOfBlocks$. If $\tau$ visits a configuration of norm 
	larger than $D = B + C \cdot A$, then it 
	decomposes into 
	$
	\tau = \tau_0 \,\alpha_1\, \tau_1 \, \alpha_2\,\tau_2,
	$
	 for two cycles $\alpha_1, \alpha_2$ of opposite effects $\eff(\alpha_1) = - \eff(\alpha_2) \geq \zero$ 
	 containing jointly at most $C (C+1)$ different configurations.
\end{claim}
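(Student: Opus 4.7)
The plan is to localize the high-norm excursion to a single belt using \cref{claim:independent-belts}, and then to extract two cycles of opposite effects by a pigeonhole argument on first-passage configurations at discrete ``levels'' inside that belt.

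First I would pick some configuration $c^*$ in $\tau$ with $\norm{c^*} > D$ and let $\sigma$ be the maximal contiguous sub-run of $\tau$ around $c^*$ whose configurations all lie outside $S$. Iterating \cref{claim:independent-belts} shows that all configurations in $\sigma$ share a single belt $\belt{v}{A}$: any pair of consecutive configurations outside $S$ share a belt, and two distinct $A$-belts meet only inside $S$ (as shown in the proof of \cref{claim:independent-belts}), so the belt is constant along $\sigma$. In particular, $\sigma$ reaches a configuration of norm $> D = B + CA$, so its projection onto the belt direction traverses a range of length at least $CA$.

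Next, using $\hat v := v/\norm{v}_2$ and the projection $p(c) := c \cdot \hat v$, I would define for each $k = 0, 1, \ldots, C$ the configuration $U_k$ as the first config of $\sigma$ with $p \geq B + kA$, and $W_k$ as the last such config. These are interleaved in $\tau$ as $U_0, \ldots, U_C, \ldots, W_C, \ldots, W_0$. I would classify each configuration $(q, u)$ in the belt by its \emph{type} $(q, [u])$ with $[u] \in \Z^2 / \Z v$; counting lattice points in a fundamental parallelogram of the $\Z v$-action intersected with the belt gives at most $T = O(A^2 n) \leq \sqrt{C}$ types. Pigeonhole on the $C+1$ pairs $(U_k, W_k)$, enriched with the discrete offset $p(\cdot) - (B + kA)$ so that matched pairs have identical offsets on both the way up and the way down, yields indices $k < l$ such that $U_l - U_k = t v$ and $W_l - W_k = t v$ for the \emph{same} positive integer $t$. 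Setting $\alpha_1$ = sub-run from $U_k$ to $U_l$ and $\alpha_2$ = sub-run from $W_l$ to $W_k$ produces cycles with $\eff(\alpha_1) = t v \geq \zero$ and $\eff(\alpha_2) = -t v$, as required. The number of distinct configurations in each of $\alpha_1, \alpha_2$ is bounded by lattice points in a strip of $p$-width $(l - k + O(1)) A$ of the belt, i.e.\ $O((l-k) A^2 n) = O(C \cdot \sqrt{C})$, jointly well within $C(C+1)$.

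\textbf{Main obstacle.} The delicate point is ensuring the matching $t = s$ of the two cycle effects \emph{exactly}, not merely that both are positive integer multiples of $v$ of comparable size. A naive pigeonhole on $(q, [u])$-types gives $U_l - U_k = t v$ and $W_l - W_k = s v$ but the integers $t$ and $s$ depend on the discrete offsets $p(U_k) - (B + kA)$ and $p(W_k) - (B + kA)$, which need not coincide. Managing this by enriching the type with offset information, while keeping the type count below $C + 1$ so that the combined pigeonhole on pairs $(U_k, W_k)$ still fires, is the main technical hurdle; depending on the relative sizes of $\norm{v}_2$ and $M$ this may need a case split (when $\norm v_2$ is large enough, the offset of a first-passage is uniquely determined, while when it is small, a finer combined invariant on pair signatures is required).
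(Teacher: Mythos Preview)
Your overall plan matches the paper's: isolate the high-norm excursion to a single belt via \cref{claim:independent-belts}, then apply pigeonhole to first/last crossings of a sequence of levels. However, you have manufactured the ``main obstacle'' yourself by an unlucky choice of levels, and the paper sidesteps it entirely.

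Instead of spacing levels by $A$ along the Euclidean projection, the paper tiles $\mathcal{B}\setminus S$ into blocks $\mathcal{B}_i := \mathcal{B} \cap (S_{i+1}\setminus S_i)$ where $S_i := [0,\,B+i\,v[1]]\times[0,\,B+i\,v[2]]$. The point is that $\mathcal{B}_{i}$ is the translate of $\mathcal{B}_0$ by the integer vector $i\,v$ itself. Hence the ``type'' of a configuration $(q,w)$ lying in block $i$ can be taken to be $(q,\,w-i\,v)\in Q\times\mathcal{B}_0$, which assumes at most $n\,A^2$ values (each block fits in an $A\times A$ square). With $c_i$ the first configuration of $\gamma$ in $\mathcal{B}_i$ and $c'_i$ the last configuration of $\gamma'$ in $\mathcal{B}_i$, pigeonhole on the $(n A^2)^2=C$ possible \emph{pairs} of types over the $C+1$ indices $0,\ldots,C$ produces $i<j$ with both types matching. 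Then $c_j-c_i=(j-i)\,v$ and $c'_j-c'_i=(j-i)\,v$ with the \emph{same} multiplier $j-i$, so the infixes $\alpha_1$ (from $c_i$ to $c_j$) and $\alpha_2$ (from $c'_j$ to $c'_i$) automatically satisfy $\eff(\alpha_1)=(j-i)\,v=-\eff(\alpha_2)\geq\zero$. No separate matching of $t$ and $s$ is required.

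Your level step $A$ is unrelated to $v$, which is exactly why matching $\Z^2/\Z v$-classes only yields $U_l-U_k=t\,v$ and $W_l-W_k=s\,v$ with $t,s$ unconstrained. Your proposed offset enrichment does force $t=s$, but the offsets take up to order $M\norm{v}_2$ values, which inflates the pair-type count well beyond $C$ and breaks the pigeonhole. Aligning the block step with $v$ dissolves the problem; there is no case split needed on $\norm{v}_2$ versus $M$.
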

\begin{proof}
	For a configuration $c$ of norm larger than $D$, 
%	We will show that one can `unpump' $\tau$. 
	let us decompose $\tau$ into 
	\[
	\tau = \pi\,\gamma\, \gamma' \, \pi'
	\] 
	such that $\trg(\gamma)=c = \src(\gamma')$ and 
	$\gamma\gamma'$ is a maximal infix of $\pi$ that visits only configurations of norm greater than $B$.
	By \cref{claim:independent-belts}, there exists unique belt $\mathcal{B} = \belt{u}{A}$ that contains 
	$\gamma\gamma'$.
	Assume w.l.o.g.~that $\norm{u} > M$. 
%	Let $u'$ be the smallest multiplicity of $u$ such that $\norm{u'} \geq \sqrt{2} M$. Of course $\norm{u'} = O(M)$. 
	Let us divide $\mathcal{B} \setminus S$ into segments $\mathcal{B}_i$ as follows:
	\begin{equation*}
		\begin{array}{ll}
		S_i           := \left[0,B + i u[1]\right]\times\left[0,B + i u[2]\right] 
		&\hspace{15mm}
		\mathcal{B}_i := \mathcal{B} \cap (S_{i+1} \setminus S_i).
		\end{array}
	\end{equation*}
	Observe that $\gamma\gamma'$ visits more than $C$ initial blocks $\mathcal{B}_i$ starting from $\mathcal{B}_0$ 
	up to $\mathcal{B}_{C}$, since
	% --- $u'$ was chosen so that 
	a single transition cannot `jump' over a block without visiting it. 
	Let $c_i = (q_i, v_i)$ be the first configuration in $\gamma$ beloging to $\mathcal{B}_i$, and symmetrically
	let $c'_i = (q'_i, v'_i)$ be the last configuration in $\gamma'$ beloging to $\mathcal{B}_i$.
%	
%	Let $\alpha_i$ be the \emph{shortest} suffix of $\alpha$ such that $\src(\alpha_i) \in \mathcal{B}_i$. 
%	Similarly, let $\beta_i$ be the \emph{shortest} prefix of $\beta$ such that $\trg(\beta_i) \in \mathcal{B}_i$.
%	
	Observe that each block $\mathcal{B}_i$ has the same shape as $\mathcal{B}_0$ 
	and differs only by translation by $i u$. 
	Furthermore, as $\norm{u} \leq A$, each $\mathcal{B}_i$ fits inside a square of size $A$   %$[0,B+2 M]^2$, 
	so it contains at most $A^2$ points.
	% $|\mathcal{B}_0| < $(B + 2M)^2$ points. 
	By the pigeonhole principle, there are at least two $i, j$ ($0 \leq i < i+d = j \leq C$) such that 
	\[
	q_i = q_j \qquad v_i + d u = v_j \qquad
	q'_i = q'_j \qquad v'_i + d u = v'_j.
	\]
	Taking as $\alpha_1$ the infix from $c_i$ to $c_j$, and as $\alpha_2$ the infix from $c'_j$ to $c'_i$, we
	obtain two required cycles.
%	$\src(\alpha_i) = (p, v)$, $\src(\alpha_j) = (p, v + d u')$ and $\trg(\beta_i) = (q, w)$, $\trg(\beta_j) = (q, w + d u')$. 
%	Cycles $\pi_1$ and $\pi_2$ determined by
%	\[
%	\alpha_i \ = \ \pi_1 \, \alpha_j 
%	\qquad
%	\beta_i \ = \ \beta_j \, \pi_2.
%	\]
%	satisfy the claim as $\eff(\pi_1) = - \eff(\pi_2) = d u' \geq \zero$.
% 	 
%	It is then possible to replace infix $\pi_{i\uparrow}\pi_{i\downarrow}$ by a strictly shorter infix $\pi_{j\uparrow}\pi_{j\downarrow}$ shifted by $-d u'$, forming a strictly shorter accepting $\vec{0}$-run $\pi'$. 
\end{proof}

\begin{claim} \label{claim:2loopsagain}
	Under assumption of \cref{claim:2loops}, $\tau$ decomposes into 
	$
	\tau = \tau_0 \,\alpha_1\, \tau_1 \, \alpha_2\,\tau_2
	$
	so that $\tau_0 \, \tau_1 \, \tau_2$ is also an $A$-thin $\zero$-run.
\end{claim}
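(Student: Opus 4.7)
I would refine the pigeonhole step inside the proof of \cref{claim:2loops} so that the two removed cycles lie on the \emph{outermost} ends of the excursion of $\tau$ outside $S$, leaving an inner part $\tau_1$ that is comfortably far from the axes. I reuse without change the setup of that proof: the decomposition $\tau = \pi\,\gamma\,\gamma'\,\pi'$, the unique enclosing $A$-belt $\mathcal{B} = \belt{u}{A}$ from \cref{claim:independent-belts}, the assumption $\norm{u} > M$, and the linearly ordered blocks $\mathcal{B}_l = \mathcal{B} \cap (S_{l+1}\setminus S_l)$ for $l = 0, \ldots, C$.

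The key modification is to take $\widetilde c_l$ to be the \emph{last} visit in $\gamma$ to $\mathcal{B}_l$ and $\widetilde c'_l$ to be the \emph{first} visit in $\gamma'$ to $\mathcal{B}_l$. The same pigeonhole argument as in \cref{claim:2loops} then yields $i < j$, with $d := j - i$, such that $\widetilde c_i$ and $\widetilde c_j$ share a control state and have vectors differing by exactly $d u$, and similarly for $\widetilde c'_i$ and $\widetilde c'_j$. Taking $\alpha_1$ to be the $\gamma$-infix from $\widetilde c_i$ to $\widetilde c_j$ and $\alpha_2$ the $\gamma'$-infix from $\widetilde c'_j$ to $\widetilde c'_i$ produces two cycles with opposite effects $\pm e$, where $e := d u \geq \zero$, and hence a decomposition $\tau = \tau_0\,\alpha_1\,\tau_1\,\alpha_2\,\tau_2$.

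The proof then rests on two geometric observations. First, $\tau_1$ lies entirely inside $\bigcup_{l \geq j} \mathcal{B}_l$: since $\norm{u} > M$ forbids a single transition to skip a whole block, and since by the ``last'' (resp.~``first'') choice the path never revisits $\mathcal{B}_j$ after $\widetilde c_j$ in $\gamma$ (resp.~before $\widetilde c'_j$ in $\gamma'$), continuity forces it to remain in strictly higher blocks. Second, every $w \in \mathcal{B}_l$ with $l \geq j$ satisfies $w \geq e$: if $u$ is vertical or horizontal the belt constraint pins the perpendicular coordinate into $[0,A] \subseteq [0,B]$ so $w \notin S_l$ collapses to a single-coordinate bound $w[k] > B + l\, u[k] \geq d\, u[k]$; in the case $u[1], u[2] \geq 1$, combining ``$w$ leaves $S_l$ on some coordinate $k$'' with $\dist(w, \ell_u) \leq A$ propagates the bound $w[k'] \geq l\, u[k'] + (B - A)\, u[k']/u[k] - A$ to the other coordinate $k'$, and the choice $B = 6 W A^2 + 3W$ is amply large to make this exceed $d\, u[k']$ uniformly.

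Combining these, removing $\alpha_1$ and $\alpha_2$ shifts every configuration along $\tau_1$ by $-e$; observation two ensures the result stays non-negative (so we still have a genuine run), and since the shift is parallel to $\ell_u$ the distance to $\ell_u$ is preserved and the shifted configurations remain inside $\belt{u}{A}$. Configurations along $\tau_0$ and $\tau_2$ are unchanged and so still sit in their original $A$-belts. Thus $\tau_0\,\tau_1\,\tau_2$ is an $A$-thin $\zero$-run. The main obstacle I expect is the geometric case analysis in the second observation, where the ``$w \notin S_l$'' information has to be transferred from whichever coordinate witnesses it over to the other coordinate while staying within the slack built into $B$; the vertical and horizontal cases are noticeably cleaner because the belt itself is essentially one-dimensional.
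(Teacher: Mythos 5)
Your proposal is correct and takes essentially the same route as the paper: the paper's proof of this claim is exactly the one-line modification you propose (choose the \emph{last} visit of $\gamma$ and the \emph{first} visit of $\gamma'$ to each block), and your two geometric observations are precisely the details the paper leaves implicit. Nothing further is needed.
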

\begin{proof}
The same proof as for \cref{claim:2loops}, with one modification:
take as $c_i$ \emph{the last} configuration in $\gamma$ belonging to $\mathcal{B}_i$, and symmetrically
take as $c'_i$ \emph{the first} configuration in $\gamma'$ belonging to $\mathcal{B}_i$. 
\end{proof}

\begin{proof}[Proof of \cref{thm:pumping}]
Applying \cref{claim:2loops} simultaneously to the first belt in which the norm $D$ is exceeded, 
and to the very last such belt,
we get $\zero$-runs
\[
\tau_0 \, \alpha_1^i \, \tau_1 \, \alpha_2^i \, \tau_2
		\alpha_3^i \, \tau_3 \, \alpha_4^i \, \tau_4,
\]
for $i\in\N$, where cycles $\alpha_1, \alpha_2$ belong to the first belt and
cycles $\alpha_3, \alpha_4$ belong to the last one.
The lengths of the cycles can be reduced to at most $C(C+1)$ by removing repetitions of configurations.
Then the length of the very first factor $\tau_0$ can be bounded by $(D+1)^2 + C(C+1)$
by replacing, if needed, cycles $\alpha_1, \alpha_2$ with the first cycle of effect $\zero$ in $\tau_0$.
Likewise for the very last factor $\tau_4$.
\end{proof}

\begin{proof}[Proof of \cref{thm:short-run}]
Immeediate using \cref{claim:2loopsagain}, according to which every $A$-thin $\zero$-run exceeding 
norm $D$ can be shortened.
Once all configurations along a run have norm bounded by $D$, by eliminating repetitions of configurations
we arrive at a run of length at most $n \cdot (D+1)^2$.
\end{proof}

\end{document}